\icmltitlerunning{Collaborative Mean Estimation Among Heterogeneous Strategic Agents}
\begin{document}

\twocolumn[
\icmltitle{Collaborative Mean Estimation Among Heterogeneous Strategic Agents: Individual Rationality, Fairness, and Truthful Contribution}

\begin{icmlauthorlist}
\icmlauthor{Alex Clinton}{uw}
\icmlauthor{Yiding Chen}{cu}
\icmlauthor{Xiaojin Zhu}{uw}
\icmlauthor{Kirthevasan Kandasamy}{uw}
\end{icmlauthorlist}

\icmlaffiliation{uw}{Department of Computer Sciences, University of Wisconsin-Madison, Madison, WI, USA}
\icmlaffiliation{cu}{Department of Computer Science, Cornell University, Ithaca, New York, USA}

\icmlcorrespondingauthor{Alex Clinton}{aclinton@wisc.edu}

\icmlkeywords{Machine Learning, ICML}

\vskip 0.2in
]

\printAffiliationsAndNotice{}

\begin{abstract}
We study a collaborative learning problem where $m$ agents aim to estimate a vector $\mu =(\mu_1,\dots,\mu_d)\in \mathbb{R}^d$ by sampling from associated univariate normal distributions $\{\mathcal{N}(\muk, \sigma^2)\}_{k\in[d]}$. Agent $i$ incurs a cost $\costik$ to sample from $\mathcal{N}(\muk, \sigma^2)$.
Instead of working independently, agents can exchange data, collecting cheaper samples and sharing them in return for costly data, thereby reducing both costs and estimation error. We design a mechanism to facilitate such collaboration,
while addressing two key challenges: ensuring \emph{individually rational (IR) and fair outcomes} so all agents benefit, and \emph{preventing strategic behavior} (\eg non-collection, data fabrication) to avoid socially undesirable outcomes.
We design a mechanism and an associated Nash equilibrium (NE) which minimizes the social penalty--sum of agents' estimation errors and collection costs--while being IR for all agents.
 We achieve a $\bigO(\sqrt{m})$--approximation to the minimum social penalty in the worst case and an $\mathcal{O}(1)$--approximation under favorable conditions. Additionally, we establish three hardness results: 
 no nontrivial mechanism guarantees \emph{(i)} a dominant strategy equilibrium where agents report truthfully, \emph{(ii)} is IR for every strategy profile of other agents, \emph{(iii)} or avoids a worst-case $\Omega(\sqrt{m})$ price of stability in any NE.
 Finally, by integrating concepts from axiomatic bargaining, we demonstrate that our mechanism supports fairer outcomes than one which minimizes social penalty.
\vspace{-0.15in}
\end{abstract}

\section{Introduction}
\label{sec:intro}

With the rise of machine learning, data has become a crucial resource for organizations. However, collecting data often involves significant costs and resources, such as experiments, simulations, surveys, or market research. 
If agents \emph{share data}, it can be mutually beneficial, particularly when they have complementary data collection capabilities, as an agent can exchange cheap or easily accessible data for costly or hard-to-obtain data.
For example, a hospital may be limited in the type of data it collects, due to local demographic and disease patterns, but values diverse data from other hospitals to improve
treatments for all patients.
Today, data sharing platforms exist in many sectors, including business~\cite{googleads,awshub,azuredatashare,deltasharing}, science~\cite{pubchem,cancer2013cancer,citrine,pubmed}, and public institutions~\cite{madisonopendata,sfopendata,nycopendata}.

\parahead{Incentive-related challenges}
However, data sharing raises two fundamental incentive-related challenges.
\emph{(i)} First, we must \emph{divide the data collection work and distribute the collected data in an individually rational (no agent loses by participating) and fair way}.
For example, while it would be most efficient to have agents with the cheapest costs collect all of the data, this would not be fair to such agents and even discourage them from participating.
\emph{(ii)} Second, we must \emph{guard against strategic behavior}, ensuring agents cannot manipulate mechanisms for personal gain at others' expense. For example, naive protocols like pooling and sharing all agents' data can lead to free-riding, where an agent may contribute little data if others are expected to contribute more.
Moreover, simple checks to prevent free-riding, such as counting the number of data submitted, can be easily bypassed by submitting fabricated (fake) data.

These considerations make defining 
individual rationality (IR) and fairness nontrivial. While some 
prior work 
address free-riding behaviors like non-collection and data fabrication, they assume homogeneous agents with identical collection costs.
Moreover, the interplay between IR/fairness and strategic behavior presents novel challenges.
In this work, we investigate these challenges in Gaussian mean estimation problem.

\subsection{Summary of Contributions}

\parahead{Model}
There are $m$ agents who seek to estimate an unknown vector $\mu \in \RR^\datadim$ using samples from $\datadim$ associated univariate normal distributions.  
Agent $i$ incurs a cost $\costik$ to draw one sample from the $k$\ssth distribution $\Ncal(\muk, \sigma^2)$.  
An agent's penalty (negative utility) is the sum of her estimation error and data collection costs.  
By sharing data with others, she can improve her estimate while reducing costs, 
especially by collecting data from distributions with low $\costik$ and exchanging for those with high $\costik$.
We wish to design a mechanism \emph{(without money)} to facilitate such sharing.
In such a mechanism, agents will collect data from these distributions and submit it.
The mechanism then returns, to each agent, an estimate for $\mu$, or more generally, information about $\mu$ from which she can estimate $\mu$.

\parahead{1. Problem formalism}
In~\S\ref{sec:problem}, we formalize the three desiderata for a mechanism:  
\emph{(1) individual rationality (IR):} every agent is no worse off than working on her own,  
\emph{(2) incentive-Compatibility (IC):} agents are incentivized to collect sufficient data and report it truthfully, and  
\emph{(3) efficiency:} the social penalty (sum of agent penalties) is minimized.  
A key challenge in formalizing this problem, novel to our setting, when agents have unequal costs, is that directly minimizing social penalty can violate IR as it demands the lowest-cost agents to collect most of the data, resulting in higher data collection costs for them compared to working alone.  
Therefore, we propose a baseline mechanism that achieves the smallest possible penalty among all IR mechanisms. However, this baseline is not IC. Our objective is therefore to construct a mechanism whose penalty approximates this baseline while ensuring both IR and IC.

\parahead{2. Mechanism}
In~\S\ref{sec:mechanism},
we present our mechanism.
To ensure truthful data reporting, we rely, partly, on techniques from prior work, 
which evaluate an agents' submission by comparing it to the submissions of others.
While such an evaluation is straightforward when all agents collect similar data amounts from the same distribution,
it is challenging in our setting where agents collect varying amounts of data from different distributions.
To effectively apply this technique, we first modify the amount of data collected in the above baseline.
Our modification ensures that, when possible, a sufficient number of agents
collect data from each distribution, so that a mechanism has enough data from other agents when evaluating the submission of any given agent.
However, this may require some high-cost agents to collect more data than in the baseline, increasing total costs.
Hence, the modification must be carefully designed to prevent a significant increase in social penalty.  
We prove that our mechanism has a Nash equilibrium (NE) which achieves a $\bigO(\sqrt{m})$--approximation to the baseline in the worst-case, and an $\bigO(1)$--approximation under favorable conditions.

\parahead{3. Hardness results}
In~\S\ref{sec:lowerbound}
we supplement the above result with three hardness results. The first two show, for any efficient mechanism, we cannot hope for a dominant strategy equilibrium nor a guarantee of IR at every strategy profile of other agents.
The third result characterizes the difficulty of our problem, showing that in the worst case, the social penalty in any NE of any mechanism can be as large as a factor $\bigOmega(\sqrt{m})$ away from the baseline.

\parahead{4. Fairness considerations}
While minimizing social penalty subject to IR is our primary goal, it places most of the data collection burden onto agents with the lowest costs, leading to potential unfairness. 
In~\S\ref{sec:fairness}
we show that our mechanism can accommodate fairer divisions of work, which we formalize using concepts from axiomatic bargaining. To our knowledge, this integration of cooperative game theory for fair work allocation and noncooperative game theory for enforcement has not been explored in prior work.

\subsection{Related Work}

\textbf{Collaborative learning and truthful reporting.}
~Several prior work has studied free-riding issues in data sharing and
collaborative
learning~\citep{blum2021one,karimireddy2022mechanisms,fraboni2021free,lin2019free,huang2023evaluating}, but
have assumed that agents contribute data truthfully.
\citet{dorner2024incentivizing} study collaborative mean estimation among competitors and allow agents to alter their submissions, but their strategy space is restricted to two scalars that agents can use to modify the data they report.
This is significantly simpler than our setting, where we allow an agent's submission to be as general to the extent possible, even allowing it to depend on the data she has collected.
\citep{hardt2016strategic} consider a strategic classification problem where participants may willingly incur a cost to manipulate their submission to obtain a better classification outcome.
Others have explored mechanisms to enforce truthful reporting~\citep{cai2015optimum,chen2020truthful,ghosh2014buying}, employing methods akin to ours by comparing an agent's submission with those of others, 
 but not for data sharing.
 They also do not formally study fabrication or falsification as part of the agent’s
strategy, as we do. 
Prior work has considered designing mechanisms to incentivize participation in collaborative learning~\citep{donahue2021model,capitaine2024unravelling,tu2022incentive} but do not address the issue of truthful reporting.

Our work is most closely related to~\citet{chen2023mechanism}, who studied collaborative univariate normal mean estimation with strategic agents. 
There are two (orthogonal) challenges to incentivizing truthful contributions while maintaining an efficient system. The first, addressed by~\citet{chen2023mechanism}, is designing methods to validate an agent's submission using others' data.
The second, not addressed by~\citet{chen2023mechanism} and the focus of this paper, is ensuring that there is enough data from all agents so that each agent's submission can be sufficiently validated against the others, without compromising on efficiency (social penalty).
\citet{chen2023mechanism} avoid the second issue by assuming a \emph{single distribution} and \emph{equal costs} across agents, allowing equal data contribution and ensuring sufficient data from all agents. However, this breaks down with heterogeneous collection costs.
While we adopt a similar strategy to solve the first challenge, doing so requires overcoming technical hurdles.
Consequently, our problem formulation in~\S\ref{sec:problem}, mechanism and analysis in~\S\ref{sec:mechanism}, and hardness results in~\S\ref{sec:lowerbound} are novel and fundamental to the heterogeneous case.

\textbf{Cooperative game theory in data sharing.}
Some apply coalitional game concepts, like Shapley value, to value data contributions of many agents~\citep{sim2020collaborative,jia2019towards,xu2021gradient}. But, they do not study strategic agents attempting to manipulate a mechanism.

\textbf{Data collection with money.}
~Prior work has studied data collection protocols and data marketplaces
where agents are incentivized to collect data via payments~\citep{agarwal2019marketplace,agarwal2024towards,wang2020principled,cai2015optimum,kong2020information,zhang2014free}.
However, our focus is on mechanisms for data sharing \emph{without money}.
\section{Description of the Environment}
\label{sec:setup}

We will begin with a description of the environment.
For any $v\in\RR^p$, let $\|v\|_2$  denote the $\elltwo$-norm.
For any set $S$, let $\Delta(S)$ denote all probability distributions over $S$.

\parahead{Preliminaries}
The universe chooses a vector $\mu\in\RR^\datadim$.
There are $m$ agents who wish to estimate $\mu$.
They do not possess any auxiliary information, such as a prior, about $\mu$,
but can collect samples
from associated univariate normal distributions $\{\Ncal(\muk, \sigma^2)\}_{k\in [\datadim]}$.
Agent $i\in [m]$ 
 incurs cost $\costik\in (0, \infty]$ to draw a single sample from the
$k$\ssth distribution $\Ncal(\muk, \sigma^2)$.
If $\costik=\infty$, it means that the agent cannot sample from the $k$\ssth
distribution\footnote{
Our model generalizes other models, where agents can either sample from a distribution or not (e.g.,~\citep{huang2023evaluating}), by setting $\costik = 1$ or $\costik = \infty$, respectively.
}.
Let $\coststruct = \{\costik\}_{i\in[m], k\in[d]}$ denote all costs.
We will assume that for each distribution there is at least one agent who can sample from it with finite cost.
Let $\allcosts$, defined below, be the set of such costs,
\begin{align*}
\allcosts = \big\{%
&
\coststruct = \{\costik\}_{i\in[m], k\in[d]} \, \in \, (0, \infty]^{m \times \datadim} :
\\
&
\forall\, k\in[\datadim],~ \exists\, i\in[m] \text{ such that }
\costik < \infty
\big\} .
\numberthis
\label{eqn:allcosts}
\end{align*}
The variance $\sigma^2$ and costs $\coststruct$ are publicly known. While our results generalize to different variances across agents or distributions, we assume a common variance $\sigma$ for simplicity.

\parahead{Mechanism}
We now formally define a mechanism. The example in~\S\ref{sec:efficiencybaseline} will illustrate the ideas. Intuitively, a mechanism accepts data from agents and returns information they can use to estimate $\mu$.
We first define ``information''.

\subparahead{Information space}
An information space $\allocspace$ is a set chosen by the mechanism designer.
The information the mechanism returns to each agent is simply an element of $\allocspace$.
The simplest information space 
is $\allocspace=\RR^d$, where the mechanism returns an estimate for $\mu$.
While our mechanism also uses $\allocspace = \mathbb{R}^d$, this general definition is necessary for our hardness results.

\subparahead{Mechanism}
A mechanism is a tuple $(\allocspace, \mechfunc)$, where $\allocspace$ is a chosen information space,
and $\mechfunc$ maps the datasets received from the agents, to some information, i.e. an element in $\allocspace$, for each
agent.
Let $\dataspace = \big(\bigcup_{n \geq 0} \RR^n\big)^d$ be the space of datasets
that can be submitted by a single agent---an agent may submit a different number
of points from each distribution.
Then, the space of mechanisms is 
$
\mechspace = \big\{
M=(\allocspace, \,b):\;\; 
b: \dataspace^m \rightarrow \allocspace^m
\big\}
$.

\newcommand{\strathl}[1]{\textcolor{blue}{#1}}

\parahead{Agent's strategy space}
A \emph{pure strategy} of an agent $i$
consists of three components:
\emph{(i)}
First, she will choose $\nni = \{\nnik\}_{k}$, where $\nnik$ indicates
\emph{how many
samples} she will collect from distribution $k\in [d]$.
She then collects her initial dataset
$\initdatai = \{\initdataik\}_{k}$,
where $\initdataik = \{\xikn\}_{n=1}^{\nik}$ refers to the $\nik$ points collected
from distribution $k$.
\emph{(ii)}
She then submits $\subfunci(\initdatai) = \subdatai= \{\subdataik\}_{k}$, to the
mechanism; the \emph{submission function}  $\subfunci$ maps the dataset collected
by the agent to her submission. 
Here, $\subdataik = \{\yikn\}_{n}$ is ideally $\initdataik$, but a dishonest
agent may choose to report untruthfully via fabrication, altering, or withholding data.
Crucially, the agent's report can depend on the actual dataset $\initdatai$ she has
collected, and when computing $\subdataik$, she may use data from other distributions
$\ell \neq k$.
\emph{(iii)}
Once the mechanism returns the information $\alloci$, the agent will estimate $\mu$ using
all the information she has, which is
her initial dataset $\initdatai$, her submission $\subdatai$, and her information
$\alloci$ returned by the mechanism, via an \emph{estimator} $\estimi(\initdatai, \subdata, \alloci)$.
A \emph{pure strategy} of an agent $i$ is therefore the 3-tuple $(\nni, \subfunci, \estimi)
\in\Scal$, where the strategy space $\Scal=\NN^\datadim \times \subfuncspace \times \estimspace$ and 
\begin{align*}
\hspace{-0.3cm}
\subfuncspace = \big\{\subfunc: \dataspace \rightarrow \dataspace\big\},
\hspace{0.25cm}
\estspace = \big\{\estim: \dataspace \times \dataspace
\times \allocspace \;  \rightarrow \RR^\datadim\big\}.
\numberthis
\label{eqn:subestspaces}
\end{align*}

A \emph{mixed (randomized) strategy} 
$\si\in\Delta(\Scal)$ is a distribution over
$\Scal$. 
An agent using a mixed strategy samples a pure strategy $(\nni, \subfunci, \estimi) \sim \si$ via an external random seed independent of data randomness, and executes it.  
Going forward, $s= \{s_i\}_{i}$ denotes the strategies of all
agents and
 $\smi = \{\sj\}_{j\neq i}$ denotes the strategies of all agents except $i$.

\subparahead{Truthful submissions and accepting an estimate}
Let $\identity \in \subfuncspace$ denote the identity submission function, where an agent truthfully submits their collected data, i.e. $\identity(\initdatai) = \initdatai$. We wish to incentivize each agent to use $\subfunci=\identity$ so that others can benefit from the agent's submission.
Next, in this work, we focus on the information set space $\allocspace = \RR^\datadim$, where the mechanism returns an estimate for $\mu$ based on all agents' data. Let $\estimaccept$ be the estimator which directly accepts the mechanism’s estimate, i.e. $\estimaccept(\cdot, \cdot, \alloci) = \alloci$.
As we will see, naive mechanisms allow strategic agents to gain by misreporting data ($\subfunci \neq \identity$) and/or altering the mechanism’s estimate instead of accepting it ($\estimi \neq \estimaccept$).

\parahead{Agent's penalty}
An agent's penalty (negative utility) $\penali$ is the sum of her
$L_2$--risk and
data collection costs.
As this depends on the mechanism $M$ and 
strategies $s$, we write
\vspace{-0.13cm}
{\small
\begin{align*}
&\penali(M, s) 
\hspace{-.04cm}
\defeq 
\hspace{-.08cm}
\numberthis
\label{eqn:penalty}
\sup_{\mu\in\RR^\datadim} \EE\left[\|\estimi(\initdatai,\subdatai,\alloci) -
\mu\|_2^2
+ \sum_{k=1}^d \costik \nik \right] .
\end{align*}
}
\vspace{-0.27in}

Here, the expectation accounts for randomness in  the data, the mechanism, and agents' mixed strategies. We take supremum over all $\mu$ since $\mu$ is unknown, and strategies should achieve a small penalty for any $\mu$. For instance, if $\mu = \mu'$, for a fixed $\mu'$, the optimal strategy for an agent in any mechanism is to forgo data collection and choose an estimator which always outputs $\mu'$, i.e. $\estimi(\cdot,\cdot,\cdot) = \mu'$, incurring zero penalty. However, this assumes prior knowledge of $\mu$. The supremum ensures the problem is well-defined, and is similar to maximum risk in frequentist statistics~\citep{wald1939contributions}.

\parahead{The social penalty}
The social penalty $\socpenal(M, s)$
under a mechanism $M$ and a set of strategies $s$ is the sum of agent penalties, i.e. $\socpenal(M, s) = \sum_{i=1}^n\penali(M, s)$. 
\section{Problem Definition}
\label{sec:problem}

\textbf{Goal.}
Our goal is to design a mechanism--strategy pair $(\mech, \sopt) \in \Mcal \times \Scal$ that incentivizes agents to follow $\sopt$ (incentive compatibility, IC), ensures agents are better off than acting independently (individual rationality, IR), and minimizes the social penalty $\socpenal(\mech, \sopt)$ (efficiency). In \S~\ref{sec:irbaselines} and \S~\ref{sec:efficiencybaseline}, we establish baselines for IR and efficiency, and in \S~\ref{sec:desiderata}, we formalize these three desiderata.

\subsection{A Baseline for Individual Rationality}
\label{sec:irbaselines}

\parahead{Agent working individually}
To establish a baseline for IR,
consider an agent acting on her own, who will collect some amount of data and estimate $\mu$ using this data.
Let $\strategyi$ be a (mixed) strategy which chooses the number of
data $\nni\in\NN^d$ an agent will collect, and an estimator $\estimIi:\alldata\rightarrow\RR^d$ for $\mu$ which uses her own data.
The minimum penalty $\penaliir$ an agent can achieve is,
\vspace{-0.05in}
{\small
    \begin{align*}
    \penaliir 
    &\defeq 
    \inf_{\strategyi}
    \left( \sup_{\mu\in\RR^\datadim} \EE\left[ \|\estimIi(\initdatai) -
    \mu\|_2^2  
    + \sum_{k=1}^\datadim \costik \nik  \right]
    \right) .
    \end{align*}
    }%
    \vspace{-0.1in}
    
Here, the expectation is with respect to the data and any randomness in the agent's strategy.
The following fact establishes an expression for $\penaliir$; the proof is given in~\S\ref{sec:penaliir_fact_proof}.

\vspace{0.02in}
\begin{fact} We have
    {\small
    \begin{align*}
    \penaliir 
    &=
    \begin{cases}
    2\sigma \displaystyle\sum_{k=1}^d \sqrt{\costik}
    \quad &\text{if $\costik < \infty \;\,\forall\, k$},
    \\
    \infty \quad &\text{otherwise.}
    \end{cases}
    \label{eqn:penaliir}
    \numberthis
    \end{align*}
    }%
    \label{fact:penaliir}
\end{fact}
\vspace{-0.5cm}

When designing a mechanism and strategy profile $(M, \sopt)$,
individual rationality will require that
$\penali(M, \sopt) \leq \penaliir$.

\subsection{A Baseline for Efficiency: Minimizing Social Penalty While Satisfying Individual Rationality}
\label{sec:efficiencybaseline}

Our goal in~\S\ref{sec:efficiencybaseline} is to establish a baseline for minimizing social penalty
(though, as we will see, this baseline will not be IC).
We begin by introducing a simple mechanism and a set of agent strategies that achieve low social penalty.

\parahead{The sample mean mechanism}
A simple example of a mechanism is one which estimates $\mu$ via the sample means, i.e. it
pools the data from all agents from each distribution $\Ncal(\muk, \sigma^2)$, takes the average to estimate each $\mu_k$, and then returns this estimate to all agents.
For this mechanism, the information space is simply $\allocspace=\RR^d$. Let $\mechsm$ denote this mechanism.
Suppose, each agent $i$ reports $\subdatai = \{\subdataik\}_{k}$,
where $\subdataik = \{\yikn\}_{n}$ is agent $i$'s submission for distribution $k$.
Let $\subdataallk = \cup_{i\in[m]}\subdataik$ be all the data for
distribution $k\in[d]$ received from all agents.
The mechanism computes an estimate $\muhat\in\RR^d$ for $\mu$, where
$\muhatk = \frac{1}{|\subdataallk|}\sum_{y\in\subdataallk} y$ is the sample mean of all data received for distribution $k$.
Finally, it returns this estimate to all agents,
so $b(\{\subdatai\}_{i\in[m]}) = \{\alloci\}_{i\in[m]}$, where
$\alloci=\muhat$ for all $i\in[m]$.
While this mechanism is simple, it achieves the smallest possible penalty for given data collection
amounts when all agents submit their data truthfully ($\subfunci=\identity$)
and accept the estimate ($\estimi=\estimaccept$),
as demonstrated in the fact below.
Its proof is given in~\S\ref{sec:samplemean_fact_proof}.

\vspace{0.02in}
\begin{fact}
\label{fact:samplemean}
    Fix the data collection amounts of each agent $n=\{n_i\}_{i\in[m]}$.
    Then, among all mechanisms, and all possible submission functions and estimators agents could use, 
    the sample mean mechanism, along with truthful submission and accepting the
    estimate achieves the smallest possible social penalty.
    It can be expressed as follows,
\begingroup
\allowdisplaybreaks
\emph{
{\small
\begin{align*}
&\inf_{M, \; (\subfuncj, \estimj)_{j}}
\socpenal(M, 
(\nnj, \subfuncj, \estimj)_{j})
= 
\socpenal(\mechsm, (\nnj, \identity, \estimaccept)_{j})
\\
&\hspace{2.1cm}
=
\sum_{k=1}^\datadim
\rbr{
    \frac{m \sigma^2}{\sum_{i=1}^m\nnik} + \sum_{i=1}^m \costik\nnik
}.
\numberthis
\label{eqn:mechsmisoptimal}
\end{align*} }
}
\endgroup
\end{fact}

\begin{figure*}[ht]
  \centering
  \vspace{-0.1cm}
  \includegraphics[scale=0.42]{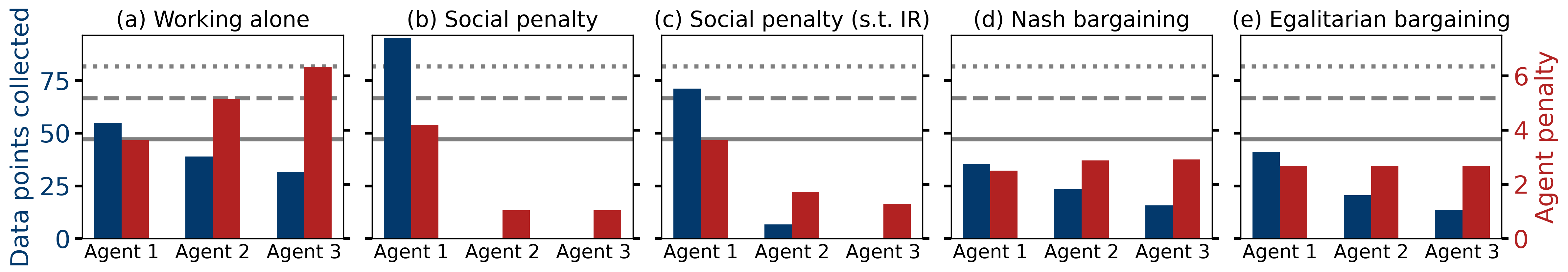} 
  \vspace{-0.4cm}
  \caption{
  An illustration of different allocations of the data collection work  in the sample mean mechanism. Here, $\datadim=1$, $\costiikk{1}{1}=0.033$, $\costiikk{2}{1}=0.066$, $\costiikk{3}{1}=0.1$, and $\sigma=10$.
  The blue and red columns represent the number of data collected and penalties incurred, respectively. The solid, dashed, and dotted gray lines indicate the penalties incurred by agents 1, 2, and 3 when working alone.  
 Agents can significantly reduce their penalties by sharing data. However, minimizing the social penalty without IR constraints (Fig.~(b)) leaves agent 1 worse off than working alone (Fig.~(a)).  
While our primary objective is reducing social penalty, in~\S\ref{sec:fairness}, we explore axiomatic bargaining concepts to achieve fairer work allocations (Fig.~(d) and (e)).
  }
  \label{fig:barg_sols}
\end{figure*}

\parahead{Challenges in social penalty minimization}
Before we proceed, we observe that naively minimizing social penalty, as is standard mechanism design, is problematic in our setting, where agents have different data collection costs. 
We will illustrate this with  $\datadim=1$,
where all agents wish to estimate a scalar quantity $\muone$.
For the purpose of this illustration, let us assume that  $\costiikk{1}{1}  <
\dots < \costiikk{m}{1} < \infty$.
Let us also assume, for now, that agents are \emph{compliant}, i.e. will follow
any set of strategies recommended by a mechanism designer
 (even if it is not the best strategy for them).

Suppose that the mechanism designer wishes to minimize the social penalty.
As agents are compliant, by Fact~\ref{fact:samplemean}, 
for any given data collection amounts $n=\{n_i\}_{i\in[m]}$, where $n_i\in\NN$, she can use 
$\mechsm$, and recommend agents to submit the data truthfully ($\subfunci=\identity$)
and accept the estimate ($\estimi=\estimaccept$).
For convenience let $\obestratshn := \{(n_i, \identity, \estimaccept)\}_{i\in[m]}$ refer to the strategy profile where the agents follow the collection amounts $n$ and comply with the recommendations $\subfunci=\identity$ and $\estimi=\estimaccept$.
To minimize social penalty, the mechanism designer can optimize for $n\in\NN^{m\times 1}$. From Fact~\ref{fact:samplemean} this is equivalent to minimizing
\vspace{-0.10in}
{%\small
\begin{equation*}
\socpenal(\mechsm, \obestratshn) 
=\frac{m \sigma^2}{\sum_{i=1}^m\nione} + \sum_{i=1}^m \costione\nione.
\end{equation*}
\vspace{-0.10in}
}%

As agent 1 has the smallest cost, the RHS is minimized by assigning all the work to agent 1, i.e.
$\noneone=\sigma\sqrt{m/\costiikk{1}{1}}$, and $\nione=0$ for all $i\neq 1$.
However, as  illustrated in Fig.~\ref{fig:barg_sols} (b), this allocation of work benefits all others at the expense of agent 1.
In fact,
it is not even IR for agent $1$ as she will be better off just working on her own Fig.~\ref{fig:barg_sols} (a): her penalty will be
$\penali(\mechsm, \obestratsh) =  \sigma^2/\noneone + \costiikk{1}{1}\noneone =
(\sqrt{m}+1/\sqrt{m})\sigma\sqrt{\costiikk{1}{1}}>2\sigma\sqrt{\costiikk{1}{1}}=\penaloneir$.
It would be hopeless to aim for globally minimum penalty when agents are strategic, when it is not even IR when they are compliant.

\parahead{IR mechanisms and strategies}
A more attainable goal, and the one we take in this paper, is to minimize the social penalty subject to IR constraints.
For this, let us define $\irmechstrats$ to be all mechanism and strategy tuples $(M,s)$ that are individually rational for \emph{all agents}. We have,
\vspace{-0.05in}
{\small
\begin{align*}
    \irmechstrats = \big\{
        (M, s) \in \Mcal \times \Delta(\Scal)^m :
        \penali(M, s) \leq \penaliir,
\forall\,i\in[m]
    \big\} .
    \numberthis
    \label{eqn:irmechstrats}
\end{align*}
\vspace{-0.5cm}
}

\textbf{Baseline for efficiency.}
We can now present our baseline for efficiency.
To find a mechanism and strategy profile pair $(M, s)$ which minimizes the social penalty in $\irmechstrats$, 
recall from Fact~\ref{fact:samplemean}, that $(\mechsm, \obestratshn)$ minimizes  the social penalty for \emph{any} given $n\in\NN^{m\times \datadim}$.
Therefore, we should find $n\in\NN^{m\times \datadim}$ to minimize
$\socpenal(\mechsm, \obestratshn)$ subject to the IR constraints
$\penali\left(\mechsm, \obestratshn\right) \leq \penaliir,
\;\;\forall\,i\in[m]$.
Recall that $\obestratshn := \{(n, \identity, \estimaccept)\}_{i\in[m]}$ and that $\penaliir$ is from~\eqref{eqn:penaliir}.
From writing this explicitly in terms of $n$ and using Fact~\ref{fact:samplemean}, if $\nsocpenalopt$ minimizes
the optimization problem below, it follows that $(\mechsm, \obestratshnopt) = \argmin_{(M, s)\in\irmechstrats} \socpenal(M, s)$.
\vspace{-0.05in}
{\small
\begin{gather*}
\minimize_{n} 
\sum_{k=1}^\datadim
    \left(\frac{m\sigma^2}{\sum_{j=1}^m \nniikk{j}{k}} + 
    \sum_{i=1}^m\costik\nniikk{i}{k} \right)
\quad
\subto\;
\\
\sum_{k=1}^\datadim
    \left(\frac{\sigma^2}{\sum_{j=1}^m \nniikk{j}{k}} + 
    \costik\nniikk{i}{k} \right)
\leq \penaliir,
\;\;\forall\,i\in[m] .
\numberthis
\label{eqn:utilbargoned}
\end{gather*}
}
\vspace{-0.1in}

Since the optimization problem above is convex, it can be solved using common optimization libraries.
We have illustrated an example of the solution, $\nsocpenalopt$, in Fig.~\ref{fig:barg_sols} (c). 

\parahead{From compliant to strategic agents}
While $(\mechsm, \obestratshnsoc)$ is IR for all agents and minimizes the social penalty subject to IR constraints,
it is not incentive-compatible,
as some agents can collect no data to avoid costs, 
 and simply just receive the estimates from the mechanism.
 For instance, in Fig.~\ref{fig:barg_sols} (c), even if agent 2 does not collect any data, her penalty will be small as agent 1 is already collecting a sufficiently large amount of data (details provided in~\S\ref{app:bargsols}).

 Even a modification of $\mechsm$ which counts the amount of data an agents submits to verify $|\subdataik| = \nsocpenalopt_{i,k}$
 will not work as agents can simply fabricate data.
 Specifically, instead of collecting $\{\nsocpenalopt_{i,k}\}_k$ points, she  can collect 0 points, submit as many fabricated (fake) points via $\cbr{\subdataik}_k$, and appropriately discount the fabricated dataset from her estimator $\estimi=\cbr{\estimik}_k$. That is by choosing $\estimik(\emptyset, \subdatai, \alloci) =
(n_{:,k}\cdot\alloc_{i,k} - \text{sum}(\subdataik))/(n_{:,k} - |\subdataik| )$, where $n_{:,k} = \sum_{j\neq i} \nniikk{j}{k}$.
While this behavior benefits agent $i$, it hurts others who may use this data.

Hence, our goal 
(formalized in~\S\ref{sec:desiderata}), is to design  $(M, \sopt)\in\irmechstrats$ so that it is the best strategy for agents to follow
$\sopt$  in $M$, and so that $\socpenal(M, \sopt)$ approximates $\socpenal(\mechsm, \obestratsh_{\nsocpenalopt})$.

\parahead{Fairness considerations}
Before we proceed, it is worth pointing out, that while $(\mechsm, \obestratshnsoc)$ is IR for all agents, the outcomes may not necessarily be fair.
For instance,  in Fig.~\ref{fig:barg_sols} (c), the penalty of agent 1 is the same as when compared to working alone; while she is not worse off anymore, she has not befitted from collaboration, while the others have.
Hence, we may consider other baselines which produce more fair outcomes, even at higher social penalty.
In~\S\ref{sec:fairness}, we use approaches from axiomatic bargaining to define fairer baselines, and show how our mechanism in~\S\ref{sec:mechanism} can be straightforwardly extended to such situations.

\subsection{Desiderata for Mechanism Design}
\label{sec:desiderata}

Our goal is to design a mechanism and strategy profile pair $(\mech, \sopt) \in\Mcal\times\Scal$, where agents are collecting a sufficient amount of data and reporting it truthfully, so as to satisfy 
the following three requirements:
\begin{enumerate}[nosep]
\item \emph{Nash incentive-compatibility (NIC): }%
$(M, \sopt)$ is NIC if $\sopt$ is a Nash equilibrium (NE) in the mechanism $M$.
That is, $\forall i\in[m]$, $\penali(M, (\sopti, \soptmi)) \leq \penali(M, (\strategyi, \soptmi))$ for all other
$\strategyi\in\Delta(\Scal)$.
\vspace{0.03in}
\item \emph{Individual rationality (IR): }%
$(M, \sopt)$ is IR if each agent's penalty at 
$\sopt$ is no worse than the smallest penalty $\penaliir$ she could
achieve independently~\eqref{eqn:penaliir}.
That is, $\forall i\in[m]$, $\penali(M, \sopt) \leq \penaliir$.
\vspace{0.03in}
\item \emph{Efficiency:}
 We say that $(\mech, \sopt)$ is \emph{$\rho$-efficient} if it satisfies
$\socpenal(\mech, \sopt) \leq \rho \cdot \inf_{(M', s')\in \irmechstrats} \socpenal(M', s')$.
\end{enumerate}

\parahead{DSIC, IR regardless of other agent behavior}
Before we proceed, we note that
in the first two requirements, other agents are assumed to follow the recommended strategy, $\strategyii{\textup{--}i}=\soptmi$.
One could consider a dominant-strategy incentive-compatibility (DSIC) condition where following $\sopti$ is the best strategy for agent $i$ regardless of others' strategies.
Similarly, one could consider a stronger IR condition
where $\penali(M, (\sopti, s'_{-i}) \leq \penaliir$, for all
other agent strategies $s'_{-i}$.
However, in~\S\ref{sec:lowerbound}, we will establish hardness results which demonstrate that any mechanism satisfying either of these properties will be very inefficient.
\section{Mechanism and Theoretical Results}
\label{sec:mechanism}

We will now describe our mechanism, $\multiarmmech$ (Corrupt Based on Leverage), outlined in Algorithm \ref{alg:multi_arm_mech}.
As an argument, it takes a collection scheme $n\in\NN^{m\times\datadim}$. Since our goal is to minimize social penalty, throughout this section we should think of and will take $n=\nsocpenalopt$ (see~\eqref{eqn:utilbargoned}).
For the purpose of generalizing our mechanism to fairer solutions in~\S\ref{sec:fairness}, we will continue to write $n$.
Our mechanism sets $\allocspace=\RR^d$ to be the information space, as it returns an estimate of $\mu$.

\subparahead{Computing an enforceable division of work}
To ensure truthful data reporting, we adapt techniques from prior work, 
which evaluate an agents' submission by comparing it to the submissions of others (e.g~\citep{chen2023mechanism,cai2015optimum}).
This is straightforward when all agents collect similar data amounts from the same distribution, as there will be enough data from others to evaluate any given agent's submission.
However, in $\nsocpenalopt$, agents with the lowest costs for a distribution typically collect most of the data for that distribution (see Fig.~\ref{fig:barg_sols}(c)).
To address this, Algorithm~\ref{alg:multi_arm_mech} first modifies $n$ ($=\nsocpenalopt$) using the Compute-$n$-Approx subroutine, so as to enable such evaluations when possible.

Compute-$n$-Approx consists of the following high level steps. 
1) Initialize the approximation $\nnt$ to be 
$n$. 2) Continually check if there is an agent $i$ whose penalty for distribution $k$ when working alone $\rbr{\frac{\sigma^2}{\nnikir}+\ccik\nnikir}$ is already within a factor of 4 of their estimation error when all agents are compliant and follow $\nnt$,
then update $\nnt$ so that agent $i$ only collects $\nnikir$ points. 3) Record which indices of $n$
have been updated via the sets $\donatingagentsk$.
4) Update $\nnt$ to ensure that agents whose collection amount has not been modified are not collecting too small a fraction of the total data. 5) Return the approximation $\nnt$, $\rbr{\donatingagentsk}_{k=1}^\datadim$, and the total amount of data under $\nnt$ for each distribution, $\rbr{\totalk}_{k=1}^\datadim$.
If $i\not\in\donatingagentsk$, then agent $i$ is reliant on others for data from distribution $k$. 
This will turn out to be a key component of determining 
which agents we can incentivize to collect data according to
$\nniikk{i}{k}$.
If $i\in\donatingagentsk$ then it means that agent $i$ can achieve a good penalty on distribution $k$ without much help from others. 
The point of calculating $\nnt$ is to find a division of work which is enforceable (each agent is incentivized to follow it assuming the others do so).

\begin{algorithm}[ht]
\caption{Compute-$n$-Approx}\label{alg:compute-ne}
\begin{algorithmic}[1]
\STATE \textbf{Inputs:} collection scheme $n\in\NN^{m\times\datadim}$
    \STATE $\nnt\leftarrow n$
    \vspace{0.1cm}
    \COMMENT{$\nnt$ will be an approximation to $n$ for which truthful reporting is enforceable.}
    \vspace{0.1cm}
    \STATE $\donatingagentsk\leftarrow\varnothing\quad \forall k\in[d]$
    \COMMENT{Agents who will receive no new data from others for distribution $k$. }
    \vspace{0.1cm}
    \STATE \textbf{while} 
    \small{
    $\exists (i,k)\in[m]\backslash\donatingagentsk\times[\datadim]$ such that $\frac{\sigma^2}{\nnikir}+\ccik\nnikir\leq 4\rbr{\frac{\sigma^2}{\sum_{j=1}^m \nniikkt{j}{k}}+\ccik\nnikt}$} \textbf{do}
        \vspace{0.1cm}
        \STATE \hspace{0.6cm}
        $\nnikt\leftarrow\nnikir$
        \vspace{0.1cm}
        \STATE \hspace{0.6cm}
        $\donatingagentsk\leftarrow\donatingagentsk\cup\cbr{i}$
    \vspace{0.1cm}
    \STATE 
    $\totalk\leftarrow \sum_{i=1}^m \nnikt\quad \forall k\in[d]$
    \vspace{0.1cm}
    \STATE \textbf{while} 
    \small{
    $\exists (i,k)\in[m]\backslash\donatingagentsk\times[\datadim]$ such that
    $\nnikt>0$ and $\nnikt\totalk<\big(\nnikir\big)^2$} \textbf{do}
        \vspace{0.1cm}
        \STATE \hspace{0.6cm}
        $\nnikt\leftarrow\frac{\big(\nnikir\big)^2}{\totalk}$
        \vspace{0.1cm}
    \STATE Return $\nnt, \rbr{\donatingagentsk}_{k=1}^\datadim, \rbr{\totalk}_{k=1}^\datadim$
\end{algorithmic}
\end{algorithm}
\begin{algorithm*}[ht]
\caption{$\multiarmmech$}\label{alg:multi_arm_mech}
\begin{algorithmic}[1]
    \STATE \textbf{Inputs:} 
    collection scheme $n\in\NN^{m\times\datadim}$
    \COMMENT{For minimizing social penalty, take $n=\nsocpenalopt$~\eqref{eqn:utilbargoned}}
    \STATE \textbf{Mechanism designer:}
        \STATE \hspace{0.6cm}
    $\nopt, \rbr{\donatingagentsk}_{k=1}^\datadim, \rbr{\totalk}_{k=1}^\datadim \leftarrow$ Compute-$n$-Approx$(n)$
    \STATE \hspace{0.6cm} 
    Publish
    the information space $\allocspace=\RR^\datadim$,
    the mechanism (lines 9--29), and the recommended strategies
    $\sopt$ in~\eqref{eq:sopt}.
    \STATE \textbf{Each agent $i$:}
        \STATE \hspace{0.6cm}
        Choose a strategy $\strategyi\in\Delta(\actionspace)$ and use it to select
        $\rbr{\rbr{\nniikk{i}{k}}_k,\submissionfunci,\estimi}\sim\strategyi$.
        \STATE \hspace{0.6cm}
        Sample $\{\nik\}_k$ points from the distributions to collect $\initdatai=\{\initdataik\}_{k}$,
        where $\initdataik\in\RR^{\nik}$. 
        \STATE \hspace{0.6cm}
        Submit $\subdatai=\rbr{\subdataiikk{i}{1},\ldots,\subdataiikk{i}{\datadim}}=\submissionfunci(\initdataiikk{i}{1},\ldots,\initdataiikk{i}{d})$ to the mechanism.
    \STATE \textbf{Mechanism:} 
    \vspace{0.05cm}
    \vspace{0.05cm}
    \vspace{0.05cm}
    \STATE \hspace{0.6cm}
    $\subdatamik\leftarrow \bigcup_{j\neq i}\subdataiikk{j}{k}\quad \forall (i,k)\in[m]\times[\datadim]$
    \vspace{0.10cm}
    \STATE \hspace{0.6cm}
    \textbf{if}~~the conditions in \eqref{eq:special-instance} are satisfied ~\textbf{then}
        \vspace{-0.05cm}
        \STATE \hspace{1.2cm}
        \textbf{for} $(i,k)\in[m]\times[\datadim]$ \textbf{do}
            \STATE\hspace{1.8cm}
            \textbf{if} $i\in\donatingagentsk$ \textbf{then}
            \STATE \hspace{2.4cm}
            $\allocik\leftarrow \frac{1}{|\subdataik|}\sum_{y\in\subdataik}y$
            \COMMENT{Use sample mean of only the agent's data.}
            \vspace{-0.1cm}
            \STATE \hspace{1.8cm}
            \textbf{else}
            \STATE \hspace{2.4cm}
            \textbf{if} $\noptik=0$ \textbf{then}
                \STATE \hspace{3.0cm}
                $\allocik\leftarrow \frac{1}{|\subdatamik|}\sum_{y\in\subdatamik} y$
                \COMMENT{Use sample mean of others' data.}
                \vspace{-0.1cm} 
                \STATE \hspace{2.4cm}
                \textbf{else}
                    \label{algline:chenstart}
                    \STATE \hspace{3.0cm}
                    $\subdatamikp\leftarrow$ sample $\totalk-\noptik$ points from $\subdatamik$ without replacement
                    \STATE \hspace{3.0cm}
                    $\dataik\leftarrow$ sample $\min\big(\big|\subdatamikp\big|,\noptik\big)$ points from $\subdatamikp$ without replacement
                    \vspace{0.05cm}
                    \STATE \hspace{3.0cm}
                    \label{algline:corruptionnoise} $\etaiksq\leftarrow\alphaiksq\rbr{\frac{1}{|\subdataik|}\sum_{y\in\subdataik}y-\frac{1}{|\dataik|}\sum_{z\in\dataik}z}^2$
                    \COMMENT{See \eqref{eq:gik-def} for $\alphaik$.}
                    \STATE \hspace{3.0cm}
                    $\datacorrik\leftarrow\cbr{z+\varepsilon_{z}:z\in\subdatamikp\backslash\dataik,~~\varepsilon_{z}\sim\Ncal(0,\etaiksq) }$
                    \COMMENT{Lines 20--23 adapted from~\citep{chen2023mechanism}}
                    \vspace{0.1cm}
                    \label{algline:chenend}
                    \STATE \hspace{3.0cm}
                    $\allocik\leftarrow\frac{\frac{1}{\sigma^2 |\subdataik\cup\dataik|}\sum_{y\in\subdataik\cup\dataik}y~+~\frac{1}{(\sigma^2+\etaiksq)|\datacorrik|}\sum_{z\in\datacorrik}z}{\frac{1}{\sigma^2}\abr{\subdataik\cup\dataik}+\frac{1}{\sigma^2+\etaiksq}\abr{\datacorrik}}$
                    \label{algling:allocline}
    \STATE \hspace{0.6cm}
    \textbf{else}
        \STATE \hspace{1.2cm}
        \textbf{for} $(i,k)\in[m]\times[\datadim]$ \textbf{do}
                \STATE \hspace{1.8cm}
                \textbf{if} $i\in\argmin_j\cciikk{j}{k}$ \textbf{then}
                    \STATE \hspace{2.4cm}
                    $\allocik\leftarrow\frac{1}{|\subdataik|}\sum_{y\in\subdataik}y$
                    \COMMENT{Use sample mean of only the agent's data.} 
                \STATE \hspace{1.8cm}
                \textbf{else}
                    \STATE \hspace{2.4cm}
                    $\allocik\leftarrow\frac{1}{|\subdatamik|}\sum_{y\in\subdatamik}y$
                    \COMMENT{Use sample mean of others' data.}
    \STATE 
    \textbf{Each agent $i$:}
    \vspace{-0.1cm}
    \STATE \hspace{0.6cm} Post-process the estimates via the estimator function $\estimi\rbr{\initdatai,\subdatai,\alloci}$.
\end{algorithmic}
\end{algorithm*}

\subparahead{Algorithm~\ref{alg:multi_arm_mech} walk through}
After computing the enforceable approximation $\nnt$,
the mechanism designer publishes the mechanism and the recommended strategies $\sopt$ (which we will define shortly). Each agent then chooses their strategy (not necessarily following $\sopt$) before collecting and submitting their data to the mechanism.
After receiving agent submissions, Algorithm~\ref{alg:multi_arm_mech} determines
if based on the input, the following favorable condition holds for $(\coststruct, n)$:
\begin{equation*}
    \forall i, k
    \quad\frac{\sigma^2}{\nnikir}+\ccik\nnikir\geq\frac{\sigma^2}{\sum_{j=1}^m \nniikk{j}{k}}+\ccik \nniikk{i}{k}.
    \numberthis\label{eq:special-instance}
\end{equation*}
This condition checks if, for each distribution, only sharing data for that distribution according to $n$ is better for an agent than working on her own, when all agents are compliant.  
This condition is favorable
because, when~\eqref{eq:special-instance} is satisfied, we have enough data to validate the data of every agent that relies on the mechanism to achieve a good penalty, and hence we will be able to obtain a good bound on the efficiency.

If \eqref{eq:special-instance} does not hold, for each distribution, we have the agent with the lowest collection cost, collect their individually rational amount of data for that distribution. The sample mean of this data, for each distribution, is then returned to all agents.

When 
~\eqref{eq:special-instance} holds,
Algorithm \ref{alg:multi_arm_mech} checks two conditions.
First if $i\in\donatingagentsk$ then agent $i$ receives the sample mean of only their data for distribution $k$ (which is fine as $i\in\donatingagentsk$ implies that agent $i$ achieves a good penalty for distribution $k$ working alone). Second, if $\noptik=0$ then there is no data to validate so $i$ receives the sample mean using the data other agents submitted for distribution $k$.
If neither of these conditions hold, the algorithm 
begins running a corruption process similar to~\citet{chen2023mechanism} and returns the weighted average of the clean data agent $i$ submitted and corrupted data (created from the data of the other agents) for $\allocik$. 
The corruption coefficients used in this process, $\alphaik$, are defined 
in~\S\ref{app:gdef}.
In our proof, we show that this choice for $\alphaik$ ensures this penalty is minimized at 
the value of $\nnikt$ returned by Algorithm~\ref{alg:compute-ne}.

\parahead{Recommended strategies} 
We now define the recommended strategies $\sopt$. Let $(\nnikt)_{k=1}^\datadim$ denote the values returned by Algorithm~\ref{alg:compute-ne} when Compute-$n$-Approx$(n)$ is executed in Algorithm~\ref{alg:multi_arm_mech}. In the following definition, the superscript L indicates leverage, as condition~\eqref{eq:special-instance} implies that the mechanism has enough data from the other agents (when submitting truthfully) to incentivize sufficient data collection from the remaining agent. On the other hand, NL denotes no leverage. Define $\sopti \defeq (\nopti, \subfuncopti, \estimopti)$ where $\subfuncopti = \identity$, $\estimopti = \estimaccept$, and
\begingroup
\allowdisplaybreaks
\begin{align*}  
    \hspace{-0.2in}
    &\nopti=
    \begin{cases}
        \big(\noptikl\big)_{k=1}^\datadim & \text{if $\rbr{\coststruct,n}$ satisfies ~\eqref{eq:special-instance}}
        \vspace{0.2cm}
        \\
        \big(\noptiknl\big)_{k=1}^\datadim & \text{otherwise}
    \end{cases},
    \quad \text{where,} 
    \numberthis\label{eq:sopt}
    \\
    &\noptikl\defeq
    \nnikt 
    \quad\hspace{0.2cm} \noptiknl\defeq
    \begin{cases}
        \nnikir & \text{if $i\in\argmin_j\cciikk{j}{k}$}
        \\
        0 & \text{otherwise} 
    \end{cases} .
\end{align*}
\endgroup

We now state the first theoretical result of this paper, which gives the properties of Algorithm~\ref{alg:multi_arm_mech} and the recommended strategies $\sopt$ in~\eqref{eq:sopt}.
In~\S\ref{sec:lowerbound}, we show that the worst-case $\bigOmega(\sqrt{m})$ bound on the price of stability is unavoidable.

\begin{theorem}
\label{thm:main}
Let $\coststruct\in\allcosts$ and Algorithm~\ref{alg:multi_arm_mech} be executed with
 and $n=\nsocpenalopt$ (see~\eqref{eqn:allcosts},~\eqref{eqn:utilbargoned}).
Then, $(\multiarmmech, \sopt)$ is NIC, IR, and
$\sqrt{m}$--efficient.
Moreover, if $\rbr{\coststruct,\nsocpenalopt}$ satisfies~\eqref{eq:special-instance}, then $\rbr{\multiarmmech,\sopt}$ is 8--efficient.
\end{theorem}

\parahead{On condition~\eqref{eq:special-instance}}
Condition~\eqref{eq:special-instance} can be interpreted as follows: 
it is true if, for each distribution, we separately allowed agents to share data for that distribution according to $n    ~(=\nsocpenalopt)$, would agents be better than working on their own.
For example, this is easily satisfied if all agents have the same or similar costs for each distribution.
More generally, any problem instance where agents 1) collect no more data for each distribution than they would individually and 2) receive at least as much data for each distribution as they would collect individually, also satisfies~\eqref{eq:special-instance}.

\parahead{Proof challenges}
Beyond the design challenges highlighted above,
we mention some of the technical challenges in our proof.
First, we show that~\eqref{eq:special-instance} and $i\not\in\donatingagentsk$ form a sufficient condition to guarantee the existence of $\alphaik$ (as defined previously), and then lower bound $\alphaik$ by $\sqrt{\noptik}$. This ensures that for each agent it is exactly optimal to collect $\noptik$ points and contribute them truthfully. 
Next, we need to show that  collecting $\noptik$ points is also efficient for the agents.
In particular, penalizing agents too severely via large $\alphaik$ may be necessary to incentivize truthful contribution
but result in poor efficiency.
\newcommand{\irmechstratspace}{\Theta}

\section{Hardness Results}
\label{sec:lowerbound}

Theorem~\ref{thm:main} shows that $\sopti$ is a NE and that at this NE, each agent is better off than working on her own.
As discussed in~\S\ref{sec:desiderata}, we may consider stronger IR and IC
conditions, which hold regardless of other agents' strategies.
In particular, is there an \emph{efficient} mechanism-strategy pair $(M, s)$ which is either
(1) \emph{always IR}:
$\forall i\in[m],~\forall 
\strategyii{-i}'\in\strategyspace^{m-1}$, ~
$\penali\rbr{
    \mech,\rbr{
        \strategyi, \strategyii{-i}'
    }
}\leq
\penaliir
$,
or (2)
\emph{a dominant strategy profile}: $\forall i\in[m],~\forall\strategyi'\in\strategyspace,~\forall 
\strategyii{-i}'\in\strategyspace^{m-1}$, ~
$\penali\rbr{
    \mech,\rbr{
        \strategyi, \strategyii{-i}'
    }
}\leq
\penali\rbr{
    \mech,\rbr{
        \strategyi', \strategyii{-i}'
    }
}
$.
The following two theorems answers these in the negative, showing that \emph{every} agent will incur large penalty at \emph{every} strategy profile.

 \begin{theorem}
    \label{thm:irresult}
    For any $\cost\in\allcosts$, if $\mech$ is a mechanism under which $s$ is always IR then $\forall i\in[m],~ \forall \strategyii{-i}'\in\strategyspace^{m-1}$,~
    $\penali\rbr{
        \mech,\rbr{
            \strategyi, \strategyii{-i}'
        }
    }\geq
    \frac{\penaliir}{2}
    $.
\end{theorem}
\begin{theorem}
    \label{thm:discresult}
    For any $\cost\in\allcosts$, if $\mech$ is a mechanism under which $s$ is a dominant strategy profile then $\forall i\in[m],~ \forall \strategyii{-i}'\in\strategyspace^{m-1}$,~
    $\penali\rbr{
        \mech,\rbr{
            \strategyi, \strategyii{-i}'
        }
    }\geq
    \frac{\penaliir}{2}
    $.
\end{theorem}
Hence, no agent performs better than a factor $1/2$ compared to working alone, implying agent penalties remains large and do not decrease with the number of agents $m$. In contrast,
agent penalties under $(\mechsm,\nsocpenalopt)$
and our mechanism typically decreases with $m$. 
To illustrate this, suppose $d=1$ and $\cost_{1,i} = c'$. The minimum social penalty is $2\sigma\sqrt{c'm}$, with agents incurring a penalty of $\frac{2\sigma\sqrt{c'}}{\sqrt{m}}$. For strategic agents, as~\eqref{eq:special-instance} is satisfied, our mechanism achieves a social penalty of $16\sigma\sqrt{c'm}$ and an agent penalty of $\frac{16\sigma\sqrt{c'}}{\sqrt{m}}$. In contrast, the lower bound from Theorems~\ref{thm:irresult} and~\ref{thm:discresult} gives a social penalty of at least $2\sigma m \sqrt{c'}$ and an agent penalty of $2\sigma \sqrt{c'}$, both of which are $\bigO(\sqrt{m})$ larger than those in our mechanism.

Our fourth theorem establishes that an $\Omega(\sqrt{m})$ price of stability is unavoidable in the worst case. While Theorems~\ref{thm:irresult} and~\ref{thm:discresult} also apply to homogeneous settings, this result is fundamental to the heterogeneous setting with multiple distributions and varying agent data collection costs.
\begin{theorem}
\label{thm:lowerbound}
There exists a set of costs $c \in \allcosts$
such that the following is true.
Let $\irmechstrats$ be as defined in~\eqref{eqn:irmechstrats} for these costs $c$.
For any mechanism $\mech\in\mechspace$ and any Nash equilibrium
$\sopt\in\Delta(\stratspace)$ of this mechanism, we have
\emph{%
        \begin{equation*}
        \socpenal(\mech, \sopt)
                \;\geq\; \Omega(\sqrt{m})  \cdot {\inf_{(M', s') \in \irmechstrats} \socpenal(M', s')} .
    \end{equation*}
}%
\end{theorem}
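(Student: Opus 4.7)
}
The plan is to exhibit a cost structure for which we can separately upper-bound the optimal social penalty in $\irmechstrats$ and lower-bound the social penalty at any Nash equilibrium, producing a gap of $\Omega(\sqrt m)$. Take $\datadim=m$ and $\costik = c$ when $k=i$ and $\costik=\infty$ otherwise, so that agent $i$ is the unique agent who can sample from distribution $i$. Since $\penaliir=\infty$ for every $i$, individual rationality is vacuous and $\irmechstrats$ coincides with all mechanism--strategy pairs with finite social penalty. Running the sample mean mechanism $\mechsm$ with the compliant profile at $n_{k,k}=\sigma\sqrt{m/c}$ yields penalty $2\sigma\sqrt{mc}$ for every agent, so $\inf_{(M',s')\in\irmechstrats}\socpenal(M',s') \le 2m^{3/2}\sigma\sqrt c$.

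For the NE lower bound, I would combine a minimax estimation bound with a strategic cap on collection. For each $k$, the only source of information about $\muk$ anywhere in the system is agent $k$'s real dataset of size $n_{k,k}$; every other agent $j$'s estimator $\hat\mu_k^{(j)}$ factors, through her received allocation and the submitted data, only through agent $k$'s collected samples. By the data processing inequality followed by Le~Cam's two-point minimax bound for Gaussian mean estimation, there is a universal constant $c_0>0$ with $\sup_\mu \EE[(\hat\mu_k^{(j)}-\muk)^2] \ge c_0\, \sigma^2 / \EE[n_{k,k}]$ for every $j \in [m]$. Summing over $j$ and $k$ gives $\socpenal(\mech,\sopt) \ge c_0\, m\sigma^2 \sum_{k=1}^m 1/\EE[n_{k,k}]$, so a lower bound of order $m^2 \sigma\sqrt c$ follows as soon as we show that $\EE[n_{k,k}] = O(\sigma/\sqrt c)$ at every NE.

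This strategic cap is the crux and the main obstacle. I would prove it via a deviation in which agent $k$ collects only $\bar n = \sigma/\sqrt c$ real samples from distribution $k$, forms her sample mean $\hat\mu_k$, fabricates the remainder of the submission her NE strategy would have produced by drawing the missing points from $\Ncal(\hat\mu_k, \sigma^2)$, and estimates $\muk$ from her real samples alone. Since her dim-$k$ submission is statistically independent of every $\mu_j$ with $j\ne k$, a second application of data processing suggests that the mechanism's dim-$j$ allocation to her---and hence her dim-$j$ risk---changes by at most an additive $O(\sigma\sqrt c)$ relative to NE, with the discrepancy controlled by the difference between $\Ncal(\muk,\sigma^2)$ and $\Ncal(\hat\mu_k,\sigma^2)$; the technically delicate step is to quantify this discrepancy uniformly in the mechanism, which is where the minimax estimation ideas alluded to in the introduction enter. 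The deviation saves cost $c(\EE[n_{k,k}] - \bar n)$ while inflating her dim-$k$ error by at most $\sigma^2/\bar n = \sigma\sqrt c$, so the NE best-response condition forces $c\,\EE[n_{k,k}] \le O(\sigma\sqrt c)$, i.e., $\EE[n_{k,k}] = O(\sigma/\sqrt c)$. Plugging back into the minimax bound yields $\socpenal(\mech,\sopt)\ge \Omega(m^2 \sigma\sqrt c)$, and dividing by the social-optimum bound gives the desired ratio $\Omega(m^2 / m^{3/2}) = \Omega(\sqrt m)$.
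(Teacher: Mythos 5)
Your overall architecture mirrors the paper's: a hard instance in which some agent is the unique source of data for a distribution, an upper bound on the optimal social penalty of order $m^{3/2}\sigma\sqrt{c}$, a minimax lower bound of $\sigma^2/\EE[n_{k,k}]$ on every other agent's error for that coordinate, and a ``strategic cap'' $\EE[n_{k,k}]=O(\sigma/\sqrt{c})$ obtained from a fabrication deviation. The arithmetic on both sides of the ratio is correct, and your symmetric $d=m$ construction would work just as well as the paper's asymmetric $d=2$ one \emph{if} the cap were established.

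The gap is precisely the cap, and your proposed deviation does not close it. You fabricate the missing points from $\Ncal(\hat\mu_k,\sigma^2)$ with $\hat\mu_k$ the sample mean of the $\bar n$ real points. The joint law of this partially fabricated submission is \emph{not} the law of the submission produced by the NE strategy (which, moreover, need not be i.i.d.\ Gaussian at all, since $\subfunc_k^\star$ is arbitrary): the fabricated points are exchangeably correlated through $\hat\mu_k$, so an adversarial mechanism can run a test that distinguishes the two laws with nonvanishing probability and, upon detection, return nothing useful about the other coordinates. Your appeal to ``a second application of data processing'' does not deliver an additive $O(\sigma\sqrt{c})$ bound on the change in dim-$j$ risk --- data processing controls information, not the difference in risk under a mechanism that is free to respond discontinuously to the detected deviation. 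This is exactly the difficulty the paper flags (``choosing $\widetilde\mu_1=1.729$ might fail against a mechanism that penalizes an agent who submits data close to $1.729$'') and resolves with Lemma~\ref{lem:lb_decomp}: rather than exhibiting one explicit fabrication and bounding its detectability, the paper constructs a sequence of Gaussian priors on $\mu_1$, shows via a Bayes-risk computation that the infimum over same-collection-size strategies is attained (in the limit, and constant in the prior width by Lemma~\ref{lem:lb_const_in_tau}) by a randomized strategy whose submission is drawn from the prior's marginal \emph{independently} of the real data, and then matches upper and lower bounds on $\penalone(M,\sopt)$ to conclude the equivalence. That existence argument is the real technical content of the theorem, and it is the step your proposal leaves open.
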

\section{Fairness}
\label{sec:fairness}

Thus far we have focused on designing $(M,\sopt)\in\irmechstrats$ to minimize social penalty. 
While this ensures IR, it is still unfair for agents with small data collection costs, as they will collect most, if not all the data.
For example, in Fig.~\ref{fig:barg_sols} (c), Agent 1 is doing only as well as working on her own, while others reap all the benefits.
However, one can leverage ideas from axiomatic bargaining, to choose fairer 
divisions of work.
Some common bargaining solutions include  Nash bargaining~\citep{nash1950bargaining}, egalitarian bargaining~\citep{mas1995microeconomic}, and the Kalai-Smordinsky solution~\citep{kalai1975other}.
We have reviewed these solutions in detail in~\S\ref{app:bargsols}.
Both of these solutions correspond to solving simple optimization problems.
In 
Fig.~\ref{fig:barg_sols} (d) and (e), 
we give examples of these solutions.
In both cases,
all agents benefit more fairly even though the social penalty is worse (by definition).

\parahead{Specifying a bargaining solution}
Fortunately, our mechanism can also accommodate such fair divisions of work, while still enforcing truthful reporting.
To formalize this,
recall the  sample mean mechanism $\mechsm$ (\S\ref{sec:efficiencybaseline}), the compliant strategy profile $\obestratshn$ where agents collect data according to $n\in\NN^{m\times\datadim}$~(\S\ref{sec:efficiencybaseline}),
and an agent's penalty when working alone $\penaliir$ (Fact~\ref{fact:penaliir}).
Now define,
{\small
\begin{equation*}
\hspace{-0.05in}
\allbargs = \left\{n \in \NN^{m\times \datadim} 
:
\penali(\mechsm, \obestratshn)
\;\leq\; \penaliir \;\; 
\forall i\in[m]
\right\}.
\numberthis
\label{eqn:allbargs}
\end{equation*}
}
We refer to any $\nbarg \in \allbargs$ as a  
\emph{bargaining solution}.
It specifies an individually rational division of work,
i.e. how much data each agent should collect from each distribution,
when all agents comply in the sample
mean mechanism.
In fact, $\allbargs$ is precisely the
 utility possibility set in bargaining~\citep{mas1995microeconomic}.
By definition $\nsocpenalopt\in\allbargs$ (see~\eqref{eqn:utilbargoned}), but a mechanism designer may prefer
a fairer bargaining solution.

\parahead{Bargaining-specific approximations} 
If the goal is to implement a bargaining solution $\nbarg$ other than $\nsocpenalopt$, in place of efficiency,
we should consider a new notion measuring how effectively the mechanism implements $\nbarg$ in terms of agent penalties.
We say that $(\mech, \sopt)$ 
$\rho$--approximates 
$\nbarg\in\allbargs$, if $\forall i\in[m]$, we have $\penali(\mech, \sopt) \leq \rho \cdot \penali(\mechsm, \obestratshnbarg)$,
 i.e. all agents are within a factor $\rho$ of the penalty they would incur if they collected data according to $\nbarg$ and complied  in the sample mean mechanism.
The following theorem states that when $n=\nbarg$ satisfies the same favorable condition in~\eqref{eq:special-instance}, we can guarantee efficient data sharing.
\begin{theorem}
    \label{thm:mainextended}
    Let $\coststruct\in\allcosts$ and $\nbarg\in\allbargs$ (see~\eqref{eqn:allcosts},~\eqref{eqn:allbargs}). 
    Suppose Algorithm~\ref{alg:multi_arm_mech} is executed with $n=\nbarg$.
    Then, $(\multiarmmech, \sopt)$ is NIC and IR.
    Moreover, if $\rbr{\coststruct,\nbarg}$ satisfies the condition in~\eqref{eq:special-instance}, then $\rbr{\multiarmmech,\sopt}$ is an 8--approximation to $\nbarg$.
\end{theorem}
\section{Conclusion}

We study collaborative mean estimation where strategic agents estimate a vector $\mu\in\RR^d$ by sampling from distributions with different costs. We design an IR, NIC, and efficient mechanism, supported by matching hardness results, and show that it accommodates fairer divisions of work without additional modifications.

\section*{Acknowledgements}

This work was partially supported by NSF grant 2441796.

\section*{Impact Statement}

The paper focuses on providing algorithms to enable multiple parties to benefit from data sharing. As the paper is primarily theoretical in nature we do not feel the need to highlight particular societal and ethical consequences beyond the standard considerations, concerns, and practices that accompany sharing data.

\bibliographystyle{icml2025}
\bibliography{bib_data_sharing}

\newpage
\appendix
\onecolumn

\section{Proofs of referenced facts}
\label{app:factproofs}

\subsection{Proof of Fact~\ref{fact:penaliir}}
\label{sec:penaliir_fact_proof}

Let us first consider agents who can sample 
from all distributions, i.e $\costik<\infty$ for all $k\in [\datadim]$.
Suppose this agent collects $\initdatai$,
via $\nni = \{\nik\}_k$ samples from each distribution, incurring cost $\sum_k
\costik\nik$.
It can be shown using standard arguments~\citep{lehmann2006theory}, that
for any such $\nik$,
the \emph{deterministic} sample mean $\estimsm= \{\estimsmk\}_k$,
where $\estimsmk(\initdatai) = \frac{1}{|\initdataik|}\sum_{x\in\initdataik} x$,
which simply takes the average of all points from distribution $k$,
is minimax optimal.
Moreover, for any $\mu\in\RR^d$,
 the expected squared estimation error for each $k$ is
$\EE[(\estimsmk(\initdatai) - \muk)^2] = \sigma^2/\nik$.
Consequently, 
$\EE[\|\estimsm(\initdatai) - \mu \|_2^2] = 
\sum_{k=1}^n \sigma^2/\nik$.
A rational agent will choose $\nni = \{\nik\}_k$
to minimize her penalty $\sum_{k=1}^d \sigma^2/\nik + \costik \nik$.
As this decomposes, a straightforward calculation shows
that her penalty is minimized by \emph{deterministically} choosing
$\nik=\sigma/\sqrt{\costik}$ samples
from each distribution $k$. For convenience, let $\nniir$ denote these choices for $\nnik$.
Next, consider agents  for whom $\costiikk{i}{k'} = \infty$ for some $k'\in [\datadim]$.
As this agent cannot obtain any samples from distribution $k'$ on her own, her
maximum risk in estimating $\mukk{k'}$ is infinite, and consequently 
$\sup_\mu\EE[\|\estimi(\initdatai, \emptyset, \emptyset) - \mu \|_2^2] =  \infty$
for any estimator $\estimi$.
Therefore, their penalty will also be  infinite.
Putting these together yields the claim. 

\subsection{Proof of Fact~\ref{fact:samplemean}}
\label{sec:samplemean_fact_proof}

By definition 
\begin{align*}
\socpenal(\mechsm, (\nnj, \identity, \estimaccept)_{j})
&= 
\sum_{i=1}^m
\sup_{\mu\in\RR^\datadim} \rbr{
    \EE\sbr{
        \lVert
            \estimaccept\rbr{
                \initdatai,\identity,\muhat
            }
            -\mu
        \rVert^2_2
        +
        \sum_{k=1}^\datadim
        \costik
        \nnik
    }
}
\\
&= 
\sum_{i=1}^m
\sup_{\mu\in\RR^\datadim}
\EE\sbr{
    \lVert
        \estimaccept\rbr{
            \initdatai,\identity,\muhat
        }
        -\mu
    \rVert^2_2
}
+
\sum_{i=1}^m
\sum_{k=1}^\datadim
\costik
\nnik .
\end{align*}
Since we are taking $n=\cbr{n_i}_{i\in[m]}$ to be fixed, minimizing the social penalty is equivalent to minimizing the first term in the second line.
It can be shown using standard arguments~\citep{lehmann2006theory}, that
for any $n$,
the deterministic sample mean of the collected points $\estimsm= \{\estimsmk\}_k
= \cbr{\frac{1}{|X_{:,k}|}\sum_{x\in X_{:,k}} x}_k$
(where $X_{:,k} = \cup_{i\in[m]}\initdataik$) is the 
is minimax optimal.
Therefore,
\begin{align*}
    \sum_{i=1}^m
    \sup_{\mu\in\RR^\datadim}
    \EE\sbr{
        \lVert
            \estimaccept\rbr{
                \initdatai,\identity,\muhat
            }
            -\mu
        \rVert^2_2
    }
    \geq
    \sum_{i=1}^m
    \sup_{\mu\in\RR^\datadim}
    \EE\sbr{
        \lVert
            \estimsm
            -\mu
        \rVert^2_2
    }
\end{align*}
But notice that when agents submit truthfully, the sample mean of the collected points is the same as the sample mean of the submitted so
$\estimaccept\rbr{
\initdatai,\identity,\muhat}
=\estimsm$.
Therefore the lower bound above is achieved using $\mechsm$ when agents submit truthfully and accept the estimate so we conclude
\begin{equation*}
\socpenal(\mechsm, (\nnj, \identity, \estimaccept)_{j})
= 
\inf_{M, \; (\subfuncj, \estimj)_{j\in[m]} }
\socpenal(M, 
(\nnj, \subfuncj, \estimj)_{j}).
\end{equation*}

Using that the data dimensions are independent, we find
\begingroup
\allowdisplaybreaks
\begin{align*}
    \socpenal(\mechsm, (\nnj, \identity, \estimaccept)_{j})
    &=
    \sum_{i=1}^m
    \sup_{\mu\in\RR^\datadim}
    \EE\sbr{
        \lVert
            \estimsm
            -\mu
        \rVert^2_2
    }
    +
    \sum_{i=1}^m
    \sum_{k=1}^\datadim
    \costik
    \nnik .
    \\
    &=
    \sum_{i=1}^m
    \sup_{\mu\in\RR^\datadim}
    \EE\sbr{
        \sum_{k=1}^\datadim
        \rbr{
            \frac{1}{
                X_{:,k}
            } 
            \sum_{x\in X_{:,k}} x
            -\mu_k
        }^2
    }
    +
    \sum_{i=1}^m
    \sum_{k=1}^\datadim
    \costik
    \nnik 
    \\
    &=
    \sum_{i=1}^m
    \sup_{\mu\in\RR^\datadim}
    \sum_{k=1}^\datadim
    \EE\sbr{
        \rbr{
            \frac{1}{
                X_{:,k}
            } 
            \sum_{x\in X_{:,k}} x
            -\mu_k
        }^2
    }
    +
    \sum_{i=1}^m
    \sum_{k=1}^\datadim
    \costik
    \nnik 
    \\
    &=
    \sum_{i=1}^m
    \sup_{\mu\in\RR^\datadim}
    \sum_{k=1}^\datadim
    \frac{\sigma^2}{X_{:,k}}
    +
    \sum_{i=1}^m
    \sum_{k=1}^\datadim
    \costik
    \nnik 
    \\
    &=
    \sum_{k=1}^\datadim
    \frac{m\sigma^2}{X_{:,k}}
    +
    \sum_{i=1}^m
    \sum_{k=1}^\datadim
    \costik
    \nnik
    \\
    &=
    \sum_{k=1}^\datadim
    \rbr{
        \frac{m\sigma^2}{
            \sum_{i=1}^m \nnik
        }
        +
        \sum_{i=1}^m
        \costik
        \nnik 
    }.
\end{align*}
\endgroup
\section{Definition of $\Gik(\alphaik)$}
\label{app:gdef}

The corruption coefficients used in Algorithm~\ref{alg:multi_arm_mech}, $\alphaik$, are each defined 
to be the smallest number larger than $\sqrt{\noptik}$ such that $\Gik(\alphaik)=0$ where $\Gik$ is defined as
\begin{align*} 
    \Gik(\alphaik) 
    \defeq
    &\frac{4\alphaik}{\sqrt{\sumdatawhiletik}}\rbr{\frac{4\alphaik^2\sumdatawhiletik}{\ziprimesizeik \noptik}-1-\ccik\frac{16\alphaik^2 \sumdatawhiletik \noptik}{\sigma^2\ziprimesizeik}}
    \\
    &
    -\exp\rbr{\frac{\sumdatawhiletik}{8\alphaik^2}}\rbr{\frac{4\alphaik^2}{\sumdatawhiletik}\rbr{\frac{\sumdatawhiletik}{\noptik}+1}-1}
    \sqrt{2\pi}\text{Erfc}\rbr{\sqrt{\frac{\sumdatawhiletik}{8\alphaik^2}}} .
    \numberthis\label{eq:gik-def}
\end{align*}
and $\differencettnik\defeq\totalk-2\noptik$.
\section{Examples of different bargaining solutions}
\label{app:bargsols}

This section gives the bargaining solutions and associated computations shown in Fig.~\ref{fig:barg_sols} (i.e. when $\datadim=1$, $\costiikk{1}{1}=0.033$, $\costiikk{2}{1}=0.066$, $\costiikk{3}{1}=0.1$, and $\sigma=10$). 
Each bargaining solution corresponds to a simple optimization problem.
The following computations were obtained by feeding the respective optimization problems into Mathematica with the aforementioned problem parameters.

\parahead{Working alone} Recall from~\S\ref{sec:penaliir_fact_proof} that when agents work alone they will collect $\frac{\sigma}{\sqrt{\costiikk{i}{1}}}$ points to minimize their penalty. This results in a penalty of $2\sigma\sqrt{\costiikk{i}{1}}$. Therefore,
\begin{align*}
    &\nniikkir{1}{1}
    =55
    \qquad
    \nniikkir{2}{1}
    =39,
    \qquad
    \nniikkir{3}{1}
    =32,
    \\
    \implies
    &\penaliiir{1}\approx
    3.63, 
    \quad
    \penaliiir{2}\approx
    5.13,
    \quad~
    \penaliiir{3}\approx
    6.23 .
\end{align*}

\parahead{Social penalty minimization without IR constraints} Without IR constraints, the agent with the cheapest collection cost for each distribution will be responsible for collecting all of the data for that distribution. Because agents can have the same collection costs, there are multiple agents which could have the cheapest collection cost resulting in different bargaining solutions with the same social penalties. In this case, for the sake of convenience, assume that $i=\argmin_i\costiikk{i}{k}$ chooses one of the agents with the cheapest cost for distribution $k$ in which case the following bargaining solution is one of potentially many which minimizes the social penalty
\begin{align*}
    \nspopt
    =
    \rbr{
        \nspopt_{i,k}
    }_{i,k}
    \qquad
    \text{where}
    \qquad
    \nspopt_{i,k}
    &=
    \begin{cases}
        \argmin_{n_{i,k}}
        \rbr{
            m\frac{\sigma^2}{n_{i,k}}
            +
            \costiikk{i}{k}n_{i,k}
        }
        &
        \text{if }
        i=\argmin_i\costiikk{i}{k}
        \\
        0 & \text{otherwise}
    \end{cases}
    \\
    &=
    \begin{cases}
        2\sigma\sqrt{m \costiikk{i}{k}}
        &
        \text{if }
        i=\argmin_i\costiikk{i}{k}
        \\
        0 & \text{otherwise}
    \end{cases} .
\end{align*}

In our example this results in 
\begin{gather*}
    \nspopt_{1,1}
    =95,
    \qquad
    \nspopt_{2,1}
    =0,
    \qquad
    \nspopt_{3,1}=0,
    \\
    \implies
    \penalii{1}\rbr{
        \mechsm,
        \obestratshnn{\nspopt}
    }
    \approx 
    4.19,
    \qquad
    \penalii{2}\rbr{
        \mechsm,
        \obestratshnn{\nspopt}
    }
    =
    \penalii{3}\rbr{
        \mechsm,
        \obestratshnn{\nspopt}
    }
    \approx 
    1.05 .
\end{gather*}

\parahead{Social penalty minimization with respect to IR constraints} Recall that minimizing social penalty subject to IR constraints is given by the optimization problem in~\eqref{eqn:utilbargoned}. 
Although the cheapest agent for a particular distribution can no longer collect all of the data (as this would not be IR), it turns out that the solution to~\eqref{eqn:utilbargoned} still has the cheapest agents collecting most of the data. More specifically, the cheapest agents collect slightly more data than they would on their own and receive a small amount of data from the other agents which offsets this increase in penalty. It can be shown that under the solution to~\eqref{eqn:utilbargoned} there is a set of cheap agents whose penalty is exactly their penalty when working alone. If it were the case that this penalty was lower than when working alone, more of the work of data collection could be shifted to this agent which would result in a decreased social penalty.
Solving~\eqref{eqn:utilbargoned} in our example we find that 
\begin{gather*}
    \nsocpenalopt_{1,1}
    =71,
    \qquad
    \nsocpenalopt_{2,1}
    =7,
    \qquad
    \nsocpenalopt_{3,1}= 0,
    \\
    \implies
    \penalii{1}\rbr{
        \mechsm,
        \obestratshnn{\nsocpenalopt}
    }
    \approx 
    3.63,
    \qquad
    \penalii{2}\rbr{
        \mechsm,
        \obestratshnn{\nsocpenalopt}
    }
    \approx 
    1.72,
    \qquad
    \penalii{3}\rbr{
        \mechsm,
        \obestratshnn{\nsocpenalopt}
    }
    \approx 
    1.28.
\end{gather*}

\parahead{Nash bargaining} Nash bargaining typically refers to the bargaining solution which maximizes the product of each agents utility minus the utility they can achieve when working alone~\citep{mas1995microeconomic}. In our setting we can think of penalty as negative utility. Thus, while there are multiple ways to define the Nash bargaining solution in our problem, it is reasonable to 
define it as 
\begin{equation*}
    \nnwopt
    =
    \argmax_n \prod_{i=1}^k 
    \rbr{
        \penaliir
        -
        \penali\rbr{
            \mechsm,
            \obestratshn
        }
    } .
\end{equation*}
i.e. the bargaining solution which maximizes product of the differences between an agents penalty when working alone and when collaborating with others.
In our example we get that
\begin{gather*}
    \nnwopt_{1,1}
    =35,
    \qquad
    \nnwopt_{2,1}\approx
    23,
    \qquad
    \nnwopt_{3,1}\approx
    16,
    \\
    \implies
    \penalii{1}\rbr{
        \mechsm,
        \obestratshnn{\nnwopt}
    }
    \approx
    2.50,
    \qquad
    \penalii{2}\rbr{
        \mechsm,
        \obestratshnn{\nnwopt}
    }
    \approx
    2.89,
    \qquad
    \penalii{3}\rbr{
        \mechsm,
        \obestratshnn{\nnwopt}
    }
    \approx
    2.92 .
\end{gather*}

\parahead{Egalitarian bargaining} Egalitarian bargaining is typically defined as the bargaining solution under which all agents achieve the same utility and for which this utility is maximized~\citep{mas1995microeconomic}. In our problem penalty can be viewed as negative utility. While there are many ways in which egalitarian bargaining can be defined, it is reasonable to define the egalitarian bargaining solution to be
\begin{equation*}
    \newopt
    =
    \argmin_n \max_i
    \penali\rbr{
        \mechsm,
        \obestratshn
    }
    .
\end{equation*}
In our example we get that 
\begin{gather*}
    \newopt_{1,1}
    =41
    \qquad
    \newopt_{2,1}
    =21
    \qquad
    \newopt_{3,1}
    =14
    \\
    \implies
    \penalii{1}\rbr{
        \mechsm,
        \obestratshnn{\newopt}
    }
    =
    \penalii{2}\rbr{
        \mechsm,
        \obestratshnn{\newopt}
    }
    =
    \penalii{3}\rbr{
        \mechsm,
        \obestratshnn{\newopt}
    }
    \approx
    2.69 .
\end{gather*}

\section{Proof of Theorem~\ref{thm:main}}
\label{sec:mainproof}

The proof of the theorem can be divided into two cases based on whether or not $\rbr{\coststruct,\nbarg}$ satisfies the condition given in~\eqref{eq:special-instance}. Proving the theorem when~\eqref{eq:special-instance} does not hold is given in~\S\ref{sec:mainproof-nonspecialcase}. When~\eqref{eq:special-instance} holds the proof is given in ~\S\ref{sec:mainproof-specialcase}.

\subsection{Proof when condition~\eqref{eq:special-instance} is not satisfied}
\label{sec:mainproof-nonspecialcase}

\paragraph{Incentive Compatibility and Individual Rationality.} If $\rbr{\coststruct,\nbarg}$ does not satisfy~\eqref{eq:special-instance}, the mechanism employs the following simple procedure for each agent and distribution. If an agent does not have the lowest collection cost for a distribution, they receive the sample mean of all of the data for that distribution. Otherwise, the agent only receives the sample mean of the data they submitted for that distribution.

Note that when an agent has the lowest collection cost, it is best for her to collection $\nnikir$ points and use the sample mean. Moreover, she is no worse off submitting this data truthfully, and accepting the estimate from the mechanism. Also, when an agent doesn't have the lowest collection cost for a particular distribution, she will receive all the data from agents that do have the cheapest collection cost. Because their collection cost is cheaper, the best strategy for the agent receiving the data is to not collect any data (as the cost of data collection outweighs the marginal decrease in estimation error). Therefore, assuming the other agents follow $\soptmi$, the best strategy for agent $i$ is $\sopti$, i.e. $\sopt$ is a Nash equilibrium under $\multiarmmech$ when~\eqref{eq:special-instance} isn't satisfied. Finally NIC implies IR since an agent working on their own is a valid strategy.

\paragraph{Efficiency.} Recall that the social penalty is minimized by, for each distribution $k\in[d]$, having one of the agent with the cheapest cost to sample that distribution, collect $\frac{\sigma\sqrt{m}}{\min_i \sqrt{\costik}}$ points from distribution $k$ and freely share them with the other agents after which each agent will take the sample mean of the data for each distribution. 
Therefore we have that 
\begin{align*}
    \penali(\mechsm, \obestratshn)
    &\leq
    \sum_{k=1}^\datadim
    \rbr{
        \frac{\sigma^2}{
            \frac{\sigma\sqrt{m}}{\min_j \sqrt{\costjk}}
        }
        +
        \mathbbm{1}\sbr{
            i\in\argmin_j \costjk
        }
        \min_j\costjk
        \frac{\sigma\sqrt{m}}{\min_j \sqrt{\costjk}}
    }
    \\
    &=
    \sum_{k=1}^\datadim
    \rbr{
        \sigma
        \frac{\min_j \sqrt{\costjk}}{\sqrt{m}}
        +
        \mathbbm{1}\sbr{
            i\in\argmin_j \costjk
        }
        \sigma\sqrt{m}\min_j \sqrt{\costjk}
    }
\end{align*}
Here the inequality is used in case there are multiple agents with the same cheapest collection cost in which only one of them collects the data. Using the definition of the mechanism we also have that
\begin{align*}
    \penali\rbr{\multiarmmech,\sopt}
    =
    \sum_{k=1}^\datadim
    \rbr{
        \frac{\sigma^2}{
            \abr{
                \argmin_j\cciikk{j}{k}
            }
            \frac{\sigma}{\mathop{\min}_{i} \sqrt{\costik}}
        }
        +
        \mathbbm{1}\sbr{
            i\in\argmin_j \costjk
        }
        \min_i\costik
        \frac{\sigma}{\mathop{\min}_{i} \sqrt{\costik}}
    }
\end{align*}
Therefore,
\begin{align*}
    \frac{
        \penali\rbr{\multiarmmech,\sopt}
    }{
        \penali(\mechsm, \obestratshn)
    }
    &=
    \frac{
        \sum_{k=1}^\datadim
        \rbr{
            \frac{\sigma^2}{
                \abr{
                    \argmin_j\cciikk{j}{k}
                }
                \frac{\sigma}{\mathop{\min}_{i} \sqrt{\costik}}
            }
            +
            \mathbbm{1}\sbr{
                i\in\argmin_j \costjk
            }
            \min_i\costik
            \frac{\sigma}{\mathop{\min}_{i} \sqrt{\costik}}
        }
    }{
        \sum_{k=1}^\datadim
        \rbr{
            \sigma
            \frac{\min_j \sqrt{\costjk}}{\sqrt{m}}
            +
            \mathbbm{1}\sbr{
                i\in\argmin_j \costjk
            }
            \sigma\sqrt{m}\min_j \sqrt{\costjk}
        }
    }
    \\
    &\leq
    \frac{
        \sum_{k=1}^\datadim
        \frac{\sigma^2}{
            \abr{
                \argmin_j\cciikk{j}{k}
            }
            \frac{\sigma}{\mathop{\min}_{i} \sqrt{\costik}}
        }
    }{
        \sum_{k=1}^\datadim
        \sigma
        \frac{\min_j \sqrt{\costjk}}{\sqrt{m}}
    }
    \\
    &\leq \sqrt{m} .
\end{align*}
Therefore, $\rbr{\multiarmmech,\sopt}$ $\sqrt{m}$--approximates $\nbarg$. When $\nbarg=\nsocpenalopt$ it follows from summing the penalties of all the agents that $\rbr{\multiarmmech,\sopt}$ is $\sqrt{m}$--efficient.

\subsection{Proof when condition~\eqref{eq:special-instance} is satisfied}
\label{sec:mainproof-specialcase} 
When~\eqref{eq:special-instance} is satisfied the proof becomes just an instance of Theorem~\ref{thm:mainextended} by taking $\nbarg=\nsocpenalopt$.

\section{Proof of Theorem~\ref{thm:mainextended}}
\label{sec:mainextended}

We will first show in~\S\ref{sec:mainproof-alpha-exists} that the constants used in Algorithm~\ref{alg:multi_arm_mech} are well defined. We then prove Nash incentive compatibility, individual rationality, and efficiency in~\S\ref{sec:NIC},~\S\ref{sec:IR}, and~\S\ref{sec:efficiency} respectively.

\subsection{Proving $\alphaik$ is well defined when $i\not\in\donatingagentsk$ and $\noptik>0$}
\label{sec:mainproof-alpha-exists}

We have discussed how it may not be the case that for all $(i,k)\in[m]\times[\datadim]$ a solution to $\Gik(\alphaik)$ exists. However, we can show that $i\not\in\donatingagentsk,~\noptik>0$ is a sufficient condition for the existence of such an $\alphaik$. We can also show that $\alphaik\geq\sqrt{\noptik}$. The following lemma proves these two claims. Thus $\alphaik$ as used in Algorithm \ref{alg:multi_arm_mech} is well defined. 

\begin{lemma}
    \label{lem:mainproof-alpha-exists}
    Let $\Gik(\alphaik)$ be defined as in~\eqref{eq:gik-def}. For any agent $i\not\in \donatingagentsk$ such that $\noptik>0$, $\exists \alphaik\geq\sqrt{\noptik}$ such that $\Gik(\alphaik)=0$.
\end{lemma}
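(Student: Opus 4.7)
The plan is to apply the intermediate value theorem to $G_{i,k}$ on the interval $[\sqrt{\noptik}, \infty)$. Continuity of $G_{i,k}$ on $(0, \infty)$ is immediate, since the formula is assembled from polynomials, square roots, an exponential, and $\text{Erfc}$, each continuous where defined. The task therefore reduces to verifying that $G_{i,k}(\sqrt{\noptik}) \leq 0$ and $G_{i,k}(\alpha) \to +\infty$ as $\alpha \to \infty$; once both are established, the IVT produces a root in $[\sqrt{\noptik}, \infty)$, and the infimum of the closed nonempty root set is attained and gives a valid $\alphaik \geq \sqrt{\noptik}$.

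The key input is a pair of structural bounds extracted from the hypothesis $i \notin \donatingagentsk$. Since the first while-loop of Algorithm~\ref{alg:compute-ne} did not add $(i,k)$ to $\donatingagentsk$ before terminating, the loop condition must be violated:
\[
\frac{\sigma^2}{\nnikir} + \ccik\, \nnikir \;>\; 4\left(\frac{\sigma^2}{\totalk} + \ccik\, \noptik\right).
\]
Using $\nnikir = \sigma/\sqrt{\ccik}$, the LHS equals $2\sigma\sqrt{\ccik}$; since both summands on the RHS are positive, each is individually bounded, giving $\noptik < \nnikir/2$ (equivalently $4\ccik\noptik^2/\sigma^2 < 1$) and $\totalk > 2\nnikir > 4\noptik$ (so $\differencettnik = \totalk - 2\noptik > 2\noptik > 0$, and every denominator in $G_{i,k}$ is positive for $\alpha > 0$). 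With these in hand, the asymptote $\alpha \to \infty$ is straightforward: letting $x = \sqrt{\differencettnik/(8\alpha^2)} \to 0^+$, we have $\exp(x^2)\to 1$ and $\text{Erfc}(x)\to 1$, so the second summand of $G_{i,k}$ grows only as $\Theta(\alpha^2)$, while the first grows as $\Theta(\alpha^3)$ with leading coefficient proportional to $1 - 4\ccik\noptik^2/\sigma^2 > 0$. Hence $G_{i,k}(\alpha) \to +\infty$. For the left endpoint, substituting $\alpha = \sqrt{\noptik}$ reduces the first summand to $\frac{4\sqrt{\noptik}}{\sqrt{\differencettnik}}\bigl[\frac{4\differencettnik}{\ziprimesizeik}(1 - 4\ccik\noptik^2/\sigma^2) - 1\bigr]$; applying a Mills-ratio style lower bound on $\sqrt{2\pi}\exp(x^2)\text{Erfc}(x)$ at $x = \sqrt{\differencettnik/(8\noptik)}$, together with the inequality $\differencettnik > 2\noptik$, shows that the (positive) second summand dominates the first, yielding $G_{i,k}(\sqrt{\noptik}) \leq 0$.

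The main obstacle is the left-endpoint inequality $G_{i,k}(\sqrt{\noptik}) \leq 0$. The product $\exp(x^2)\text{Erfc}(x)$ does not simplify cleanly for intermediate values of $x = \sqrt{\differencettnik/(8\noptik)}$, and the first summand at $\alpha = \sqrt{\noptik}$ still contains a $-1$ inside its bracket, so bounding it forces one to use both inequalities from $i \notin \donatingagentsk$ simultaneously: the bound $4\ccik\noptik^2/\sigma^2 < 1$ shrinks the positive contribution of the first summand, while $\differencettnik > 2\noptik$ controls the argument of the Mills-ratio estimate in the second summand. Once this endpoint comparison is carried out, the remainder of the argument (continuity, asymptotic dominance, IVT, and extracting the smallest root) is standard, and $\noptik > 0$ is used only to guarantee that the formula for $G_{i,k}$ is well defined.
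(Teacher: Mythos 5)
Your overall architecture—continuity plus the intermediate value theorem on $[\sqrt{\noptik},\infty)$, with $\lim_{\alphaik\to\infty}\Gik(\alphaik)=+\infty$ and $\Gik(\sqrt{\noptik})\le 0$ as the two endpoints—is exactly the paper's, and your asymptotic argument is correct for the same reason as theirs: the Erfc factor tends to a constant, so the second summand is $\Theta(\alphaik^2)$ while the first is $\Theta(\alphaik^3)$ with leading coefficient proportional to $\sigma^2-4\ccik(\noptik)^2>0$.

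The gap is in the left endpoint. You claim that $\Gik(\sqrt{\noptik})\le 0$ follows from a Mills-ratio lower bound on Erfc together with only the two inequalities extracted from the termination of the \emph{first} while loop, namely $\noptik<\nnikir/2$ and $\totalk>2\nnikir$ (hence $\differencettnik>2\noptik$). That is not enough. Replacing Erfc by its Mills-ratio lower bound turns $\Gik(\sqrt{\noptik})$ into the explicit upper bound
\begin{equation*}
\Gik\rbr{\sqrt{\noptik}}_{\mathrm{UB}}
=-\frac{64(\noptik)^{3/2}\bigl(\ccik\noptik(\totalk)^3+(2(\noptik)^2-(\totalk)^2)\sigma^2\bigr)}{(\totalk)^{5/2}(\totalk-2\noptik)\sigma^2},
\end{equation*}
so nonpositivity is equivalent to $\ccik\noptik(\totalk)^3\ge((\totalk)^2-2(\noptik)^2)\sigma^2$. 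Your two inequalities only bound $\noptik$ from above and $\totalk$ from below, so they permit $\noptik$ to be arbitrarily small relative to $\totalk$, in which case the left side vanishes while the right side stays near $\sigma^2(\totalk)^2$: take $\sigma=\ccik=1$ (so $\nnikir=1$), $\totalk=3$, $\noptik=0.01$; both of your conditions hold, yet $\ccik\noptik(\totalk)^3=0.27<((\totalk)^2-2(\noptik)^2)\sigma^2\approx 9$, and the upper bound is strictly positive, so your argument proves nothing there. The missing ingredient is the guarantee of the \emph{second} while loop of Algorithm~\ref{alg:compute-ne}: whenever $i\notin\donatingagentsk$ and $\noptik>0$, the returned value satisfies $\noptik\totalk\ge(\nnikir)^2=\sigma^2/\ccik$, which immediately gives $\ccik\noptik(\totalk)^3\ge\sigma^2(\totalk)^2\ge((\totalk)^2-2(\noptik)^2)\sigma^2$. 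This is precisely why the lemma's hypothesis includes $\noptik>0$ for a substantive reason (it activates the second loop's invariant), not merely to keep the formula well defined as you assert. As a minor point, your first summand at $\alphaik=\sqrt{\noptik}$ should carry $\sqrt{\totalk}$, not $\sqrt{\differencettnik}$, in its denominator, and the Erfc argument is $\sqrt{\totalk/(8\alphaik^2)}$; these slips do not affect the structure but suggest the endpoint computation was not carried through.
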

\begin{proof}
    We will show that $\lim_{\alphaik\rightarrow\infty}\Gik(\alphaik)=\infty$ and $\Gik\big(\sqrt{\noptik}\big)\leq 0$. Since $\Gik\big(\sqrt{\noptik}\big)$ is continuous, we will conclude that $\exists \alphaik\geq\sqrt{\noptik}$ such that $\Gik(\alphaik)=0$. 

    By the definition of $\Gik(\alphaik)$ we have 
    \begin{align*}
        &~~~~~\lim_{\alphaik\rightarrow\infty}\Gik(\alphaik)\\
        &=\lim_{\alphaik\rightarrow\infty}\Bigg(\frac{4\alphaik}{\sqrt{\sumdatawhiletik }}\left(\frac{4\alphaik^2\sumdatawhiletik}{\ziprimesizeik \noptik}-1-\ccik\frac{16\alphaik^2\sumdatawhiletik \noptik}{\sigma^2\ziprimesizeik}\right)
        \\
        &\hspace{2cm}
        -\exp\left(\frac{\sumdatawhiletik}{8\alphaik^2}\right)\left(\frac{4\alphaik^2}{\sumdatawhiletik}\left(\frac{\sumdatawhiletik}{\noptik}+1\right)-1\right)\sqrt{2\pi}\text{Erfc}\left(\sqrt{\frac{\sumdatawhiletik}{8\alphaik^2}}\right)\Bigg)
        \\
        &\geq\lim_{\alphaik\rightarrow\infty}\left(\frac{4\alphaik}{\sqrt{\sumdatawhiletik}}\left(\frac{4\alphaik^2\sumdatawhiletik}{\ziprimesizeik \noptik}-1-\ccik\frac{16\alphaik^2\sumdatawhiletik \noptik}{\sigma^2\ziprimesizeik}\right)
        -100\left(\frac{4\alphaik^2}{\sumdatawhiletik}\left(\frac{\sumdatawhiletik}{\noptik}+1\right)-1\right)\right)
        \\
        &=\lim_{\alphaik\rightarrow\infty}\Bigg(\frac{4\alphaik}{\sqrt{\sumdatawhiletik}}\left(\frac{4\alphaik^2\sumdatawhiletik\sigma^2-\ziprimesizeik\noptik\sigma^2-16\ccik\alphaik^2\sumdatawhiletik(\noptik)^2}{\ziprimesizeik \noptik\sigma^2}\right)
        -100\left(\frac{4\alphaik^2}{\sumdatawhiletik}\left(\frac{\sumdatawhiletik}{\noptik}+1\right)-1\right)\Bigg)
        \\
        &=\lim_{\alphaik\rightarrow\infty}\Bigg(\frac{4\alphaik}{\ziprimesizeik \noptik\sigma^2\sqrt{\sumdatawhiletik}}\left(4\sumdatawhiletik\alphaik^2(\sigma^2-4\ccik(\noptik)^2)-\ziprimesizeik \noptik\sigma^2\right) 
        -100\left(\frac{4\alphaik^2}{\sumdatawhiletik}\left(\frac{\sumdatawhiletik}{\noptik}+1\right)-1\right)\Bigg) \;.
    \end{align*} 
    Notice that $\sigma^2-4\ccik(\noptik)^2>0\iff \frac{\sigma}{2\sqrt{\ccik}}>\noptik\iff \frac{\nnikir}{2}>\noptik$ which is implied by Lemma \ref{lem:tech_nir2-gt-niks} since $i\not\in\donatingagentsk$. Therefore, since the first term is positive for sufficiently large $\alphaik$ and has $\alphaik^3$ while the second term has $\alphaik^2$, $\lim_{\alphaik\rightarrow\infty}\Gik(\alphaik)=\infty$.

    Recall the definition of $\Gik(\alphaik)$. Notice that $\alphaik\geq \sqrt{\noptik}\Rightarrow -1+4\left(\frac{1}{\sumdatawhiletik}+\frac{1}{\noptik}\right)\alphaik^2\geq 0$. Lemma \ref{erfc-lb} gives us the following lower bound on the Gaussian complementary error function
    \begin{align*}
        \text{Erfc}_{\text{LB}}(x)\defeq\frac{1}{\sqrt{\pi}}\left(\frac{\exp(-x^2)}{x}-\frac{\exp(-x^2)}{2x^3}\right) \; .
    \end{align*}
    Therefore, for $\alphaik\geq\sqrt{\noptik}$, we have the following upper bound on $\Gik(\alphaik)$
    \begin{align*}
        \Gik(\alphaik)_{\text{UB}}&\defeq\frac{4\alphaik\left(-1+\frac{4\sumdatawhiletik\alphaik^2}{\noptik(\sumdatawhiletik-2\noptik)}-\frac{16\ccik \noptik\sumdatawhiletik\alphaik^2}{(\sumdatawhiletik-2\noptik)\sigma^2}\right)}{\sqrt{\sumdatawhiletik}}
        \\
        &~~~~~
        -e^{\frac{\sumdatawhiletik}{8\alphaik^2}}\sqrt{2\pi}\left(-1+\frac{4\left(1+\frac{\sumdatawhiletik}{\noptik}\right)\alphaik^2}{\sumdatawhiletik}\right)\text{Erfc}_{\text{LB}}\left(\sqrt{\frac{\sumdatawhiletik}{8\alphaik^2}}\right) \; .
        \numberthis \label{eq:gik-ub}
    \end{align*}
    Using Lemma \ref{gub-erfclb}, we can write $\Gik\rbr{\sqrt{\noptik}}_{\text{UB}}$ as
    \begin{align*}
        \Gik\rbr{\sqrt{\noptik}}_{\text{UB}}&=-\frac{64(\noptik)^{3/2}(\ccik\noptik(\sumdatawhiletik)^3+(2(\noptik)^2-(\sumdatawhiletik)^2)\sigma^2)}{(\sumdatawhiletik)^{5/2}(\sumdatawhiletik-2\noptik)\sigma^2} \; .
    \end{align*}
    Now observe that
    \begin{align*}
        \Gik\rbr{\sqrt{\noptik}}_{\text{UB}}\leq 0
        &\iff -\frac{64(\noptik)^{3/2}(\ccik \noptik(\sumdatawhiletik)^3+(2(\noptik)^2-(\sumdatawhiletik)^2)\sigma^2)}{(\sumdatawhiletik)^{5/2}(\sumdatawhiletik-2\noptik)\sigma^2}\leq 0
        \\
        &\iff \ccik \noptik(\sumdatawhiletik)^3+(2(\noptik)^2-(\sumdatawhiletik)^2)\sigma^2\geq 0 \; .
    \end{align*}
    But, since $i\not\in \donatingagentsk$, we have $\noptik\sumdatawhiletik\geq(\nnikir)^2$ as a result of the second while loop in Algorithm \ref{alg:compute-ne}. Thus,
    \begin{align*}
        \ccik \noptik(\sumdatawhiletik)^3+(2(\noptik)^2-(\sumdatawhiletik)^2)\sigma^2
        &\geq \ccik(\nnikir)^2(\sumdatawhiletik)^2+(2(\noptik)^2-(\sumdatawhiletik)^2)\sigma^2\\
        &=\sigma^2(\sumdatawhiletik)^2+2(\noptik)^2\sigma^2-\sigma^2(\sumdatawhiletik)^2\\
        &=2(\noptik)^2\sigma^2\geq 0 \; .
    \end{align*}
    Therefore, $\Gik(\sqrt{\noptik})_{\text{UB}}\leq 0$ which gives us that $\Gik(\sqrt{\noptik})\leq 0$. Since $\Gik(\alphaik)$ is continuous, together with $\Gik(\sqrt{\noptik})\leq 0$ and $\lim_{\alphaik\rightarrow\infty}\Gik(\alphaik)=\infty$, we conclude that $\exists \alphaik\geq\sqrt{\noptik}$ such that $\Gik(\alphaik)=0$.
\end{proof}

\subsection{$\multiarmmech$ is Nash incentive compatible}
\label{sec:NIC}

We will follow the proof structure of Nash incentive compatibility in \citet{chen2023mechanism}. Like \citet{chen2023mechanism}, we will decompose the proof into two lemmas. The first lemma shows that for any distribution over $n_i$, the optimal strategy is to use $\subfunci=\subfuncopti$ and $\estimi=\estimopti$. The second lemma shows that the optimal strategy is to set $n_i=\nopti$. Together, these lemmas imply that $\strategyi=(\nopti,\subfuncopti,\estimopti)$ is the optimal strategy. Although the proof structure will be the same, we will need to make changes to handle our more general problem setting.

\begin{lemma}
    \label{lem:NIC}
    For all $i\in[m]$ and $\strategyi\in\Delta(\actionspace)$ we have that 
    \begin{align*}
        \penali\rbr{\multiarmmech,\rbr{\sopti,\soptmi}}\leq \penali\rbr{{\multiarmmech,\rbr{\strategyi,\soptmi}}}
    \end{align*}
    where $\sopt$ is the pure strategy defined in \eqref{eq:sopt}.
\end{lemma}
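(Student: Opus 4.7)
The plan is to follow the two-step decomposition suggested in the paragraph preceding the lemma. First I would reduce from mixed to pure strategies: since $\penali(\multiarmmech, (\strategyi, \soptmi))$ is linear in the mixing distribution of $\strategyi$, it suffices to show that $\sopti$ weakly dominates every pure strategy $(\nni, \subfunci, \estimi)\in\Scal$ when the others play $\soptmi$. I would then split the optimization into two sub-claims that together imply the lemma: (a) for every fixed $\nni$, the pair $(\identity,\estimaccept)$ minimizes the penalty over $(\subfunci,\estimi)$; (b) given $(\identity,\estimaccept)$, the optimal choice of $\nni$ is the deterministic $\nopti$ defined in~\eqref{eq:sopt}.

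For step (a), I would argue distribution by distribution. When the other agents follow $\soptmi$ (truthful submission) and $(\coststruct,\nbarg)$ satisfies~\eqref{eq:special-instance}, the mechanism's allocation $\allocik$ is a weighted average of (i) the clean pool $\subdataik\cup\dataik$ and (ii) the corruption-noised pool $\datacorrik$, where the corruption variance $\etaiksq$ is proportional to the squared difference between the means of $\subdataik$ and an independent subsample $\dataik$ drawn from $\subdatamik$. Because $\subdatamik$ comes from truthful draws of $\Ncal(\muk,\sigma^2)$, the subsample mean is an unbiased estimate of $\muk$ independent of $\initdatai$. Hence any submission function $\subfunci$ whose output $\subdataik$ differs in mean from $\initdataik$ inflates $\EE[\etaiksq]$ without introducing additional information the agent doesn't already possess through $\initdatai$. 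Combined with the fact that $\alphaik\geq\sqrt{\noptik}$ from Lemma~\ref{lem:mainproof-alpha-exists}, this forces the post-processed estimate $\estimi(\initdatai,\subdatai,\alloci)$ to have weakly larger $L_2$-risk than the truthful allocation, and the optimal post-processing of a truthful submission is simply to accept $\alloci$ since the mechanism already combines $\initdatai$ and $\subdatamik$ with inverse-variance weights. The careful minimax-style bookkeeping here (taking $\sup_\mu$ outside) is essentially the analogue of Chen et al.'s argument and is the technical core of this step.

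For step (b), with $\subfunci=\identity$ and $\estimi=\estimaccept$ fixed, the penalty decomposes as $\penali=\sum_{k=1}^{\datadim} \penalik(\nik)$ with $\penalik(\nik)=\EE[(\allocik-\muk)^2]+\costik\nik$. Using the explicit inverse-variance weighting in Algorithm~\ref{alg:multi_arm_mech}, I would write $\EE[(\allocik-\muk)^2]$ in closed form as a function of $\nik$, $\noptik$, $\sumdatawhiletik$, and $\alphaik$. The defining equation $G_{ik}(\alphaik)=0$ in~\eqref{eq:gik-def} is precisely the first-order condition $\penalik'(\noptik)=0$; I would verify convexity of $\penalik$ (so that this critical point is a global minimum) by checking the sign of the second derivative using the lower bound $\alphaik\geq\sqrt{\noptik}$. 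For agents $i\in\donatingagentsk$ the dimension-$k$ branch of the mechanism reduces to returning the sample mean of $\subdataik$, so $\noptik=\nnikir$ trivially minimizes $\penalik$ by the single-agent analysis of Section~2.

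Finally, I would combine (a) and (b): for any pure $(\nni,\subfunci,\estimi)$, step (a) gives $\penali(\multiarmmech,(\nni,\subfunci,\estimi),\soptmi)\geq \penali(\multiarmmech,(\nni,\identity,\estimaccept),\soptmi)$, and step (b) gives the latter is $\geq \penali(\multiarmmech,(\nopti,\identity,\estimaccept),\soptmi)=\penali(\multiarmmech,\sopti,\soptmi)$. The main obstacle I anticipate is step (a): the agent's estimator $\estimi$ is allowed to depend jointly on $\initdatai$, $\subdatai$, and $\alloci$, so one must rule out clever schemes in which the agent fabricates $\subdataik$ to track the subsample $\dataik$ (suppressing $\etaiksq$) while using $\initdatai$ separately in her estimator. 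Handling this cleanly will require invoking the independence of $\subdatamik$ from $\initdatai$ and a worst-case-over-$\mu$ argument to convert any such information-leakage strategy into an inflation of the $L_2$-risk that dominates the gain from lower corruption.
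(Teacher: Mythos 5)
Your high-level decomposition is the same as the paper's: first show that for any distribution over $\nni$ the optimal submission/estimator pair is $(\identity,\estimaccept)$ (the paper's Lemma~\ref{lem:opt-funcs}), then show that given these the optimal collection amount is $\nopti$ (Lemma~\ref{lem:opt-submis}). Your step (b) matches the paper: the penalty decomposes over $k$, the term for $k\in\estimtypec$ is shown to be convex in $\nnik$, and $\Gik(\alphaik)=0$ is exactly the first-order condition at $\noptik$; the cases $k\in\estimtypea$ and $k\in\estimtypeb$ are handled by the single-agent calculation. That part of your plan is sound.

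There are, however, two genuine gaps in step (a). First, your reduction to pure strategies is invalid as stated: $\penali(M,(\strategyi,\soptmi))=\sup_\mu\EE_{\strategyi}[\cdot]$ is a supremum of functions linear in the mixing distribution, hence \emph{convex}, not linear, in $\strategyi$. Weak dominance of $\sopti$ over every pure strategy therefore does not by itself imply dominance over mixed strategies (one would need concavity for that direction). The paper avoids this by working with Bayes risks $R_\ell$ under a sequence of priors $\Lambda_\ell=\Ncal(0,\ell^2 I_\datadim)$: the Bayes risk \emph{is} linear in the mixing distribution and lower-bounds the maximum risk of every mixed strategy, and the proof closes by showing $\lim_\ell R_\ell$ is exactly attained by the truthful strategy. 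Second, the substantive content of step (a) is missing. Your pointwise intuition --- that a non-truthful $\subdataik$ inflates $\EE[\etaiksq]$ without adding information --- is not a proof, precisely because the agent's estimator may use $\initdatai$, $\subdatai$, and $\alloci$ jointly; you acknowledge this as the ``main obstacle'' but do not resolve it. The paper's resolution is to fix $\subfunci$, compute the exact posterior $\mu_k\mid\initdataik,\subdataik,\allocik$ (Gaussian with variance depending on $\etaiksq$), identify the Bayes-optimal estimator as the posterior mean, and then show via the Hardy--Littlewood rearrangement inequality (Lemma~\ref{lem:re hardy-L ineq}) that the resulting Bayes risk is minimized when the submission equals a specific \emph{shrinkage} of $\initdataik$ (eq.~\eqref{eq:IC_part1-fik-choice}), not the identity; only in the limit $\ell\to\infty$ does this converge to truthful reporting and match the maximum risk of $(\nni,\identity,\estimaccept)$. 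Without this machinery your step (a) does not go through.
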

We can decompose the proof into the following two lemmas:

\begin{lemma}
    \label{lem:opt-funcs}
    For all $i\in[m]$ and $\strategyi\in\Delta(\actionspace)$, define the randomized strategy $\soptequivi\in\Delta(\actionspace)$ by sampling $(\nni,\subfunci,\estimi)\sim\strategyi$ and selecting $\rbr{\nni,\subfuncopti,\estimopti}$. Then, we have that 
    \begin{align*}
        \penali\rbr{\multiarmmech,\rbr{\soptequivi,\soptmi}}\leq \penali\rbr{{\multiarmmech,\rbr{\strategyi,\soptmi}}} .
    \end{align*}
\end{lemma}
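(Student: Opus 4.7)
The plan is to decompose the penalty coordinate-by-coordinate and argue, for each coordinate $k$ and (after conditioning) each realization of $\nni\sim\strategyi$, that truthfully submitting and accepting the mechanism's output weakly minimizes the worst-case squared error in the $k$-th coordinate. Since the collection cost $\sum_k\costik\nnik$ depends only on $\nni$ and its marginal agrees under $\strategyi$ and $\soptequivi$, and since $\|\estimi-\mu\|_2^2=\sum_k(\estimik-\muk)^2$ decomposes across the independent coordinates of $\mu$, this reduction suffices. Summing the coordinate-wise inequalities over $k$ and integrating against the distribution of $\nni$ then yields the lemma.

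Fixing $k$ and a realization of $\nni$, I would condition on which of the three branches of Algorithm~\ref{alg:multi_arm_mech} determines $\allocik$. In the donor branch ($i\in\donatingagentsk$), the allocation is the sample mean of $\subdataik$ alone, so under $\identity$ it equals the sample mean of $\initdataik$, which is the minimax estimator of $\muk$ from $\initdataik$; hence no alternative $(\subfunci,\estimi)$ can improve the worst-case squared error in coordinate $k$. In the non-donor zero-quota branch ($i\notin\donatingagentsk$, $\noptik=0$), the allocation depends only on the others' submissions, which under $\soptmi$ are drawn honestly from $\Ncal(\muk,\sigma^2)$; here $\subfunci$ is irrelevant to $\allocik$, and since the recommended strategy anticipates no own data to combine, $\estimaccept$ returns the minimax-optimal sample mean of others' data. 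Branch (c), with $i\notin\donatingagentsk$ and $\noptik>0$, is the main obstacle and follows the strategy of \citet{chen2023mechanism} adapted to heterogeneity.

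In branch (c), $\allocik$ is the inverse-variance weighted combination of the sample mean of $\subdataik\cup\dataik$ (having variance $\sigma^2/(|\subdataik|+|\dataik|)$ under $\identity$) and the sample mean of $\datacorrik$ (having variance $(\sigma^2+\etaiksq)/|\datacorrik|$, where $\etaiksq=\alphaiksq(\mathrm{mean}(\subdataik)-\mathrm{mean}(\dataik))^2$). I would condition on $\dataik$ and on the base points of $\datacorrik$---which under $\soptmi$ are i.i.d.\ from $\Ncal(\muk,\sigma^2)$---and view $\allocik$ as a function of the single statistic $T=\mathrm{mean}(\subdataik)$. Using $\alphaik\ge\sqrt{\noptik}$ from Lemma~\ref{lem:mainproof-alpha-exists} together with the convexity in $T$ of the worst-case MSE of this weighted estimator, I would show that any deviation from $\subdataik=\initdataik$ either inflates $\EE[\etaiksq]$ (increasing the variance of the corrupted contribution) or biases $T$ against some $\muk$; in either case the worst-case MSE weakly increases, so truthful submission is optimal, and at truthful submission the minimum-MSE estimator coincides with $\allocik$, making $\estimaccept$ optimal. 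The main difficulty is making this argument quantitative under heterogeneity: $\alphaik$ is defined only implicitly via the root of $\Gik$, and the asymmetry of the pool sizes $|\subdataik|,|\dataik|,|\datacorrik|$ induced by different $\costik$ makes the convexity-and-bias algebra notably heavier than in the symmetric setting of \citet{chen2023mechanism}.
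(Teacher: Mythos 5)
Your decomposition into the three branches ($i\in\donatingagentsk$; $i\notin\donatingagentsk$ with $\noptik=0$; $i\notin\donatingagentsk$ with $\noptik>0$) matches the paper's split into the sets $\estimtypea$, $\estimtypeb$, $\estimtypec$, and your treatment of the first two branches is essentially the paper's. But your branch-(c) argument has a genuine gap. You assert that any deviation from $\subdataik=\initdataik$ "either inflates $\EE[\etaiksq]$ or biases $T$ against some $\muk$," so the worst-case MSE weakly increases, and that "at truthful submission the minimum-MSE estimator coincides with $\allocik$." Neither claim follows from convexity in $T$, and the second must be established against \emph{all} estimators $\estimik(\initdatai,\subdatai,\alloci)$ — the agent knows her own submission and raw data and can post-process the returned scalar arbitrarily. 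The paper handles this by first enlarging the allocation space so the agent is handed the triple $(Z,Z',\eta^2)$ (strictly more information than the scalar $\allocik$) and proving optimality there, which is what makes the conclusion robust to arbitrary post-processing; your sketch never addresses this.

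More fundamentally, the pointwise claim that truthful submission is optimal is not how the result is (or, I believe, can easily be) proved. The paper's argument is a minimax-via-Bayes one: under each Gaussian prior $\Lambda_\ell$, the Bayes-optimal submission is \emph{not} the identity but the shrinkage map in~\eqref{eq:IC_part1-fik-choice}, and the Bayes-optimal estimator is the posterior mean; the key inequality showing that this shrinkage minimizes the Bayes risk is the Hardy--Littlewood rearrangement corollary (Lemma~\ref{lem:re hardy-L ineq}) applied to $\int F_{1,k}(e+\phi/\widetilde{\sigma}_{\ell,k})F_2(e)\,de$, where $F_{1,k}$ is bounded, even, and increasing on $[0,\infty)$ — not convex, so a convexity argument in $T$ does not apply. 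Truthfulness of $\identity$ only emerges in the limit $\ell\to\infty$, where the limiting Bayes risks (which lower-bound every strategy's max risk) are shown to match the max risk of $(\nni,\subfuncopti,\estimopti)$ exactly. Without the prior sequence, the limit computation, and the rearrangement inequality, the claim that no submission function — including ones that shrink or recentre the data to reduce the expected discrepancy with $\dataik$ — can beat the identity in worst-case risk is unsupported. (As a minor point, the bound $\alphaik\ge\sqrt{\noptik}$ is used in Lemma~\ref{lem:opt-submis} and the efficiency analysis, not in this lemma; here only the existence of $\alphaik$ is needed for the mechanism to be well defined.)
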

\begin{proof}
    See~\ref{subsubsec:proof-of-lem:opt-funcs}.
\end{proof}

\begin{lemma}
    \label{lem:opt-submis}
    For all $i\in[m]$ and $\strategyi\in\Delta(\actionspace)$ such that $\subfunci=\subfuncopti,\estimi=\estimopti$
    \begin{align*}
        \penali\rbr{\multiarmmech,\rbr{\sopti,\soptmi}}\leq \penali\rbr{{\multiarmmech,\rbr{\strategyi,\soptmi}}} .
    \end{align*}
\end{lemma}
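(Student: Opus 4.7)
The plan is to reduce the lemma to a coordinate-wise minimization. Under $\subfunci=\identity$ and $\estimi=\estimaccept$, the agent's estimate equals the allocation returned by $\multiarmmech$, and inspection of Algorithm~\ref{alg:multi_arm_mech} shows that each coordinate $\allocik$ depends only on data submitted for distribution $k$. Since the mechanism is translation-equivariant in $\muk$, the distribution of $\allocik-\muk$ does not depend on $\muk$ and $\sup_{\muk}\EE[(\allocik-\muk)^2]$ equals the corresponding variance. Assuming the others follow $\soptmi$, the penalty therefore decomposes as
\[
\penali\rbr{\multiarmmech,\rbr{\strategyi,\soptmi}}
=\sum_{k=1}^{\datadim}\EE_{\nnik\sim\strategyi}\sbr{\penalik(\nnik)},
\qquad
\penalik(n)\defeq \sup_{\muk}\EE\sbr{(\allocik-\muk)^2}+\ccik n,
\]
and it suffices to show $\penalik(n)\geq \penalik(\noptik)$ for all $n\in\NN$ and each $k$, after which taking expectations and summing gives the lemma.

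The simple cases are routine. If~\eqref{eq:special-instance} fails, $\allocik$ is either the sample mean of $\subdataik$ (when $i\in\argmin_j\cciikk{j}{k}$), giving $\penalik(n)=\sigma^2/n+\ccik n$ minimized at $n=\nnikir$; or the sample mean of $\subdatamik$ (otherwise), in which case the variance is $n$-independent and $\penalik$ is minimized at $n=0$. Both match $\noptiknl$. If~\eqref{eq:special-instance} holds, agents with $i\in\donatingagentsk$ again receive only their own sample mean, so the optimum is $\nnikir=\noptikl$; and agents with $i\notin\donatingagentsk$ and $\noptik=0$ receive $\subdatamik$ regardless of their contribution, so the optimum is $0$.

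The main obstacle is the subcase $i\notin\donatingagentsk,\;\noptik>0$, where $\allocik$ is the variance-weighted average of the clean pool $\subdataik\cup\dataik$ and the corrupted pool $\datacorrik$. Here I would compute $\penalik(n)$ via the tower property: conditional on $\eta_{i,k}^2$, the weighted-average variance simplifies to $\bigl(|\subdataik\cup\dataik|/\sigma^2+|\datacorrik|/(\sigma^2+\eta_{i,k}^2)\bigr)^{-1}$, and under truthful reporting the sample-mean difference $\bar Y_{i,k}-\bar Z_{i,k}\sim\Ncal(0,\sigma^2/n+\sigma^2/\noptik)$, so $\eta_{i,k}^2=\alphaik^2(\bar Y_{i,k}-\bar Z_{i,k})^2$ is a scaled chi-square; integrating against its density yields a closed form in $n$ featuring precisely the $\mathrm{Erfc}$ of~\eqref{eq:gik-def}. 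Differentiating $\penalik$ in $n$ and evaluating at $n=\noptik$, the first-order condition rearranges exactly to $\Gik(\alphaik)=0$---the defining equation of $\alphaik$---so Lemma~\ref{lem:mainproof-alpha-exists} supplies a stationary point at $n=\noptik$. Global optimality over $\NN$ then follows from convexity of $\penalik$, which I would verify using $\alphaik\geq\sqrt{\noptik}$, the invariant $\totalk\geq 2\noptik$ enforced by Algorithm~\ref{alg:compute-ne} for $i\notin\donatingagentsk$, and condition~\eqref{eq:special-instance}. Combining the cases with the decomposition completes the proof. The bulk of the work, and the chief deviation from~\cite{chen2023mechanism}, lies in this differentiation: arranging it so the first-order condition at $\noptik$ matches $\Gik$ cleanly, and verifying convexity in the heterogeneous-cost regime.
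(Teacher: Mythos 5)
Your proposal is correct and follows essentially the same route as the paper: decompose the penalty coordinate-wise using the $\mu$-independent closed-form risk of the truthful strategy (split into the cases $i\in\donatingagentsk$, $\noptik=0$, and $\noptik>0$), then for the corrupted case establish convexity of the per-coordinate penalty in $\nnik$ (using $\totalk\geq 2\noptik$) and show the first-order condition at $\noptik$ is exactly $\Gik(\alphaik)=0$. The only cosmetic difference is that you also fold in the case where~\eqref{eq:special-instance} fails, which the paper handles separately in its non-special-case section rather than inside this lemma.
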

\begin{proof}
    See~\ref{subsubsec:proof-of-lem:opt-submis}.
\end{proof}

\begin{proof}[Proof of Lemma \ref{lem:NIC}]
    For any strategy $\strategyi\in\Delta(\actionspace)$ we can construct the strategy $\strategyi'$ which samples $\rbr{\nni,\subfunci,\estimi}\sim\strategyi$ and selects $\rbr{\nni,\subfuncopti,\estimopti}$. Applying Lemmas \ref{lem:opt-funcs} and \ref{lem:opt-submis} we get
    \begin{align*}
        \penali\rbr{\multiarmmech,\rbr{\sopti,\soptmi}}
        \leq \penali\rbr{\multiarmmech,\rbr{\strategyi',\soptmi}}
        \leq \penali\rbr{{\multiarmmech,\rbr{\strategyi,\soptmi}}} .
    \end{align*}
\end{proof}

\subsection{$\multiarmmech$ is individually rational}
\label{sec:IR}
If an agent works on their own they, by definition, achieve their individually rational penalty. Furthermore, an agent may choose to work on their own as this is an option contained within their strategy space. Therefore, because we have shown that $\multiarmmech$ is incentive compatible, any agent has no incentive to deviate from their recommended strategy, hence they must achieve a penalty no worse than their individually rational penalty. Therefore, $\multiarmmech$ is individually rational.

\subsection{$\rbr{\multiarmmech,\sopt}$ is an 8--approximation to $\nbarg$}
\label{sec:efficiency}

Let $\penalik$ denote the penalty agent $i$ incurs as a result of trying to estimation $\mu_k$, i.e. the sum of her estimation error for distribution $k$ plus the cost she incurs collecting $\nnik$ points from distribution $k$. We will show in Lemma \ref{lem:pik-effic-bound} that for all $(i,k)\in[m]\times[\datadim]$
\begin{align*}
    \frac{\penalik\rbr{\multiarmmech,\sopt}}{\penalik\rbr{\mechsm,\cbr{\rbr{\nbarg,\identity,\estimaccept}}_j}}\leq 8
\end{align*}
Using this, we will get
\begin{align*}
    \frac{\penali\rbr{\multiarmmech,\sopt}}{\penali\rbr{\mechsm,\cbr{\rbr{\nbarg,\identity,\estimaccept}}_j}}
    &=\frac{\sum_{k=1}^\datadim\penalik\rbr{\multiarmmech,\sopt}}{\sum_{k=1}^\datadim\penalik\rbr{\mechsm,\cbr{\rbr{\nbarg,\identity,\estimaccept}}_j}}
    \\
    &\leq \frac{\sum_{k=1}^\datadim 8\penalik\rbr{\mechsm,\cbr{\rbr{\nbarg,\identity,\estimaccept}}_j}}{\sum_{k=1}^\datadim\penalik\rbr{\mechsm,\cbr{\rbr{\nbarg,\identity,\estimaccept}}_j}}
    \\
    &=8 
\end{align*}
This also tells us that we can bound the price of stability on the social penalty by writing
\begin{align*}
     \frac{\sum_{i=1}^m\penali\rbr{\multiarmmech,\sopt}}{\sum_{i=1}^m\penali\rbr{\mechsm,\cbr{\rbr{\nbarg,\identity,\estimaccept}}_j}}
     \leq \frac{\sum_{i=1}^m 8\penali\rbr{\mechsm,\cbr{\rbr{\nbarg,\identity,\estimaccept}}_j}}{\sum_{i=1}^m\penali\rbr{\mechsm,\cbr{\rbr{\nbarg,\identity,\estimaccept}}_j}}
     =8
\end{align*}

\begin{lemma}
    \label{lem:pik-effic-bound}
    For all $(i,k)\in[m]\times[\datadim]$ we have
    \begin{align*}
        \frac{\penalik\rbr{\multiarmmech,\sopt}}{\penalik\rbr{\mechsm,\cbr{\rbr{\nbarg,\identity,\estimaccept}}_j}}\leq 8
    \end{align*}
\end{lemma}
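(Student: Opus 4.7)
The plan is to establish the stronger per-distribution bound $\penalik\rbr{\multiarmmech, \sopt} \le 8\,\penalik\rbr{\mechsm, \cbr{\rbr{\nbarg, \identity, \estimaccept}}_j}$ for every $(i,k) \in [m]\times[\datadim]$; the Lemma then follows by summing over $k$ and using that the ratio bound is preserved under summation of numerators and denominators. I would split the argument into the three mutually exclusive cases determined by which branch of Algorithm~\ref{alg:multi_arm_mech} produces $\allocik$: (A) $i \in \donatingagentsk$, (B) $i \notin \donatingagentsk$ and $\noptik = 0$, and (C) $i \notin \donatingagentsk$ and $\noptik > 0$. In each case I use the closed-form expression for $\penalik\rbr{\multiarmmech,\sopt}$ that was already derived in the proof of Lemma~\ref{lem:opt-submis}, and compare it directly against the benchmark $\sigma^2/\sum_{j=1}^m \nbargiikk{j}{k} + c_{i,k}\nbargik$.

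For case (A), agent $i$'s contribution from distribution $k$ equals her stand-alone penalty $2\sigma\sqrt{c_{i,k}}$. At the iteration in which $i$ was inserted into $\donatingagentsk$, the triggering inequality of the first while loop of Algorithm~\ref{alg:compute-ne} gave $\sigma^2/\nnikir + c_{i,k}\nnikir \le 4\rbr{\sigma^2/\sum_{j=1}^m \nniikkt{j}{k} + c_{i,k}\nbargik}$, with $\nnt$ the iterate at that moment. The additional step is to show that at that point $\sum_{j=1}^m \nniikkt{j}{k}$ is within a factor of $2$ of $\sum_{j=1}^m \nbargiikk{j}{k}$, using that first-loop updates only flip entries from $\nbargik$-values to $\nnikir$-values and invoking the structural lemmas in the appendix. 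Combining these two bounds delivers the factor of $8$.

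For case (B), since $\noptik = 0$ and the mechanism's branch returns the sample mean of $\subdatamik$, the penalty collapses to $\sigma^2$ divided by the total submitted data from the other agents. Because $i \notin \donatingagentsk$ and the second while loop enforces $\noptik \cdot \totalk \ge (\nnikir)^2$ for the agents that did contribute, together with the favorable condition~\eqref{eq:special-instance}, the denominator stays within a constant factor of $\sum_{j=1}^m \nbargiikk{j}{k}$, and the benchmark comparison closes immediately.

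Case (C) is where the main difficulty lies. The penalty is the value at $\nnik = \noptik$ of the convex function in $\nnik$ derived during the proof of Lemma~\ref{lem:opt-submis}, an expectation over a standard Gaussian involving the implicitly-defined corruption coefficient $\alphaik$. My plan is to upper bound this integral by exploiting the lower bound $\alphaik \ge \sqrt{\noptik}$ from Lemma~\ref{lem:mainproof-alpha-exists}, which drives the corruption variance $\etaiksq$ large enough to effectively suppress the contribution of $\datacorrik$, reducing the penalty to a bound of the form $\sigma^2/(\noptik + |\dataik|) + c_{i,k}\noptik$. Combined with the second-loop invariant and condition~\eqref{eq:special-instance}, this closes the gap to the benchmark up to a constant of $8$. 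The main obstacle is precisely this step: because $\alphaik$ is only accessible through the implicit equation $\Gik(\alphaik) = 0$, dominating the Gaussian integral requires careful use of the algorithm's loop invariants and the implicit lower bound on $\alphaik$ rather than an explicit formula.
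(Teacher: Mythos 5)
Your case decomposition and your treatment of Cases A and B track the paper's proof (its Case 1 and Case 2(a)): the factor $4$ from the while-loop trigger times the factor $2$ from $\sum_j \nniikkt{j}{k} \ge \tfrac12\sum_j\nbargiikk{j}{k}$ gives the $8$ in Case A, and Case B reduces to the $\noptiikk{j}{k}\ge\nbargiikk{j}{k}/2$ comparison. The problem is Case C, which is the heart of the lemma, and there your plan does not work.

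Dropping the corrupted data $\datacorrik$ and bounding the penalty by $\sigma^2/(\noptik+|\dataik|)+\ccik\noptik=\sigma^2/(2\noptik)+\ccik\noptik$ is a valid \emph{upper} bound on the penalty, but it is far too weak: the benchmark in the denominator is $\sigma^2/\sum_j\nbargiikk{j}{k}+\ccik\nbargik \approx \sigma^2/\totalk + \ccik\noptik$, and since $i\notin\donatingagentsk$ forces $\noptik<\nnikir/2$, the term $\sigma^2/(2\noptik)$ dominates $\ccik\noptik$, so your bound on the ratio degrades like $\totalk/\noptik$, which is unbounded (e.g., $m$ identical agents each assigned $\nbargik=\nnikir/8$ already pushes the ratio past $30$ for large $m$). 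The whole reason the mechanism is constant-factor efficient is that the corrupted points still carry information: $\etaiksq=\alphaik^2(\hat\mu(\subdataik)-\hat\mu(\dataik))^2$ is small with constant probability at the truthful equilibrium, so in expectation the $\totalk-2\noptik$ corrupted points contribute an effective sample size comparable to $\totalk$. The paper's proof therefore cannot avoid evaluating the Gaussian expectation over the corruption noise (Lemmas~\ref{lem:efficiency-penalty2} and~\ref{lem:expvaluebound}), and it handles the implicitly defined $\alphaik$ by substituting the defining identity $\Gik(\alphaik)=0$ to eliminate the $\exp(\cdot)\mathrm{Erfc}(\cdot)$ term and obtain a rational function of $\alphaik$ (Lemma~\ref{lem:efficiency-penalty1}); it then shows this ratio is decreasing in $\alphaik$, so the lower bound $\alphaik\ge\sqrt{\noptik}$ can be plugged in, after which explicit algebra (Lemmas~\ref{lem:tech_penalty-subalpha1} and~\ref{lem:tech_penalty-subalpha2}, together with the invariants $\totalk>4\noptik$ and $\noptik\totalk\ge(\nnikir)^2$) yields the constants $4$ and $8$ in the two subcases $\noptik=\nbargik$ and $\noptik=(\nnikir)^2/\totalk$. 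To repair your argument you would need to replace the ``suppress $\datacorrik$'' step with this (or an equivalent) quantitative accounting of the corrupted data's expected contribution.
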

\begin{proof}
    We break the proof into two cases depending on whether or nor $i\in\donatingagentsk$. When $i\in\donatingagentsk$ we further subdivide into the following three sub cases:
        (a) $\nbargik=0$,
        (b) $\noptik=\nbargik>0$,
        and
        (c) $\noptik\neq\nbargik>0$.
    
    \paragraph{Case 1. ($i\in\donatingagentsk$)} In this case we know that $\noptik=\nnikir$ as a result of the first while loop in Algorithm \ref{alg:compute-ne}. Furthermore, we know from Algorithm~\ref{alg:multi_arm_mech} that $\allocik=\frac{1}{\subdataik}\sum_{y\in\subdataik}y$. Therefore, because agent $i$ reports truthfully under $\sopt$, we have that
    \begin{align*}
        \penalik\rbr{\multiarmmech,\sopt}=\frac{\sigma^2}{\noptik} +\ccik\noptik=\frac{\sigma^2}{\nnikir} +\ccik\nnikir
    \end{align*}
    Since $i\in\donatingagentsk$, we also know that 
    \begin{align*}
        \frac{\sigma^2}{\nnikir}+\ccik\nnikir\leq 4\rbr{\frac{\sigma^2}{\sum_{j=1}^m \nprimeik}+\ccik\nprimeik}
    \end{align*}
    where $n'$ is the value of $\nnt$ on the iteration when $i$ was added to $\donatingagentsk$. Furthermore, we know that $\frac{\nnikir}{\nbargik}>\frac{1}{2}$ as a result of Lemma \ref{lem:lose-less-than-half}. But this implies that $\sum_{j=1}^m n'>\frac{1}{2}\sum_{i=1}^m\nbargik$. Therefore, 
    \begin{align*}
         \frac{\sigma^2}{\sum_{j=1}^m \nprimeik}+\ccik\nprimeik
         &<\frac{2\sigma^2}{\sum_{j=1}^m \nbargiikk{j}{k}}+\ccik\nprimeik
         \\
         &=\frac{2\sigma^2}{\sum_{j=1}^m \nbargiikk{j}{k}}+\ccik\nbargik
         \\
         &<2\rbr{\frac{\sigma^2}{\sum_{j=1}^m \nbargiikk{j}{k}}+\ccik\nbargik}
    \end{align*}
    Combining these two inequalities, we get 
    \begin{align*}
        \frac{\penalik\rbr{\multiarmmech,\sopt}}{\penalik\rbr{\mechsm,\cbr{\rbr{\nbarg,\identity,\estimaccept}}_j}}
        =\frac{\frac{\sigma^2}{\nnikir} +\ccik\nnikir}{\frac{\sigma^2}{\sum_{j=1}^m \nbargiikk{j}{k}}+\ccik\nbargik}
        \leq \frac{8\rbr{\frac{\sigma^2}{\sum_{j=1}^m \nbargiikk{j}{k}}+\ccik\nbargik}}{\frac{\sigma^2}{\sum_{j=1}^m \nbargiikk{j}{k}}+\ccik\nbargik}=8
    \end{align*}
    Therefore, in this case we can conclude that
    \begin{align*}
        \frac{\penalik\rbr{\multiarmmech,\sopt}}{\penalik\rbr{\mechsm,\cbr{\rbr{\nbarg,\identity,\estimaccept}}_j}}\leq 8
    \end{align*}

    \paragraph{Case 2. (a) ($i\not\in\donatingagentsk,~\nbargik=0$)} In this case $i$ simply receives the sample mean of all of the other agents' uncorrupted data for distribution $k$. In this case we can apply Lemma~\ref{lem:nb-vs-nopt} to get
    \begin{align*}
        \frac{\penalik\rbr{\multiarmmech,\sopt}}{\penalik\rbr{\mechsm,\cbr{\rbr{\nbarg,\identity,\estimaccept}}_j}}
        =\frac{\frac{\sigma^2}{\sum_{j\neq i}\noptiikk{j}{k}}}{\frac{\sigma^2}{\sum_{j\neq i}\nbargik}}
        =\frac{\sum_{j\neq i}\nbargik}{\sum_{j\neq i}\noptiikk{j}{k}}
        \leq 2
    \end{align*}
    Therefore, in this case we can conclude that
    \begin{align*}
        \frac{\penalik\rbr{\multiarmmech,\sopt}}{\penalik\rbr{\mechsm,\cbr{\rbr{\nbarg,\identity,\estimaccept}}_j}}\leq 2
    \end{align*}
    
    \paragraph{Case 2. (b) ($i\not\in\donatingagentsk,~\noptik=\nbargik>0$)} In this case $i$ instead receives a weighted average of clean and uncorrupted data for distribution $k$ from the other agents. 
    Using Lemma~\ref{lem:np-vs-nopt} and that $\noptik=\nbarg$, we have 
    \begin{align*}
         \frac{\penalik\rbr{\multiarmmech,\sopt}}{\penalik\rbr{\mechsm,\cbr{\rbr{\nbarg,\identity,\estimaccept}}_j}}
         &=\frac{\penalik\rbr{\multiarmmech,\sopt}}{\frac{\sigma^2}{\sum_{i=1}^n\nbargik}+\ccik\nbargik}
         \\
         &=\frac{\penalik\rbr{\multiarmmech,\sopt}}{\frac{\sigma^2}{\sum_{i=1}^n\nbargik}+\ccik\noptik}
         \\
         &\leq\frac{\penalik\rbr{\multiarmmech,\sopt}}{\frac{\sigma^2}{2\sum_{i=1}^n\nprimeik}+\ccik\noptik}
    \end{align*}
    Let $\totalk=\sum_{i=1}^m\nprimeik$, i.e. how it is specified in Algorithm~\ref{alg:compute-ne}. Lemma \ref{lem:efficiency-penalty1} gives us an explicit expression for $\penalik\rbr{\multiarmmech,\sopt}$. Substituting this in, we get 
    \begingroup
    \allowdisplaybreaks
    \begin{align*}
        &~~~~~\frac{\penalik\rbr{\multiarmmech,\sopt}}{\frac{\sigma^2}{2\sum_{i=1}^n\nprimeik}+\ccik\noptik}
        \\
        &=\frac{\totalk}{\noptik(2\ccik \noptik\totalk+\sigma^2)}
        \left(2\ccik(\noptik)^2+\sigma^2+
        \frac{\alphaik\left(16\ccik(\noptik)^2\totalk\alphaik^2+(\noptik(\totalk-2\noptik)-4\totalk\alphaik^2)\sigma^2\right)}{-\noptik\totalk\alphaik+4(\noptik+\totalk)\alphaik^3}\right)
        \\
        &=\frac{\totalk(2\ccik(\noptik)^2+6\ccik\noptik\totalk+\sigma^2)}{(\noptik+\totalk)(2\ccik \noptik\totalk+\sigma^2)}
        +\frac{4\ccik(\noptik)^2(\totalk)^3-2(\noptik)^2\totalk\sigma^2-\noptik(\totalk)^2\sigma^2}{(\noptik+\totalk)(-\noptik\totalk+4\noptik\alphaik^2+4\totalk\alphaik^2)(2\ccik\noptik\totalk+\sigma^2)}
        \numberthis \label{eq:techical-apart1-final}
    \end{align*}
    \endgroup
    Here the second equality is a direct result of Lemma~\ref{lem:tech_penalty-apart1}. Because we know $\alphaik\geq \sqrt{\noptik}$, the denominator of the second term is positive. Additionally the numerator of the second term is also positive. To see this observe that 
    \begin{align*}
         &~~~~~4\ccik(\noptik)^2(\totalk)^3-2(\noptik)^2\totalk\sigma^2-\noptik(\totalk)^2\sigma^2
         \\
         &=3\ccik(\noptik)^2(\totalk)^3-2(\noptik)^2\totalk\sigma^2+\ccik(\noptik)^2(\totalk)^3-\noptik(\totalk)^2\sigma^2
         \\
         &\geq 3\ccik(\noptik)^2(\totalk)^3-2(\noptik)^2\totalk\sigma^2+\ccik\noptik\totalk^2\frac{\sigma^2}{\ccik}-\noptik(\totalk)^2\sigma^2
         \numberthis \label{eq:effic-apart-numerator1}
         \\
         &= 3\ccik(\noptik)^2(\totalk)^3-2(\noptik)^2\totalk\sigma^2
         \\
         &\geq 3\ccik(\noptik)^2\totalk\rbr{4\frac{\sigma^2}{\ccik}}-2(\noptik)^2\totalk\sigma^2
         \numberthis \label{eq:effic-apart-numerator2}
         \\
         &=10(\noptik)^2\totalk\sigma^2
         \\
         &> 0
    \end{align*}
    Line~\eqref{eq:effic-apart-numerator1} uses that $\noptik\totalk\geq\big(\nnikir\big)^2=\frac{\sigma^2}{\ccik}$ and line~\eqref{eq:effic-apart-numerator2} uses that $\totalk>2\nnikir=\frac{\sigma}{\sqrt{\ccik}}$ which is a result of Lemma~\ref{lem:tech_tprimek-gt-2nir}. Both of these statements make use of the fact that $i\not\in\donatingagentsk$. This means that the penalty ratio bound is decreasing as $\alphaik$ increases. Therefore, taking $\alphaik=\sqrt{\noptik}$ gives an upper bound on this entire expression. Therefore, by Lemma~\ref{lem:tech_penalty-subalpha1}, substituting in $\alphaik=\sqrt{\noptik}$ into~\eqref{eq:techical-apart1-final} yields:
    \begin{align*}
        &~~~~~\frac{\totalk(2\ccik(\noptik)^2+6\ccik\noptik\totalk+\sigma^2)}{(\noptik+\totalk)(2\ccik \noptik\totalk+\sigma^2)}
        +\frac{4\ccik(\noptik)^2(\totalk)^3-2(\noptik)^2\totalk\sigma^2-\noptik(\totalk)^2\sigma^2}{(\noptik+\totalk)(-\noptik\totalk+4\noptik\alphaik^2+4\totalk\alphaik^2)(2\ccik\noptik\totalk+\sigma^2)}
        \\
        &=\frac{
            2\totalk\rbr{\ccik\noptik\rbr{4\noptik+11\totalk}+\sigma^2} 
        }{
            \rbr{4\noptik+3\totalk} 
            \rbr{2\ccik\noptik\totalk+\sigma^2}
        }
        \numberthis \label{eq:technical-apart1-simplified}
    \end{align*}

    Since $i\not\in\donatingagentsk$, Lemmas \ref{lem:tech_tprimek-gt-2nir} and \ref{lem:tech_nir2-gt-niks} tell us that $\totalk>2\nnikir$ and $\frac{\nnikir}{2}>\noptik$. Therefore, $\totalk>4\noptik$. We can use this fact to bound~\eqref{eq:technical-apart1-simplified} as follows
    \begingroup
    \allowdisplaybreaks
    \begin{align*}
        &~~~~~\frac{
            2\totalk\rbr{\ccik\noptik\rbr{4\noptik+11\totalk}+\sigma^2} 
        }{
            \rbr{4\noptik+3\totalk} 
            \rbr{2\ccik\noptik\totalk+\sigma^2}
        }
        \\
        &=\frac{8\ccik(\noptik)^2\totalk}{\rbr{4\noptik+3\totalk}\rbr{2\ccik\noptik\totalk+\sigma^2}}
        +\frac{2\totalk
            \rbr{
                11\ccik\noptik\totalk+\sigma^2
             }
        }{
            \rbr{4\noptik+3\totalk}\rbr{2\ccik\noptik\totalk+\sigma^2}
        }
        \\
        &=\frac{
            8\ccik(\noptik)^2\totalk
        }{
            8\ccik(\noptik)^2\totalk+6\ccik\noptik\totalk^2+4\noptik\sigma^2+3\totalk\sigma^2
        }
        +\frac{3\totalk
            \rbr{
                \frac{2}{3}11\ccik\noptik\totalk+\frac{2}{3}\sigma^2
             }
        }{
            \rbr{4\noptik+3\totalk}\rbr{2\ccik\noptik\totalk+\sigma^2}
        }\\
        &\leq\frac{
            8\ccik(\noptik)^2\totalk
        }{
            8\ccik(\noptik)^2\totalk+24\ccik(\noptik)^2\totalk+4\noptik\sigma^2+3\totalk\sigma^2
        }
        +\frac{
            \rbr{4\noptik+3\totalk}
            \rbr{
                \frac{2}{3}11\ccik\noptik\totalk+\frac{2}{3}\sigma^2
             }
        }{
            \rbr{4\noptik+3\totalk}\rbr{2\ccik\noptik\totalk+\sigma^2}
        }
        \\
        &\leq\frac{
            8\ccik(\noptik)^2\totalk
        }{
            32\ccik(\noptik)^2\totalk
        }
        +\frac{
            \frac{11}{3}
            \rbr{
                2\ccik\noptik\totalk+\sigma^2
             }
        }{
            \rbr{2\ccik\noptik\totalk+\sigma^2}
        }
        \\
        &\leq 4
    \end{align*}
    \endgroup
    Therefore, in this case we can conclude that
    \begin{align*}
        \frac{\penalik\rbr{\multiarmmech,\sopt}}{\penalik\rbr{\mechsm,\cbr{\rbr{\nbarg,\identity,\estimaccept}}_j}}\leq 4
    \end{align*}

    \paragraph{Case 2. (c) ($i\not\in\donatingagentsk,~\noptik\neq\nbargik>0$)} Similar to the previous case, $i$ receives a weighted average of clean and uncorrected data for distribution $k$ from the other agents. However, the difference is that in Algorithm~\ref{alg:compute-ne}, $\nnikt$ was changed to $\frac{\big(\nnikir\big)^2}{\totalk}$ by the second while loop. Thus, $\noptik=\frac{\big(\nnikir\big)^2}{\totalk}$. We again know from Lemma \ref{lem:np-vs-nopt} that $\sum_{i=1}^n\nbargik\leq 2\sum_{i=1}^m\nprimeik$. Therefore, we have that
    \begin{align*}
        \frac{\penalik\rbr{\multiarmmech,\sopt}}{\penalik\rbr{\mechsm,\cbr{\rbr{\nbarg,\identity,\estimaccept}}_j}}
         &=\frac{\penalik\rbr{\multiarmmech,\sopt}}{\frac{\sigma^2}{\sum_{i=1}^n\nbargik}+\ccik\nbargik}
         \\
         &\leq\frac{\penalik\rbr{\multiarmmech,\sopt}}{\frac{\sigma^2}{\sum_{i=1}^n\nbargik}}
         \\
         &\leq\frac{\penalik\rbr{\multiarmmech,\sopt}}{\frac{\sigma^2}{2\sum_{i=1}^n\nprimeik}}
    \end{align*}
    The only difference so far from the previous case is that we dropped the $\ccik\nbargik$ term in the denominator. It turns out that we can get away with doing this, knowing that $\noptik=\frac{\big(\nnikir\big)^2}{\totalk}$. Let $\totalk=\sum_{i=1}^m\nprimeik$. Lemma \ref{lem:efficiency-penalty1} gives us an explicit expression for $\penalik\rbr{\multiarmmech,\sopt}$. Substituting this in, we get 
    \begingroup
    \allowdisplaybreaks
    \begin{align*}
        &~~~~~\frac{\penalik\rbr{\multiarmmech,\sopt}}{\frac{\sigma^2}{2\sum_{i=1}^n\nprimeik}}
        \\
        &=\frac{\totalk}{\noptik\sigma^2}\rbr{2\ccik(\noptik)^2+\sigma^2+
        \frac{\alphaik\left(16\ccik(\noptik)^2\totalk\alphaik^2+(\noptik(\totalk-2\noptik)-4\totalk\alphaik^2)\sigma^2\right)}{-\noptik\totalk\alphaik+4(\noptik+\totalk)\alphaik^3}}
        \\
        &=\frac{\totalk(2\ccik(\noptik)^2+6\ccik\noptik\totalk+\sigma^2)}{(\noptik+\totalk)\sigma^2}
        +\frac{4\ccik(\noptik)^2(\totalk)^3-2(\noptik)^2\totalk\sigma^2-\noptik(\totalk)^2\sigma^2}{(\noptik+\totalk)(-\noptik\totalk+4\noptik\alphaik^2+4\totalk\alphaik^2)\sigma^2}
        \numberthis \label{eq:techical-apart2-final}
    \end{align*}
    \endgroup
    Here, the second equality is a direct result of Lemma~\ref{lem:tech_penalty-apart2}. For $\alphaik\geq\sqrt{\noptik}$, the denominator of the second term is positive. Also notice that the numerator of the second term is the same as the one we showed was positive in the previous case. Therefore, we can again conclude that the penalty ratio is decreasing as $\alphaik$ increases and so taking $\alphaik=\sqrt{\noptik}$ gives an upper bound on the entire expression. Therefore, by Lemma~\ref{lem:tech_penalty-subalpha2}, substituting in $\alphaik=\sqrt{\noptik}$ into~\eqref{eq:techical-apart2-final} yields:
    \begin{align*}
        &~~~~~\frac{\totalk(2\ccik(\noptik)^2+6\ccik\noptik\totalk+\sigma^2)}{(\noptik+\totalk)\sigma^2}
        +\frac{4\ccik(\noptik)^2(\totalk)^3-2(\noptik)^2\totalk\sigma^2-\noptik(\totalk)^2\sigma^2}{(\noptik+\totalk)(-\noptik\totalk+4\noptik\alphaik^2+4\totalk\alphaik^2)\sigma^2}
        \\
        &=\frac{2\totalk\rbr{\ccik\noptik\rbr{4\noptik+11\totalk}+\sigma^2}}
        {\rbr{4\noptik+3\totalk}\sigma^2}
        \numberthis \label{eq:technical-apart2-simplified}
    \end{align*}
    Using the fact that $\noptik=\frac{\big(\nnikir\big)^2}{\totalk}=\frac{\sigma^2}{\ccik\totalk}$,~\eqref{eq:technical-apart2-simplified} becomes
    \begin{align*}
        \frac{2\totalk\rbr{\ccik\noptik\rbr{4\noptik+11\totalk}+\sigma^2}}
        {\rbr{4\noptik+3\totalk}\sigma^2}
        &=\frac{2\totalk\rbr{\sigma^2+\frac{\sigma^2\rbr{11\totalk+\frac{4\sigma^2}{\ccik\totalk}}}{\totalk}}}
        {\sigma^2\rbr{3\totalk+\frac{4\sigma^2}{\ccik\totalk}}}
        \\
        &=\frac{2\totalk}{3\totalk+\frac{4\sigma^2}{\ccik\totalk}}
        +2\rbr{\frac{11\totalk+\frac{4\sigma^2}{\ccik\totalk}}{3\totalk+\frac{4\sigma^2}{\ccik\totalk}}}
        \\
        &\leq \frac{2}{3}+2\frac{11\totalk}{3\totalk}
        \\
        &=8
    \end{align*}
    Therefore, in this case we can conclude that
    \begin{align*}
        \frac{\penalik\rbr{\multiarmmech,\sopt}}{\penalik\rbr{\mechsm,\cbr{\rbr{\nbarg,\identity,\estimaccept}}_j}}\leq 8
    \end{align*}
    
    In all of the cases we have considered the penalty ratio is bounded by 8. Thus concludes the proof.
\end{proof}

\newcommand{\nlbopt}{n^{\rm min}}
\newcommand{\nlboptiikk}[2]{\nlbopt_{#1,#2}}
\newcommand{\nlboptik}{\nlboptiikk{i}{k}}
\newcommand{\nlboptjk}{\nlboptiikk{j}{k}}
\newcommand{\nlboptoneone}{\nlboptiikk{1}{1}}
\newcommand{\nlboptonetwo}{\nlboptiikk{1}{2}}
\newcommand{\nlboptmione}{{\nlboptiikk{-i}{1}}}
\newcommand{\nlboptione}{\nlboptiikk{i}{1}}
\newcommand{\nlboptitwo}{\nlboptiikk{i}{2}}

\section{Hardness results}

\subsection{Proof of Theorem~\ref{thm:irresult}}
\label{app:irresult}
\begin{proof}
    We claim that each agent collects their IR amount of data for each distribution $\cbr{\nnikir}_k$. Suppose that this were not the case, and consider $\strategyii{-i}'\in\strategyspace^{m-1}$ such that $\sum_{j\neq i,k}\nnjk'=0$ (i.e. when the other agents collect no data). 
    In this case agent $i$ is working alone as they are the only one collecting data and we know from Fact~\ref{fact:penaliir} that when working alone the best penalty an agent can achieve is $\penaliir$ which implies that $\penali\rbr{\mech,\rbr{\strategyi,\strategyii{-i}'}}< \penaliir$ unless $n_{i,k}=\nnikir$. Now since we know that $n_{i,k}=\nnikir$ we get $\penali\rbr{\mech,\rbr{\strategyi,\strategyii{-i}'}}\geq \sum_{k}\ccik\nnikir =\frac{\penaliir}{2}$.
\end{proof}

\subsection{Proof of Theorem~\ref{thm:discresult}}
\label{app:discresult}
\begin{proof}
    We claim that each agent collects their IR amount of data for each distribution $\cbr{\nnikir}_k$. Suppose that this were not the case, and consider $\strategyii{-i}'\in\strategyspace^{m-1}$ such that $\sum_{j\neq i,k}\nnjk'=0$ (i.e. when the other agents collect no data). In this situation agent $i$ could improve their penalty by collecting their IR amount of data and using the sample mean (see~\S\ref{sec:penaliir_fact_proof}). But this would contradict that $s$ is a dominant strategy. Thus, $n_{i,k}=\nnikir$ and so $\penali\rbr{\mech,\rbr{\strategyi,\strategyii{-i}'}}\geq \sum_{k}\ccik\nnikir =\frac{\penaliir}{2}$.
\end{proof}

\subsection{Proof of Theorem~\ref{thm:lowerbound}}
\label{app:lowerbound}

We will now prove the hardness result given in Theorem~\ref{thm:lowerbound}.

\parahead{Construction}
We will begin by constructing a difficult problem instance.
Suppose there are $m$ agents and $\datadim=2$ distributions.
Let $\lbcost\in (0, \infty)$, and let the costs $\coststruct \in (0,
\infty]^{m\times 2}$ be:
\begin{align*}
\costiikk{1}{1} = \lbcost,
\hspace{0.1in}
\costiikk{1}{2} = \infty,
\hspace{0.5in}
\costiikk{i}{1} = \infty,
\hspace{0.1in}
\costiikk{i}{2} = \lbcost\quad\forall i\in\{2,\dots,m\}.
\numberthis
\label{eqn:lbconstructioncosts}
\end{align*}
Next, let us compute the minimizer of the social penalty in $\irmechstratspace$.
Recall that for a given $n\in\NN^{m \times \datadim}$, we have that
$(\mechsm, \obedientstrategies)$ simultaneously minimizes the individual penalties of all
the agents, and hence the social penalty (see~\eqref{eqn:mechsmisoptimal}).
Moreover, as no agent can sample both distributions, we have $\penaliir=\infty$
for all agents $i\in[m]$.
Hence, we can simply minimize the social penalty without any constraints  to obtain $\inf_{(M', s')\in \irmechstratspace} \socpenal(M', s')$ (see~\eqref{eqn:mechsmisoptimal},~\eqref{eqn:utilbargoned}, and~\eqref{eqn:allbargs}).

To do so, let us 
 write the social penalty $\socpenal(\mechsm, \obestratshn)$ 
as follows,
\begin{align*}
\socpenal(\mechsm, \obestratshn)
&= \sum_{i=1}^m \penali(\mechsm, \obedientstrategies)
 = \sum_{i=1}^m \left( \sum_{k=1}^2
                      \left(\frac{\sigma^2}{\sum_{j=1}^m \njk} +
                            \costik \nik\right) \right)
\\
&= \sum_{k=1}^2 \left(\frac{m\sigma^2}{\sum_{j=1}^m \njk}
    + \sum_{i=1}^m \costik \nik
    \right)
\end{align*}
As $\costiikk{1}{2}= \infty$, we can set $\nniikk{1}{2}=0$; otherwise the social penalty
would be infinite.
Similarly, we can set $\nniikk{i}{1}=0$ for all $i\in\{2,\dots,m\}$.
This gives us,
\begin{align*}
\socpenal(\mechsm, \obedientstrategies)
= \frac{m\sigma^2}{\nniikk{1}{1}} \;+\; \gamma\nniikk{1}{1}
    \;\;+\;\; \frac{m\sigma^2}{\sum_{j=2}^m \nniikk{j}{2}} \;+\; \gamma \sum_{j=2}^m \nniikk{j}{2}.
\end{align*}
Therefore, the RHS can be minimized by
choosing any $n$ which satisfies
\begin{align*}
\nniikk{1}{1} = \sigma\sqrt{\frac{m}{\lbcost}},
\hspace{0.1in}
\nniikk{1}{2} = 0,
\hspace{0.5in}
\nniikk{i}{1} = 0,
\hspace{0.2in}
\sum_{i=2}^m \nniikk{i}{2} = \sigma\sqrt{\frac{m}{\lbcost}} 
\numberthis
\label{eqn:lbconstructionbarg}
\end{align*}
Therefore the optimum social penalty is,
\begin{align*}
\inf_{(M', s')\in\irmechstratspace} \socpenal(M', s')
= \socpenal(\mechsm, \obestratshn) = 4 \sigma \sqrt{m\lbcost}.
\numberthis
\label{eqn:lbconstructionpenalties}
\end{align*}

\parahead{Proof intuition and challenges}
We will first intuitively describe the plan of attack and the associated challenges.
In our construction above, to minimize the social penalty, we need to collect
$\sigma\sqrt{m/\lbcost}$ points from both distributions.
Since agent 1 is the only agent collecting data from the first distribution
$\Ncal(\muone,\sigma^2)$, she is responsible for
collecting all of that data.
Though costly, this is still IR for agent 1 as she would have a penalty of $\infty$
working alone.
However,  no mechanism can validate
agent 1's submission because she is the only one with access to distribution 1.
Therefore, agent 1 will only collect an amount of data to minimize her own penalty,
leaving the others with significantly less data than they would have in $\nbarg$,
leading to our hardness result.

In particular, suppose a mechanism explicitly penalizes agent 1 for submitting only a
small amount of data, say, by adding large noise to the estimate
for $\mutwo$, or even not returning any information about $\mutwo$ at all.
Then,
agent 1 can fabricate a large amount of data by sampling from some normal
distribution $\Ncal(\muonefab, \sigma^2)$ at no cost, and submit this fabricated dataset
instead.
Intuitively, we should not expect a mechanism to be able to determine if the agent is
submitting data from the true distribution $\Ncal(\muone, \sigma^2)$  or the fabricated
one $\Ncal(\muonefab,\sigma^2)$ as only agent 1 has access to distribution 1.
Hence,
the best that a mechanism could hope for is to simply ask agent 1 to collect an amount of
data to minimize her own penalty and submit it truthfully. 

Formalizing this intuition however is technically challenging.
In particular, our proof cannot arbitrarily choose $\muonefab$ when fabricating data
for agent 1.
For instance, if she arbitrarily chose $\muonefab=1.729$,
this would work poorly against 
a mechanism which explicitly penalizes agent 1 for submitting data whose mean is close
to $1.729$.
The key challenge is in showing that for \emph{any} mechanism,
 there is always 
 fabricated data which agent 1 can submit to the mechanism 
 so that  
the information she receives about $\mutwo$ is not compromised.
We will now formally prove Theorem~\ref{thm:lowerbound} using this intuition.

\parahead{Notation}
Unless explicitly stated otherwise,
throughout this proof, we will treat $\noneone$, $\nonetwo$, $\subfunco$,
and $\estimone = (\estimoneone, \estimonetwo)$ as random quantities sampled from a mixed
strategy for agent 1.
For instance, if we write $\PP_{s_1}(\noneone=5)$, we are referring to the probability
that, when agent 1 follows mixed strategy $s_1$, she collects $5$ samples from the first
distribution.

\begin{proof}[Proof of Theorem~\ref{thm:lowerbound}]
Let $M\in\mechspace$ be a mechanism and let
$\sopt=(\soptone,\dots,\soptm)\in\Delta(\stratspace)^m$ be a (possibly mixed) Nash
equilibrium of $M$.
First observe that if $\PP_{\soptone}(\nonetwo>0) > 0$, i.e if agent 1 were to
collect samples from the second distribution with nonzero probability,
then $\penalone(M, \sopt) = \infty$
which implies $\socpenal(M, \sopt) = \infty$,
and we are done.
Hence, going forward, we will assume $\PP_{\soptone}(\nonetwo=0) = 1$.
Similarly, we can assume $\PP_{\sopti}(\nione=0) = 1$.
Note that we can also write any estimator
$\estimone:\dataspace\times\dataspace\times\allocspace \rightarrow \RR^2$ for agent 1
as $\estimone = (\estimoneone, \estimonetwo)$ where
$\estimonek:\dataspace\times\dataspace\times\allocspace \rightarrow \RR$
is her estimate for $\muk$.

The key technical ingredient of this proof is Lemma~\ref{lem:lb_decomp}, stated below.
It states that instead of following the recommended Nash strategy $\soptone$, 
agent 1 can follow an alternative strategy $\soptequivo$ in which she submits
a fabricated dataset and not be worse off.
In particular, even if $\soptone$ is pure, $\soptequivo$ may be a mixed (randomized)
strategy where the randomness comes from sampling (fabricating) data.
She can then use the legitimate data she collected to estimate $\mu_1$ and the information set she receives from submitting the fabricated data to estimate $\mu_2$.
In $\soptequivo$,
 her submission function $\fot$ and estimator $\hott$ for
$\mutwo$ may be random quantities which are determined by the fabricated data and do not
depend on the data $\initdataone=(\initdataoneone, \emptyset)$ she collects.
We have proved Lemma~\ref{lem:lb_decomp} in Appendix~\ref{app:proof_lb_decomp}.

\begin{lemma}
    \label{lem:lb_decomp}
Let $\mech\in\mechspace$ and $\sopt$ be a Nash equilibrium in $\mech$.
Then, there exists a strategy $\soptequivo\in\Delta(\stratspace)$ for agent 1,
such that
    \begin{align*}
        \penalone\rbr{\mech,\sopt}
        &= \penalone(\mech, (\soptequivo, \soptmii{1}))
        \\
        &=\EEV{\soptequivo}\sbr{\frac{\sigma^2}{\nnoo}+\lbcost\nnoo}
        +\sup_{\mu_2}\EEV{(\soptequivo, \soptmii{1})}\sbr{\EEalldatamo\sbr{\rbr{\estimot\rbr{\placeholderdata,\subfunco(\placeholderdata),A_1}-\mu_2}^2}}
    \end{align*}
    for any $\placeholderdata\in\dataspace$.
    Additionally,
    let $\rbr{(\nniikkt{1}{1}, 0),\fot,(\hoot,\hott)}$ 
    be drawn from $\soptequiv$ and
    let $((\noptoneone, 0), \subfuncoptone, (\estimoptoneone, \estimoptonetwo))$ be
    drawn from $\soptone$.
    The following statements about these random variables are true 
    \vspace{-0.10in}
    \begin{enumerate}
    \item 
    $\nniikkt{1}{1}=\noptoneone$, i.e. the amount of data the agent collects in both strategies is the same.
    \vspace{-0.05in}
    \item $\hoot(\initdataone, \cdot, \cdot) = \estimsmkk{1}\rbr{\initdataone}$, i.e. agent 1 uses the sample mean of the data she collected from the first
distribution to estimate $\muone$, regardless of what she submits and
the information set she receives from the mechanism.
    
    \vspace{-0.05in}
    \item  $\hott$ is constant with respect to (i.e does not
    depend on) $\muone$ and $\initdataone=(\initdataoneone, \emptyset)$, i.e. the agent's estimator for $\mu_2$ does not use the data she collects from the first distribution.
    \end{enumerate}
\end{lemma}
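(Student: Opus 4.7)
The plan is to decompose $\penalone(\mech,\sopt)$ into a $\muone$-estimation part, a $\mu_2$-estimation part, and a data-collection cost, then lower-bound the two estimation parts via a Bayesian argument with a diffuse prior on $\muone$, construct $\soptequivo$ that attains the bound, and close the loop using the Nash equilibrium property. Since $\nonetwo=0$ almost surely under $\soptone$ (otherwise $\penalone=\infty$) and writing $(\noptoneone,\subfuncoptone,(\estimoptoneone,\estimoptonetwo))\sim\soptone$, the starting point is
\begin{equation*}
\penalone(\mech,\sopt) \;=\; \lbcost\,\EE_{\sopt}[\noptoneone] \;+\; \sup_{\muone,\mu_2}\EE_{\sopt}\!\bigl[\EE_{\mu,\mech}[(\estimoptoneone-\muone)^2+(\estimoptonetwo-\mu_2)^2]\bigr].
\end{equation*}

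To lower-bound the squared-error supremum I would follow the Bayesian route used in the proof of Lemma~\ref{lem:opt-funcs}: place the prior $\Lambda_\ell=\Ncal(0,\ell^2)$ on $\muone$ and drive $\ell\to\infty$; the Bayes risk is at most the supremum over $\muone$ for each fixed $\mu_2$. Conditional on $\soptone$'s internal randomness and on $(\subdataone,A_1)$, the posterior of $\muone$ depends on agent~1's information only through $\initdataoneone$, because $\subdataone$ is a deterministic function of $\initdataoneone$ and $A_1$ only sees the other agents' submissions (independent of $\muone$) and $\subdataone$. A standard shrinkage computation then gives that the Bayes-optimal $\muone$-estimator converges as $\ell\to\infty$ to $\estimsmkk{1}(\initdataoneone)$ with conditional risk $\sigma^2/\noptoneone$. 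For $\mu_2$, the posterior given $(\initdataoneone,\subdataone,A_1)$ coincides with the posterior given $(\subdataone,A_1)$, since $\initdataoneone$ enters the $\mu_2$-problem only through $\subdataone$; hence the Bayes-optimal $\mu_2$-estimator can be chosen as a function of $(\subdataone,A_1)$ alone. Dominated convergence then produces a Bayes lower bound of the form $\EE_{\sopt}[\sigma^2/\noptoneone+\lbcost\,\noptoneone] + \sup_{\mu_2}\EE_{\sopt}[\EE_{\mu,\mech}[(\widetilde h(\subdataone,A_1)-\mu_2)^2]]$ for the limiting $\widetilde h$.

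Next I would construct $\soptequivo$ to attain this bound. For each realization $(\noptoneone,\subfuncoptone,(\estimoptoneone,\estimoptonetwo))\sim\soptone$, $\soptequivo$ would (i) collect $\nnoo=\noptoneone$ samples from distribution~1, yielding Property~1; (ii) set $\hoot(\initdataone,\cdot,\cdot)=\estimsmkk{1}(\initdataoneone)$, yielding Property~2; and (iii) independently sample a placeholder dataset $\widetilde Y\sim\Ncal(\muonefab,\sigma^2)^{\otimes\noptoneone}$ at zero cost, submit $\fot=\subfuncoptone(\widetilde Y,\emptyset)$, and estimate $\mu_2$ via $\hott(\widetilde Y,\fot,A_1)$. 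By construction $\fot$ and $\hott$ depend only on $\widetilde Y$ and the mechanism's return $A_1$, hence on neither the true $\initdataone$ nor $\muone$, giving Property~3. The scalar $\muonefab=\muonefab(\mech,\sopt)$ is read off from the Bayesian analysis: it is chosen so that the joint law of $(\fot,A_1)$ generated by $\soptequivo$ matches the $\ell\to\infty$ limit of the joint law of $(\subdataone,A_1)$ under the Bayes-optimal play, so that agent~1's $\mu_2$-risk under $\soptequivo$ equals the Bayes lower bound exactly.

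The main obstacle, and the heart of the lemma, is showing that such a $\muonefab$ exists: as emphasized in the proof sketch, a fixed choice such as $\muonefab=1.729$ can fail against a pathological mechanism that targets exactly that value. The Bayesian limit argument sidesteps this by allowing $\muonefab$ to depend on $\mech$ and $\sopt$; concretely, as $\ell\to\infty$ the predictive law of $\subfuncoptone(\initdataoneone,\emptyset)$ under $\Lambda_\ell$ converges weakly to that of $\subfuncoptone$ applied to an i.i.d.\ sample from $\Ncal(\muonefab,\sigma^2)$, with $\muonefab$ identified from the limiting posterior of $\muone$. Once the Bayes lower bound is attained by $\soptequivo$, the Nash equilibrium property gives the reverse inequality $\penalone(\mech,\sopt)\leq\penalone(\mech,(\soptequivo,\soptmii{1}))$, since $\soptequivo\in\Delta(\stratspace)$ is a valid deviation for agent~1. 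Sandwiching these two bounds forces equality of all three quantities, which simultaneously delivers the claimed identity of penalties and the decomposition displayed in the lemma.
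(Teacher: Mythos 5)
Your overall architecture matches the paper's: a Bayes lower bound on $\penalone(\mech,\sopt)$ via a diffuse prior $\Lambda_\tau=\Ncal(0,\tau^2)$ on $\muone$, a construction of $\soptequivo$ attaining it, the Nash-equilibrium upper bound, and a sandwich. The gap is in the step you yourself identify as the heart of the lemma: how the fabrication distribution is chosen. You claim that as $\ell\to\infty$ the predictive law of $\subfuncoptone(\initdataoneone,\emptyset)$ under $\Lambda_\ell$ converges weakly to that of $\subfuncoptone$ applied to an i.i.d.\ sample from $\Ncal(\muonefab,\sigma^2)$ for a single $\muonefab$ ``identified from the limiting posterior.'' This is false: under $\Lambda_\ell$ each point of $\initdataoneone$ is marginally $\Ncal(0,\sigma^2+\ell^2)$, so the predictive law spreads out and has no weak limit (take $\subfuncoptone$ to be the identity), and the limiting posterior of $\muone$ is centered at the data-dependent sample mean, so it does not single out any fixed $\muonefab$. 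Consequently your proposed $\soptequivo$ (fabricate from one fixed Gaussian) is not shown to attain the Bayes lower bound, and the sandwich does not close.

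The paper avoids this by never passing to a deterministic $\muonefab$. In its Step 2, the fabricated dataset is drawn from the full marginal $\marglbtau$ for a \emph{fixed finite} $\tau$ --- equivalently, the mixed strategy first draws a random $\muonefab\sim\Ncal(0,\tau^2)$ and then fabricates from $\Ncal(\muonefab,\sigma^2)$ --- which is legitimate because $\marglbtau$ is a known distribution independent of $\mu_1,\mu_2$ and of the other agents' data, so the sampling can be folded into the agent's own randomization. Two further ingredients you omit are then needed: Lemma~\ref{lem:lb_const_in_tau}, which shows the resulting $\mu_2$-risk term is constant in $\tau$, so the $\tau\to\infty$ limit in the lower bound can be replaced by any fixed $\tau$; and an $\varepsilon$-contradiction argument showing that the infimum over strategies is actually attained by some single $\soptequivo$ (a priori only $\varepsilon$-approximate minimizers exist). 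Without these, even a corrected fabrication scheme would not yield the exact equality the lemma asserts.
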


\vspace{0.1in}

We will organize the remainder of the proof into two steps:
\vspace{-0.1in}
\begin{enumerate}
\item[-] \textbf{Step 1. } Using Lemma~\ref{lem:lb_decomp}, we will show that 
in any Nash equilibrium $\sopt$,
agent 1 always collects $\sigma/\sqrt{\lbcost}$ points which is a factor $\sqrt{m}$ less
than the amount in $\nbarg$ (see~\eqref{eqn:lbconstructionbarg}),
that is $\PP_{\soptone}(\noneone\neq\sigma/\sqrt{\lbcost}) = 0$.
\item[-] \textbf{Step 2. } Consequently, we show that all other agents will incur 
a large penalty,
resulting in large social penalty.
\end{enumerate}

\parahead{Step 1}
Let $\nindoneone \defeq \sigma/\sqrt{\lbcost}$.
In this step we will show that $\PP_{\soptone}(\noneone\neq \nindoneone) = 0$.

Suppose for a contradiction that $\PP_{\sopt}(\noneone\neq \nindoneone) > 0$.
Let $\soptequivo$ be the strategy that Lemma~\ref{lem:lb_decomp} claims exists.
Now define the strategy $\sdeviateo$ to be the same as $\soptequivo$ except that agent 1
always collects $\nnooir$ points from distribution 1.
Concretely, in $\sdeviateo$, the agent first samples
$\rbr{(\noneone,0), \subfunco,(\estimoneone, \estimonetwo)}\sim\soptequivo$,
and then executes $\rbr{\rbr{\nnooir,0},\subfunco,(\estimoneone, \estimonetwo)}$. 
Let $\sdeviate\defeq\rbr{\sdeviateo,\soptmii{1}}$ be the strategy profile where all other
agents are following the recommended Nash strategies while agent 1 is following
$\sdeviateo$.
Using the fact that in $\sdeviateo$ the agent always
collects $\nindoneone$ points and uses the sample mean for $\estimoneone$
(Lemma~\ref{lem:lb_decomp}), we can write
\begingroup
\allowdisplaybreaks
\begin{align*} 
        \penalone\rbr{M,\sdeviate} 
        &=\supmuot
        \EEV{\sdeviate}
        \sbr{
            \EEalldata
            \sbr{
                \sum_{k\in\cbr{1,2}}\rbr{\estimiikk{1}{k}\rbr{\initdataii{1},\subfunco(\initdataii{1}),A_1}-\mu_k}^2
            }
            +\lbcost\noneone
        }
        \\ 
        &=
        \supmuot
        \EEV{\sdeviate}
        \sbr{
            \EEalldata
            \sbr{
                \rbr{
                    \estimot
                    \rbr{
                        \initdataone,\subfunco(\initdataone),A_1
                    }
                    -\mu_2
                }^2
            }
        }
        +\frac{\sigma^2}{\nnooir}
        +\lbcost\nnooir .
\end{align*}
\endgroup
Next, Lemma~\ref{lem:lb_decomp} also tells us that in $\soptequivo$, and hence in
$\sdeviateo$, neither $\subfunco$ nor $\estimonetwo$ depends on the actual dataset
$\initdataone = (\initdataoneone, \emptyset)$ agent 1 has collected.
Hence, we can replace $\initdataone$ with 
any $D\in\dataspace$
and drop the
dependence on $\initdataone$ and consequently $\muone$ in the third term of the RHS above.
This leads us to,
\begin{equation*} 
        \penalone\rbr{M,\sdeviate} 
        =\supmut\EEV{\sdeviate}\sbr{\EEalldatamo\sbr{\rbr{\estimot\rbr{
        D
        ,\subfunco(
        D
        ),A_1}-\mu_2}^2}} 
        +\frac{\sigma^2}{\nnooir}+\lbcost\nnooir .
\end{equation*}
Next, by property 1 of $\soptequivo$ in Lemma~\ref{lem:lb_decomp}, 
and our assumption (for the contradiction),
we have $\PP_{\soptequivo}(\noneone\neq \nindoneone)  = \PP_{\sopt}(\noneone\neq
\nindoneone) > 0$.
Noting that $\frac{\sigma^2}{x} + \lbcost x$ is strictly convex in $x$
and is minimized by $x = \nindoneone = \sigma/\sqrt{\lbcost}$, we have
    \begin{align*} 
        \penalone\rbr{M,\rbr{\sdeviateo,\soptmo}}
        &=\rbr{\frac{\sigma^2}{\nnooir}+\lbcost\nnooir}
        +\supmut\EEV{\rbr{\sdeviateo,\soptmo}}\sbr{\EEalldatamo\sbr{\rbr{\estimot\rbr{D,\subfunco(D),A_1}-\mu_2}^2}} 
        \\ 
        &<\EEV{\soptequivo}\sbr{\frac{\sigma^2}{\noneone}+\lbcost\noneone}
        +\supmut\EEV{\rbr{\sdeviateo,\soptmo}}\sbr{\EEalldatamo\sbr{\rbr{\estimot\rbr{D,\subfunco(D),A_1}-\mu_2}^2}} 
        \\ 
        &=\EEV{\soptequivo}\sbr{\frac{\sigma^2}{\noneone}+\lbcost\noneone}
        +\supmut\EEV{(\soptequivo,
        \soptmi)}\sbr{\EEalldatamo\sbr{\rbr{\estimot\rbr{D,\subfunco(D),A_1}-\mu_2}^2}} .
    \end{align*} 
Above, the final step simply observes that $\sdeviate = (\sdeviateo, \soptmii{1})$
and that $\subfunco$ and $\estimonetwo$ are almost surely equal in both $\sdeviateo$ and
$\soptequivo$.
By Lemma~\ref{lem:lb_decomp}, the last expression above is equal to
$\penalone(M, \sopt)$, which gives us
$\penalone(M, (\sdeviateo, \soptmo)) <  \penalone(M, \sopt)$
which contradicts the fact that $\sopt$ is a Nash equilibrium, as agent 1 can improve their
penalty by switching to $\sdeviateo$. Therefore, we conclude
$\PP_{\soptone}(\noneone\neq \nindoneone) = 0$.

\parahead{Step 2}
We will now show that agents $2\cdots m$ will have a large penalty, which in turn
results in large social penalty.
Consider the penalty $\penali(M, \sopt)$ for any agent $i\in\{2,\dots,m\}$
under $\sopt$.
Recall that $\PP_{\sopti}(\nione=0) = 1$ for all such agents (otherwise their individual
penalties, and hence the social penalty would be infinite).
We can write,
\begingroup
\allowdisplaybreaks
\begin{align*} 
        \penali\rbr{M,\sopt} 
        &=\supmuot
        \EEV{\sopt}
        \sbr{
            \EEalldata
            \sbr{
                \sum_{k\in\cbr{1,2}}
                \rbr{
                    \estimiikk{i}{k}\rbr{\initdatai,\subfunci(\initdatai),\alloci}-\mu_k
                }^2
            }
            +\lbcost\nitwo
        }
        \\ 
        &\geq
        \supmuot\EEV{\sopt}\sbr{\EEalldata\sbr{\rbr{\estimione\rbr{\initdatai,\subfunci(\initdatai),\alloci}-\mu_1}^2}} 
        \\
        &\geq\frac{\sigma^2}{\nindoneone} = \sigma\sqrt{\lbcost}.
\end{align*}
\endgroup
Above we have dropped some non-negative terms in the second step and then 
in the third step used the fact
that we cannot do better than the (minimax optimal) sample mean of agent 1's submission
$\initdataoneone$ to estimate $\muone$. 
Similarly, agent 1's penalty will be at least $\sigma\sqrt{\lbcost}$ as she collects
$\sigma/\sqrt{\lbcost}$ points.

Therefore, the social penalty $\penali(M, \sopt)$ is at least
$m\sigma\sqrt{\lbcost}$.
On the other hand, from~\eqref{eqn:lbconstructionpenalties}, we have,
$\socpenal(\mechsm, \obedientstrategies) = 4\sigma\sqrt{m\gamma}$.
This gives us
\[
\frac{\socpenal(M, \sopt)}{\socpenal(\mechsm, \obedientstrategies)} \geq 
\frac{\sqrt{m}}{4},
\]
for any mechanism $\mech$ and Nash equilibrium $\sopt$ of $\mech$.
\end{proof}

It is worth mentioning that our anlaysis in step 2 to bound the social penalty at $(M,
\sopt)$ is quite loose in the constants as we have dropped the penalty due to the
estimation error and data collections of the second distribution.

\subsubsection{Proof of Lemma~\ref{lem:lb_decomp}}
\label{app:proof_lb_decomp}

    At a high level, Lemma~\ref{lem:lb_decomp} says that we can decompose agent 1's penalty into
two parts, one for each distribution. The first term is the sum of the estimation cost for
distribution 1 plus to cost to collect the data from distribution 1. The second term is
the estimation error for the second distribution based on the information set provided by the
mechanism. Notice that the second term only involves a supremum over $\mu_2$ and an
expectation over the data drawn from distribution 2 as opposed to a supremum over both of
$\mu_1,\mu_2$ and an expectation over all of the data. This is because the submission and
estimation functions sampled from $\soptequiv$ will be constant with respect to $\mu_1$ and
the data drawn from distribution 1.

\begin{proof}[Proof of Lemma \ref{lem:lb_decomp}]
We will break this proof down into four steps:
    
    \begin{enumerate}
        \item[-] \textbf{Step 1. } Using a sequence of priors on $\mu_1$, $\cbr{\Lambda_\tau}_{\tau\geq 1}$, we will construct a sequence of lower
        bounds on $p_1(M,s^\star)$. Furthermore, the limit as $\tau\rightarrow\infty$ of these lower bounds will also be a lower bound on $p_1(M,s^\star)$.
        \item[-] \textbf{Step 2. } We will show that this lower bound is achieved for a careful choice of $\soptequiv=\rbr{\soptequivo,\soptmo}$ which satisfies the properties listed in Lemma~\ref{lem:lb_decomp}.
        
        \item[-] \textbf{Step 3. } Using the fact that $\sopto$ is the optimal strategy for agent 1 (assuming the other agents follow $\soptmo$), we can upper bound $p_1(M,\sopt)$ by $p_1\rbr{M,\rbr{\soptequivo,\soptmo}}$.
        \item[-] \textbf{Step 4. } We will show that our upper bound for $p_1(M,\sopt)$ matches our lower bound for $p_1(M,\sopt)$. Therefore, $p_1(M,\sopt)$ equals our upper/lower bound.
    \end{enumerate}

    \paragraph{Step 1.} 
    For the sake of convenience, let $\somatchingnoopspace:=\cbr{\strategy\in\strategyspace : \nniikk{1}{1}=\recommendednoo}$.
    By the definition of $p_1(\mech,\sopt)$ and the fact that $\sopt$ is a Nash equilibrium, we have
    \begingroup
    \allowdisplaybreaks
    \begin{align*}
        p_1(\mech, \sopt)
        &=\sup_{\mu_1,\mu_2}\EEV{\sopt}
        \sbr{
            \EEalldata
            \sbr{
                \sum_{k\in\cbr{1,2}}
                \rbr{
                    \estimiikk{1}{k}(\initdataii{1},\subfunco(\initdataii{1}),\allocii{1})-\mu_k
                }^2
            }
            +\gamma\nnoo
        }
        \\
        &=\inf_{\strategyo\in\somatchingnoopspace}\sup_{\mu_1, \mu_2}
        \EEV{
            \rbr{
                \strategyo,\soptmo
            }
        }
        \sbr{
            \EEalldata
            \sbr{
                \sum_{k\in\cbr{1,2}}
                \rbr{
                    \estimiikk{1}{k}(\initdataii{1},\subfunco(\initdataii{1}),\allocii{1})-\mu_k
                }^2
            }
            +\gamma\nnoo
        }.
        \label{eq:lb-pdefinf}\numberthis
    \end{align*}
    \endgroup
    If the second equality did not hold, then agent 1 could deviate from $\sopto$ by choosing a strategy with a different distribution over the set of submission/estimation functions and achieve a lower penalty (which would contradict that $\sopt$ is a Nash equilibrium). 
    
    We will now construct a sequence of priors parameterized by $\tau\geq 1$ and use them to lower bound~\eqref{eq:lb-pdefinf} and thus $p_1(\mech, \sopt)$. Let $\Lambda_\tau\defeq\Ncal(0,\tau^2)$. Since the supremum is larger than the average we have
    \begin{align*}
        &~~~~~\inf_{\strategyo\in\somatchingnoopspace}\sup_{\mu_1,\mu_2}
        \EEV{\rbr{\strategyo,\soptmo}}
        \sbr{
            \EEalldata
            \sbr{
                \sum_{k\in\cbr{1,2}}
                \rbr{
                    \estimiikk{1}{k}(\initdataii{1},\subfunco(\initdataii{1}),\allocii{1})-\mu_k
                }^2
            }
            +\gamma\nnoo
        }
        \\
        &\geq \inf_{\strategyo\in\somatchingnoopspace}
        \sup_{\mu_2}\EEV{\mu_1\sim\Lambda_\tau}
        \sbr{
            \EEV{\rbr{\strategyo,\soptmo}}
            \sbr{
                \EEalldata
                \sbr{
                    \sum_{k\in\cbr{1,2}}
                    \rbr{
                        \estimiikk{1}{k}(\initdataii{1},\subfunco(\initdataii{1}),\allocii{1})-\mu_k
                    }^2
                }
                +\gamma\nnoo
            }
        }.
        \label{eq:lb_using_prior}\numberthis
    \end{align*}
    We can switch the order of the expectation for $\mu_1$ and $\initdataii{1}$ by rewriting $\mu_1$ conditioned on the marginal distribution of $\initdataii{1}$. For convenience, let $\marglbtau$ denote the marginal distribution for $\initdataii{1}$. Changing the order,~\eqref{eq:lb_using_prior} becomes
    \begin{align*}
        &\inf_{\somatchingnoopabrev}
        \sup_{\mu_2}\EEV{\rbr{\strategyo,\soptmo}}
        \sbr{
            \EEalldatamo
            \sbr{
                \EEV{\initdataii{1}\sim\marglbtau}
                \sbr{
                    \EEV{\mu_1}
                    \sbr{
                        \sum_{k\in\cbr{1,2}}
                        \rbr{
                            \estimiikk{1}{k}(\initdataii{1},\subfunco(\initdataii{1}),\allocii{1})-\mu_k
                        }^2 
                        \bigg| \initdataii{1}
                    }
                }
            }
            +\gamma\nnoo
        }.
        \label{eq:lb_expectation_switched} \numberthis
    \end{align*}
    Consider the term
$\EEV{\mu_1}\sbr{\rbr{\estimoo(\initdataii{1},\subfunco(\initdataii{1}),\allocii{1})-\mu_1}^2|
\initdataii{1}}$. It
is well known~\citep{lehmann2006theory} that for a normal distribution, the Bayes' risk is minimized
by  
the Bayes estimator and
and that the Bayes' risk under this estimator is simply the posterior
variance, i.e.
    \begin{align*}
        \EEV{\mu_1}\sbr{\rbr{\estimoo(\initdataii{1},\subfunco(\initdataii{1}),\allocii{1})-\mu_1}^2| \initdataii{1}}
        =\frac{\sigma^2}{\nnoo+\frac{\sigma^2}{\tau^2}} \;.
    \end{align*}
    Following this observation, we rewrite~\eqref{eq:lb_expectation_switched} as
    \begingroup
    \allowdisplaybreaks
    \begin{align*}
        &\inf_{\somatchingnoopabrev}\sup_{\mu_2}
        \EEV{
            \rbr{
                \strategyo,\soptmo}
            }
            \sbr{
                \EEalldatamo\sbr{
                    \EEV{\initdataii{1}\sim\marglbtau}\sbr{\EEV{\mu_1}\sbr{\frac{\sigma^2}{\nnoo+\frac{\sigma^2}{\tau^2}}+\rbr{\estimot(\initdataii{1},\subfunco(\initdataii{1}),\allocii{1})-\mu_2
                        }^2 
                    \bigg| \initdataii{1}
                    }
                }
            }
            +\gamma\nnoo
        }
        \\
        &=
        \inf_{\somatchingnoopabrev}\sup_{\mu_2}
        \EEV{\rbr{\strategyo,\soptmo}}
        \sbr{
                \EEalldatamo\sbr{\EEV{\initdataii{1}\sim\marglbtau}\sbr{\EEV{\mu_1}\sbr{\rbr{\estimot(\initdataii{1},\subfunco(\initdataii{1}),\allocii{1})-\mu_2}^2 
                \bigg| \initdataii{1}
            }}}
            +\frac{\sigma^2}{\nnoo+\frac{\sigma^2}{\tau^2}}
            +\gamma\nnoo
        } .
        \label{eq:lb_const_mu1} \numberthis
    \end{align*}
    \endgroup
    We now observe that when we condition on the 
    data from distribution 1,
    $\rbr{\estimot(\initdataii{1},\subfunco(\initdataii{1}),\allocii{1})-\mu_2}^2 ~|~ \initdataii{1}$
    is constant with respect to $\mu_1$. This allows us to drop the expectation over $\mu_1$ which let us rewrite~\eqref{eq:lb_const_mu1} as
    \begin{align*}
        &\inf_{\somatchingnoopabrev}
        \sup_{\mu_2}
        \EEV{\rbr{\strategyo,\soptmo}}
        \sbr{
            \EEalldatamo
            \sbr{
                \EEV{\initdataii{1}\sim\marglbtau}
                \sbr{
                    \rbr{
                        \estimot(\initdataii{1},\subfunco(\initdataii{1}),\allocii{1})-\mu_2
                    }^2
                }
            }
            +\frac{\sigma^2}{\nnoo+\frac{\sigma^2}{\tau^2}}
            +\gamma\nnoo
        } .
        \label{eq:lb_before_limit}\numberthis
    \end{align*}
    If we take the limit as $\tau\rightarrow\infty$ of~\eqref{eq:lb_before_limit}, using monotone convergence and that $\somatchingnoopabrev$, we obtain the following lower bound on equation $p_1(\mech,\sopt)$ 
    \begin{align*}
        p_1(\mech,\sopt) & \geq\EEV{\sopto}\sbr{\frac{\sigma^2}{\nnoo}+\gamma\nnoo}
        \\ &\hspace{0.1in} +\lim_{\tau\rightarrow\infty}\inf_{\somatchingnoopabrev}\sup_{\mu_2}\EEV{\rbr{\strategyo,\soptmo}}\sbr{\EEalldatamo\sbr{\EEV{\initdataii{1}\sim\marglbtau}\sbr{\rbr{\estimot(\initdataii{1},\subfunco(\initdataii{1}),\allocii{1})-\mu_2}^2}}} .
        \label{eq:lb_post_taking_limit}\numberthis
    \end{align*}
    \paragraph{Step 2.}
    Lemma \ref{lem:lb_const_in_tau} shows that the right hand term in~\eqref{eq:lb_post_taking_limit} is constant in $\tau$ which allows us to drop the limit. Additionally, Lemma \ref{lem:lb_const_in_tau}, and the remarks that precede its proof, describe and show how sampling $\initdataii{1}$ from $\marglbtau$ can be incorporated into the strategy $\strategyo$ because $\marglbtau$ a probability distribution that does not depend on $\mu_1,\mu_2$ or $\initdataii{2},\ldots,\initdataii{m}$. Therefore, the submission/estimation functions used in estimating distribution 2 do not depend on the data collected from distribution 1. Thus, without loss of generality we can write the lower bound in~\eqref{eq:lb_post_taking_limit} as 
    \begin{align*}
        \EEV{\sopto}\sbr{\frac{\sigma^2}{\nnoo}+\coo\nnoo}
        +\underbrace{\inf_{\somatchingnoopabrev}\sup_{\mu_2}\EEV{\rbr{\strategyo,\soptmo}}\sbr{\EEalldatamo\sbr{\rbr{\estimot(\placeholderdata,\subfunco(\placeholderdata),\allocii{1})-\mu_2}^2}}}_{L\defeq}
        \label{eq:lb_lb_without_L} \numberthis
    \end{align*}
    where $D\in\dataspace$ is any dataset. Therefore, we have
    \begin{align*}
        p_1(\mech,\sopt)\geq \EEV{\sopto}\sbr{\frac{\sigma^2}{\nnoo}+\gamma\nnoo}+L \;.
        \label{eq:lb_lb_with_L} \numberthis
    \end{align*}
    However, a priori, we do not know that there exists a single strategy $\somatchingnoopabrev$ which actually achieves estimation error of $L$ on distribution 2 (i.e. the error corresponding to plugging the strategy into the right hand term of~\eqref{eq:lb_lb_without_L}). Suppose for a contradiction that this is not the case. We know that for all $\varepsilon>0$, we can find a strategy whose estimation error for distribution 2 comes within $\varepsilon$ of $L$. Furthermore, for each of these strategies, the distribution over $\estimoo$ does not affect the estimation error of distribution 2 since the right hand term does not involve $\estimoo$. Therefore, we can suppose that these strategies all select $\estimoo^\varepsilon=\estimsmkk{1}(\initdataii{1})$.
    More concretely, $\forall\varepsilon>0,~ \exists \strategyoe\in\strategyspace'$ such that  
    $\estimoo^\varepsilon=\estimsmkk{1}(\initdataii{1})$
    and
    \begin{align*}
         \sup_{\mu_2}\EEV{\rbr{\strategyoe,\soptmo}}\sbr{\EEalldatamo\sbr{\rbr{\estimot(\placeholderdata,\subfunco(\placeholderdata),\allocii{1})-\mu_2}^2}}<L+\varepsilon \; .
    \end{align*}
    If we plug the strategy $\rbr{\strategyoe,\soptmo}$ into equation~\eqref{eq:lb-pdefinf} we obtain the following upper bound on $p_1(\mech, \sopt)$
    \begingroup
    \allowdisplaybreaks
    \begin{align*}
        p_1(\mech, \sopt)&\leq 
        \sup_{\mu_1,\mu_2}\EEV{\rbr{\strategyoe,\soptmo}}
        \sbr{
            \EEalldata
            \sbr{
                \sum_{k\in\cbr{1,2}}
                \rbr{
                    \estimiikk{1}{k}(\initdataii{1},\subfunco(\initdataii{1}),\allocii{1})-\mu_k
                }^2
            }
        }
        +\EEV{\strategyoe}\sbr{\gamma\nnoo}
        \\
        &=\EEV{\strategyoe}\sbr{\frac{\sigma^2}{\nnoo}}
        +\sup_{\mu_2}\EEV{\rbr{\strategyoe,\soptmo}}\sbr{\EEalldatamo\sbr{\rbr{\estimot(\placeholderdata,\subfunco(\placeholderdata),\allocii{1})-\mu_2}^2}}+\EEV{\strategyoe}\sbr{\gamma\nnoo}
        \\
        &=\EEV{\sopto}\sbr{\frac{\sigma^2}{\nnoo}+\gamma\nnoo}
        +\sup_{\mu_2}\EEV{\rbr{\strategyoe,\soptmo}}\sbr{\EEalldatamo\sbr{\rbr{\estimot(\placeholderdata,\subfunco(\placeholderdata),\allocii{1})-\mu_2}^2}}
        \\
        &=\EEV{\sopto}\sbr{\frac{\sigma^2}{\nnoo}+\gamma\nnoo}+L+\varepsilon
        \; .
        \label{eq:lb_ub_with_L}\numberthis
    \end{align*}
    \endgroup
    As we are supposing that the lower bound in inequality~\eqref{eq:lb_lb_with_L} is not actually achieved for any $\somatchingnoopabrev$, the inequality is strict. Combining inequalities~$\eqref{eq:lb_lb_with_L}$ and~\eqref{eq:lb_ub_with_L}, we get
    \begin{align*}
        \EEV{\sopto}\sbr{\frac{\sigma^2}{\nnoo}+\gamma\nnoo}+L+\varepsilon>p_1(\mech,\sopt)>\EEV{\sopto}\sbr{\frac{\sigma^2}{\nnoo}+\gamma\nnoo}+L\; .
    \end{align*}
    However, this yields a contradiction since these inequalities hold $\forall\varepsilon>0$. Therefore, we conclude that $\exists \big(\nnt_1,\fot,\widetilde{h}_1\big)=\soptequivo\in\strategyspace'$ such that 
    \begin{equation*}
        \sup_{\mu_2}\EEV{\rbr{\soptequivo,\soptmo}}\sbr{\EEalldatamo\sbr{\rbr{\estimot(\placeholderdata,\subfunco(\placeholderdata),\allocii{1})-\mu_2}^2}}=L \;.
    \end{equation*}
    Furthermore, for this strategy, the distribution over $\estimoo$ does not affect the above term (the
estimation error of distribution 2) since $\hoo$ doesn't appear in the expression. Therefore, we can suppose that $\soptequivo$ always selects
$\hoot=\estimsmkk{1}(\initdataii{1})$.
Also observe that since we were able to replace $\initdataii{1}$ with $\placeholderdata$ in step 2, we have that $\fot,\hott$ are constant with respect to $\mu_1$ and $\initdataii{1}$.
    Additionally, since $\soptequivo\in\strategyspace'$ we know that 
    $\nniikkt{1}{1}=\noptoneone$.
    Therefore, $\soptequivo$ satisfies our three desired properties which are listed below.

    \begin{enumerate}
        \item $\nniikkt{1}{1}=\noptoneone$.
        \qquad 
        2. $\hoot=\estimsmkk{1}(\initdataii{1})$. 
        \qquad 
        3. $\hott$ is constant with respect to $\mu_1$ and $\initdataii{1}$.

    \end{enumerate}
    Let $\soptequiv$ be defined as $\soptequiv\defeq\rbr{\soptequivo,\soptmo}$. We can now write the lower bound in equation~$\eqref{eq:lb_lb_with_L}$ as 
    \begin{align*}
        p_1(\mech,\sopt)\geq \EEV{\soptequivo}\sbr{\frac{\sigma^2}{\nnoo}+\gamma\nnoo}+
        \sup_{\mu_2}\EEV{\soptequiv}\sbr{\EEalldatamo\sbr{\rbr{\estimot(\placeholderdata,\subfunco(\placeholderdata),\allocii{1})-\mu_2}^2}} .
        \label{eq:lb_final_lb} \numberthis
    \end{align*}
    \paragraph{Step 3.} We can now take $\soptequivo$, plug it into equation~\eqref{eq:lb-pdefinf}, and use properties two and three of $\soptequivo$ to get the following upper bound on $p_1(\mech,\sopt)$
    \begin{align*}
         p_1(\mech,\sopt)
         &\leq \sup_{\mu_1,\mu_2}\EEV{\soptequiv}
         \sbr{
             \EEalldata
             \sbr{
                \sum_{k\in\cbr{1,2}}
                \rbr{
                     \estimiikk{i}{k}(\initdataii{1},\subfunco(\initdataii{1}),\allocii{1})-\mu_k
                }^2
            }
        }
        +\EEV{\soptequivo}\sbr{\gamma\nnoo}
         \\
         &=\EEV{\soptequivo}\sbr{\frac{\sigma^2}{\nnoo}+\gamma\nnoo}+
        \sup_{\mu_2}\EEV{\soptequiv}\sbr{\EEalldatamo\sbr{\rbr{\estimot(\placeholderdata,\subfunco(\placeholderdata),\allocii{1})-\mu_2}^2}}
        \; .
        \label{eq:lb_final_ub} \numberthis
    \end{align*}
    \paragraph{Step 4.} But notice that~\eqref{eq:lb_final_lb} and~\eqref{eq:lb_final_ub} provide matching lower and upper bounds for $p_1(\mech,\sopt)$. We can therefore conclude that
    \begin{align*}
        p_1(\mech,\sopt)=\EEV{\soptequivo}\sbr{\frac{\sigma^2}{\nnoo}+\gamma\nnoo}+
        \sup_{\mu_2}\EEV{\soptequiv}\sbr{\EEalldatamo\sbr{\rbr{\estimot(\placeholderdata,\subfunco(\placeholderdata),\allocii{1})-\mu_2}^2}}
        \; .
    \end{align*}
    This completes the proof.
\end{proof}

\begin{lemma}
    \label{lem:lb_const_in_tau} 
    The expression introduced in Step 1. of Lemma \ref{lem:lb_decomp} and given below is constant in $\tau$.
    \begin{equation*}
        z^1(\tau):=\inf_{\somatchingnoopabrev}\underbrace{\sup_{\mu_2}\EEV{\rbr{\strategyo,\soptmo}}\sbr{\EEalldatamo\sbr{\EEV{\initdataii{1}\sim\marglbtau}\sbr{\rbr{\estimot(\initdataii{1},\subfunco(\initdataii{1}),\allocii{1})-\mu_2}^2}}}}_{z^2(\tau,\strategyo):=}
    \end{equation*}
\end{lemma}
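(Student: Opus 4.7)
The plan is to show $z^1(\tau_1) = z^1(\tau_2)$ for any $\tau_1, \tau_2 > 0$ via two matching inequalities, by means of a strategy-substitution argument. The driving observation, already hinted at in the preamble to the lemma, is that $\marglbtau$ is a fixed probability distribution on $\dataspace$ that depends only on $\tau$, $\sigma$, and on the fixed collection count $\nniikk{1}{1} = \recommendednoo$; in particular it does not depend on $\mu_1$, $\mu_2$, or on the data or strategies of agents $2, \dots, m$. Hence sampling from $\marglbtau$ can be absorbed into the agent's own mixed-strategy randomness without perturbing the rest of the game.

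First I would fix $\tau_1, \tau_2 > 0$ and an arbitrary $\strategyo \in \somatchingnoopspace$, and let $Q_1, Q_2$ denote $\marglbtau$ at $\tau = \tau_1$ and $\tau = \tau_2$ respectively. I then construct $\strategyo' \in \somatchingnoopspace$ as the mixed strategy that first samples $\rbr{\nniikkt{1}{1},\subfunco,\rbr{\estimoo,\estimot}} \sim \strategyo$ together with a fresh independent dummy dataset $D \sim Q_1$, and then executes the pure strategy with collection count $\nniikkt{1}{1} = \recommendednoo$, constant submission function $\subfunco'(\cdot) \equiv \subfunco(D)$, and constant-in-data estimator $\estimot'(\cdot,\cdot,A) \equiv \estimot(D, \subfunco(D), A)$. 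The first-coordinate estimator $\estimoo'$ is immaterial since it does not appear in $z^2$, and $\strategyo'$ still belongs to $\somatchingnoopspace$ because the collection count is preserved.

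Next I would evaluate $z^2(\tau_2, \strategyo')$ and show it equals $z^2(\tau_1, \strategyo)$. Because $\subfunco'$ and $\estimot'$ do not use their data argument, the inner expectation $\EEV{\initdataii{1} \sim Q_2}$ in the definition of $z^2(\tau_2, \strategyo')$ collapses, as the integrand is constant in $\initdataii{1}$. The surviving randomness is that of $\strategyo$, the dummy $D \sim Q_1$, and the data of agents $2, \dots, m$; and $\allocii{1}$ depends on agent 1 only through her submission $\subfunco(D)$. This joint distribution coincides exactly with the one appearing in $z^2(\tau_1, \strategyo)$, with $D$ playing the role of $\initdataii{1} \sim Q_1$, so the two expressions are equal. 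Taking the infimum over $\strategyo$ yields $z^1(\tau_2) \leq z^1(\tau_1)$; by symmetry (swap $\tau_1 \leftrightarrow \tau_2$) the reverse inequality also holds, giving $z^1(\tau_1) = z^1(\tau_2)$.

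The main obstacle will be bookkeeping: carefully verifying that (i) the mechanism's allocation $\allocii{1}$ depends on agent 1 only through her submitted dataset, so replacing her internal data with $D$ affects $\allocii{1}$ only via the change in submission; (ii) the dummy $D$ is independent of the data and strategies of agents $2, \dots, m$, which is automatic since the external randomness used to define $\strategyo'$ is drawn fresh; and (iii) a Fubini-style interchange lets us swap the sampling of $D$ with the expectation over the other agents' data. None of these steps is individually deep, but keeping straight the distinction between ``$\initdataii{1} \sim \marglbtau$ taken inside the evaluation of $z^2$'' and ``$D \sim \marglbtau$ drawn as part of a mixed strategy'' will demand careful formalism, particularly because the ``$\inf$'' in $z^1$ is over $\strategyo \in \somatchingnoopspace \subset \Delta(\strategyspace)$ rather than over pure strategies.
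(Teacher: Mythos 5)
Your proposal is correct and is essentially the paper's own argument: both rest on absorbing the sampling of $\initdataii{1}\sim\marglbtau$ into agent 1's mixed-strategy randomness (as a dummy dataset drawn independently of $\mu$, the other agents, and the mechanism) and then making the submission and estimator constant in the actual data argument, so that the value of $\tau$ used inside $z^2$ becomes irrelevant. The only cosmetic difference is that you argue directly via two matching inequalities applied to an arbitrary strategy, whereas the paper argues by contradiction along an infimizing sequence; the substance is identical.
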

\begin{remark}
    For convenience, we have defined the quantities $z^1(\tau)$ and $z^2(\tau,\strategyo)$ above. We can think of $z^1(\tau)$ as the best possible estimation error agent 1 can achieve on distribution 2 for a particular $\tau$ and any choice of $\somatchingnoopabrev$. $z^2(\tau,\strategyo)$ can be thought of as the estimation error agent 1 achieves on distribution 2 for a particular $\tau$ and choice of strategy $\somatchingnoopabrev$.
\end{remark}

\begin{remark}
    Here is an intuitive justification for why we should believe Lemma \ref{lem:lb_const_in_tau} is true. For any choice of $\tau$, $\marglbtau$ is just some probability distribution that does not depend on $\mu_1,\mu_2$ or the data drawn from these distributions. If $z^1(\tau_a)<z^1(\tau_b)$ for two different values of of tau, it means that we were able to achieve a lower estimation error by using $\initdataii{1}\sim\marglb{\tau_a}$ than using $\initdataii{1}\sim\marglb{\tau_b}$. However, because agent 1 can use a randomized strategy, they could draw and use data from $\marglb{\tau_a}$ while ignoring any data drawn from $\marglb{\tau_b}$. Therefore, the error we achieve should not be subject to the value of $\tau$. We now use this idea to prove Lemma \ref{lem:lb_const_in_tau}.
\end{remark}

\begin{proof}[Proof of Lemma \ref{lem:lb_const_in_tau}]
    Suppose for a contradiction that $\exists\tau_a\neq\tau_b$ such that $z^1(\tau_a)<z^1(\tau_b)$. A priori we do not know that $z^1(\tau_a)$ is achieved by a single choice of $\somatchingnoopabrev$ so let $\rbr{\strategyano}_{n\in\NN}$ be a sequence such that
    $\lim_{n\rightarrow\infty}z^2\rbr{\tau_a,\strategyano}=z^1\rbr{\tau_b}$. But using this sequence, we can construct a different sequence $(\strategybno)_{n\in\NN}$ defined via $(\nnobn,\subfuncobn,h_{1}^{b,n})\sim \strategybno$ such that
    \begin{gather*}
        (\nnobn,\subfuncobn,h_{1}^{b,n})
        =\bigg(
            \nno^{a,n},~\subfuncoan\rbr{\simulateddata},
            \Big(
                \estimooan,~\estimotan
                \big(
                    \simulateddata,\subfuncoan\rbr{\simulateddata},A_1
                \big)
            \Big)
        \bigg)
        \\
        \text{where}\quad \rbr{\nnoan,\subfuncoan,\estimoan}\sim \strategyano\quad
        \text{and}\quad \simulateddata\sim\marglb{\tau_a}
    \end{gather*}
    This construction samples a strategy from $\strategyano$ and then uses data sampled from $\marglb{\tau_a}$ in place of the data sampled from $\marglb{\tau_b}$. But notice that via this construction
    \begin{align*}
        z^2(\tau_a,\strategybno)
        &=\sup_{\mu_2}\EEalldatamo\sbr{\EEV{\rbr{\strategybno,\soptmo}}\sbr{\EEV{\initdataii{1}\sim\marglb{\tau_b}}\sbr{\rbr{\estimot\rbr{\initdataii{1},\subfunco{\initdataii{1}},A_1}-\mu_2}^2}}}
        \\
        &=\sup_{\mu_2}\EEalldatamo\sbr{\EEV{\rbr{\strategyano,\soptmo}}\sbr{\EEV{\simulateddata\sim\marglb{\tau_a}}\sbr{\EEV{\initdataii{1}\sim\marglb{\tau_b}}\sbr{\rbr{\estimot\rbr{\simulateddata,\subfunco{\simulateddata},A_1}-\mu_2}^2}}}}
        \\
        &=\sup_{\mu_2}\EEalldatamo\sbr{\EEV{\rbr{\strategyano,\soptmo}}\sbr{\EEV{\simulateddata\sim\marglb{\tau_a}}\sbr{\rbr{\estimot\rbr{\simulateddata,\subfunco{\simulateddata},A_1}-\mu_2}^2}}}
        \\
        &=\sup_{\mu_2}\EEalldatamo\sbr{\EEV{\rbr{\strategyano,\soptmo}}\sbr{\EEV{\initdataii{1}\sim\marglb{\tau_a}}\sbr{\rbr{\estimot\rbr{\initdataii{1},\subfunco{\initdataii{1}},A_1}-\mu_2}^2}}}
        \\
        &=z^2(\tau_a,\strategyano)
    \end{align*}
    This implies that $\lim_{n\rightarrow\infty}z^2(\tau_b,\strategybno)=\lim_{n\rightarrow\infty}z^2(\tau_a,\strategyano)$. We can therefore say that $z^1(\tau_a)\geq z^1(\tau_b)$ which is a contradiction. Therefore, $z^1$ is constant in $\tau$.
\end{proof}
\section{Technical Results}
\label{app:technicalresults}

This section is comprised of technical results that are not relevant to the ideas of the proofs they are used in.

\begin{lemma}
    \label{erfc-lb}
    For all $x>0$
    \begin{align*}
        \textup{Erfc}(x)\geq\frac{1}{\sqrt{\pi}}\left(\frac{\exp(-x^2)}{x}-\frac{\exp(-x^2)}{2x^3}\right)
    \end{align*}
\end{lemma}
\begin{proof}
    First set 
    \begin{align*}
        g(x):=
        \textup{Erfc}(x)
        -\frac{1}{\sqrt{\pi}}\left(\frac{\exp(-x^2)}{x}-\frac{\exp(-x^2)}{2x^3}\right) .
    \end{align*} 
    We find that $g'(x)=\frac{-3\exp(-x^2)}{4x^4}<0$. Furthermore, since
    \begin{align*}
        \lim_{x\rightarrow\infty}
        \rbr{
            \textup{Erfc}(x)
            -\frac{1}{\sqrt{\pi}}\left(\frac{\exp(-x^2)}{x}-\frac{\exp(-x^2)}{2x^3}\right)
        }=0
    \end{align*}
    we conclude that $g(x)\geq 0$.
\end{proof}

\begin{lemma}
    \label{lem:tech_tprimek-gt-2nir} 
    Let $\nprimeik$ be the value of~ $\nnikt$ after the first while loop but before the second in Algorithm~\ref{alg:compute-ne}. Also let $\totalk=\sum_{i=1}^m\nprimeik$ then
    \begin{align*}
        \frac{\sigma^2}{\nnikir}+\ccik\nnikir> 4\rbr{\frac{\sigma^2}{\sum_{j=1}^m \nprimeiikk{j}{k}}+\ccik\nprimeik}\Rightarrow \totalk>2\nnikir
    \end{align*}
\end{lemma}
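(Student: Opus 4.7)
The plan is to exploit the explicit closed form of the individual-rational data count, namely $\nnikir = \sigma/\sqrt{\costik}$ (established in~\eqref{eqn:penaliir}), which collapses the left-hand side of the hypothesis into a single clean expression. A direct substitution gives
\begin{equation*}
\frac{\sigma^2}{\nnikir}+\ccik\nnikir \;=\; \sigma\sqrt{\ccik}+\sigma\sqrt{\ccik} \;=\; 2\sigma\sqrt{\ccik}.
\end{equation*}
Plugging this into the assumed strict inequality and dividing by $4$ yields
\begin{equation*}
\frac{\sigma\sqrt{\ccik}}{2} \;>\; \frac{\sigma^2}{\totalk}+\ccik\nprimeik.
\end{equation*}

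The key (one-line) step is then to drop the non-negative term $\ccik\nprimeik$ from the right-hand side, which preserves the strict inequality and gives $\frac{\sigma\sqrt{\ccik}}{2} > \frac{\sigma^2}{\totalk}$. Rearranging,
\begin{equation*}
\totalk \;>\; \frac{2\sigma^2}{\sigma\sqrt{\ccik}} \;=\; \frac{2\sigma}{\sqrt{\ccik}} \;=\; 2\nnikir,
\end{equation*}
which is the desired conclusion.

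There is essentially no obstacle here: the lemma is a routine algebraic consequence of the hypothesis together with the explicit value of $\nnikir$. The only thing to verify carefully is that $\nprimeik \geq 0$ (so that discarding $\ccik\nprimeik$ is valid), which holds because $\nprimeik$ is a sample count produced by Algorithm~\ref{alg:compute-ne} and is initialized from $\nbarg \in \NN^{m\times d}$ and only ever reassigned to $\nnikir \geq 0$ during the first while loop. Consequently the lemma admits a two-line proof and does not require any structural property of the loops in Algorithm~\ref{alg:compute-ne} beyond non-negativity of the iterates.
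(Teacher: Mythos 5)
Your proof is correct and is essentially the same argument as the paper's: both reduce to $\frac{\sigma^2}{\nnikir}+\ccik\nnikir = 2\sigma\sqrt{\ccik}$ and drop the non-negative term $\ccik\nprimeik$. The only cosmetic difference is that the paper phrases it as a contrapositive ($\totalk\leq 2\nnikir$ implies the hypothesis fails), whereas you argue directly; the algebra is identical.
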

\begin{proof}
    We will prove the contrapositive. Since $\nnikir=\frac{\sigma}{\sqrt{\ccik}}$
    \begin{align*}
        4\rbr{\frac{\sigma^2}{\sum_{j=1}^m \nprimeiikk{j}{k}}+\ccik\nprimeik}
        &=4\rbr{\frac{\sigma^2}{\totalk}+\ccik\nprimeik}\\
        &\geq 4\rbr{\frac{\sigma^2}{2\nnikir}+\ccik\nprimeik}\\
        &\geq 2\frac{\sigma^2}{\nnikir}\\
        &=\frac{\sigma^2}{\nnikir}+\ccik\nnikir
    \end{align*}
    Therefore $\totalk>2\nnikir$.
\end{proof}

\begin{lemma}
    \label{lem:tech_nir2-gt-niks} 
    Suppose that $\nbarg$ satisfies~\eqref{eq:special-instance}.
    If $i\not\in \donatingagentsk$ then $\frac{\nnikir}{2}>\noptik$.
\end{lemma}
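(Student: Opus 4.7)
The plan is to split on whether the second while loop in Algorithm~\ref{alg:compute-ne} modifies $\nnikt$ or leaves it alone. In both cases, the key input will be that the first loop has already terminated with $i\notin\donatingagentsk$, so its entry condition must fail at $(i,k)$.

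Let $\nprime$ denote the value of $\nnt$ at the end of the first loop, and write $\totalk=\sum_j\nprimejk$. Since $i\notin\donatingagentsk$ we have $\nprimeik=\nbargik$, and the failure of the loop condition at $(i,k)$ reads
\begin{equation*}
2\sigma\sqrt{\ccik} \;=\; \frac{\sigma^2}{\nnikir}+\ccik\nnikir \;>\; 4\left(\frac{\sigma^2}{\totalk}+\ccik\nbargik\right).
\end{equation*}

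First, I would handle Case A, where the second loop leaves $\nnikt$ unchanged, so $\noptik=\nbargik$. Dropping the nonnegative $\sigma^2/\totalk$ term on the right immediately gives $\ccik\nbargik<\tfrac{1}{2}\sigma\sqrt{\ccik}$, i.e., $\nbargik<\sigma/(2\sqrt{\ccik})=\nnikir/2$, and hence $\noptik<\nnikir/2$. The degenerate subcase $\nbargik=0$ is immediate.

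Next, for Case B, where the second loop overwrites $\nnikt$ with $(\nnikir)^2/\totalk$ (so $\noptik=(\nnikir)^2/\totalk$), I would observe that the displayed inequality above is precisely the hypothesis of Lemma~\ref{lem:tech_tprimek-gt-2nir} applied at $(i,k)$, yielding $\totalk>2\nnikir$. Therefore $\noptik=(\nnikir)^2/\totalk<(\nnikir)^2/(2\nnikir)=\nnikir/2$.

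Since both cases give $\noptik<\nnikir/2$, the lemma follows. The main (mild) obstacle is purely bookkeeping: one has to keep straight the values of $\nnt$ and of $\sum_j\nniikkt{j}{k}$ at the different stages of Algorithm~\ref{alg:compute-ne}, and in particular note that $\sum_j\nniikkt{j}{k}$ evaluated at the end of the first loop coincides with $\totalk$. Once that is fixed, the argument reduces to a one-line algebraic manipulation in Case A and a direct appeal to the previous technical lemma in Case B. Interestingly, the hypothesis that $\nbarg$ satisfies~\eqref{eq:special-instance} does not appear to be needed for this particular statement; it is presumably included for consistency with the broader context in which this lemma is invoked.
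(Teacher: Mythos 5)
Your proof is correct and follows essentially the same route as the paper's: both split on whether the second while loop of Algorithm~\ref{alg:compute-ne} overwrites $\nnikt$, use the failure of the first loop's entry condition at $(i,k)$ to get $\nbargik<\nnikir/2$ in the unchanged case, and invoke Lemma~\ref{lem:tech_tprimek-gt-2nir} to get $\totalk>2\nnikir$ and hence $(\nnikir)^2/\totalk<\nnikir/2$ in the overwritten case. Your side observation that condition~\eqref{eq:special-instance} is not actually used is consistent with the paper's own argument, which likewise never invokes it.
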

\begin{proof}
   Let $\nnt$ be the corresponding vector returned in Algorithm \ref{alg:compute-ne}, and let $\nprimeik$ denote the value of $\nnt$ after the first while loop but before the second while loop in Algorithm \ref{alg:compute-ne}. Since $i\not\in\donatingagentsk$, we know that $\frac{\sigma^2}{\nnikir}+\ccik\nnikir> 4\rbr{\frac{\sigma^2}{\sum_{j=1}^m \nprimeiikk{j}{k}}+\ccik\nprimeik}$. We now consider two cases: 

   If $\nprimeik=\nnikt$: Since $\nnikir=\frac{\sigma}{\sqrt{\ccik}}$ we have that
   \begin{align*}
       2\nnikir\ccik=\frac{\sigma^2}{\nnikir}+\ccik\nnikir> 4\rbr{\frac{\sigma^2}{\sum_{j=1}^m \nprimeiikk{j}{k}}+\ccik\nprimeik}>4\ccik\nprimeik\quad \Rightarrow\quad \frac{\nnikir}{2}>\nnikt
   \end{align*}

   If $\nprimeik\neq\nnikt$: We know that $\nnikt$ was changed in loop two, otherwise $\nprimeik$ and $\nnikt$ would be equal since they are equal after loop one by definition. Because $\nnikt$ was updated in the second loop, we know that $\nnikt=\frac{(\nnikir)^2}{\totalk}$ where $\totalk=\sum_{i=1}^m\nprimeik$ by the definition of Algorithm \ref{alg:compute-ne}. Lemma \ref{lem:tech_tprimek-gt-2nir} tells us that $\totalk>2\nnikir$. Therefore $\nnikt=\frac{(\nnikir)^2}{\totalk}<\frac{\nnikir}{2}$.

    In both cases $\frac{\nnikir}{2}>\nnikt$. Therefore $\frac{\nnikir}{2}>\noptik$ as $\nopt$ is defined in Algorithm \ref{alg:multi_arm_mech} to be the value of $\nnt$ returned by Algorithm \ref{alg:compute-ne}.
   
\end{proof}

\begin{lemma}
    \label{lem:lose-less-than-half}
    Suppose that $\nbarg$ satisfies~\eqref{eq:special-instance}. Then $\frac{\nnikir}{\nbargik}>\frac{1}{2}$.
\end{lemma}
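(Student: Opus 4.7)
The plan is a direct two-line calculation using the favorable condition \eqref{eq:special-instance} applied at the pair $(i,k)$. First, I would substitute $\nnikir = \sigma/\sqrt{\ccik}$ into the left-hand side of \eqref{eq:special-instance}; both summands collapse to $\sigma\sqrt{\ccik}$, so the left-hand side equals exactly $2\sigma\sqrt{\ccik} = 2\ccik\nnikir$.

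Next I would lower-bound the right-hand side of \eqref{eq:special-instance}. Under the implicit assumption that $\nbargik > 0$ (which is what makes the stated ratio well-defined), the total count $\sum_{j=1}^m \nbargiikk{j}{k}$ is a positive finite integer, so the estimation-error term $\sigma^2/\sum_{j=1}^m \nbargiikk{j}{k}$ is strictly positive. Hence
\begin{equation*}
\frac{\sigma^2}{\sum_{j=1}^m \nbargiikk{j}{k}} + \ccik \nbargik \;>\; \ccik \nbargik.
\end{equation*}
Chaining this with \eqref{eq:special-instance} gives $2\ccik\nnikir > \ccik \nbargik$, i.e., $\nbargik < 2\nnikir$. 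Dividing by $\nbargik > 0$ yields the claim $\nnikir/\nbargik > 1/2$.

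The argument is essentially routine, so I do not expect a serious obstacle. The only subtle point is the strictness: without the positive ``$\sigma^2/(\sum_j\nbargjk)$'' term we would only get $\nnikir/\nbargik \ge 1/2$, which is insufficient for the factor-2 accounting in Case~1 of the efficiency proof where this lemma is invoked. Thus the key observation to highlight in the writeup is that $\nbargik > 0$ forces $\sum_j \nbargjk$ to be a positive finite integer, giving the strict inequality.
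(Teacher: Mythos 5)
Your proof is correct and takes essentially the same route as the paper's: both arguments amount to dropping the strictly positive estimation-error term $\sigma^2/\sum_{j=1}^m\nbargiikk{j}{k}$ from the per-distribution IR inequality to conclude $\ccik\nbargik < 2\sigma\sqrt{\ccik} = 2\ccik\nnikir$. The paper routes this through an auxiliary quantity (the largest per-distribution individually rational collection amount, which it bounds by $2\nnikir$ and which dominates $\nbargik$), but the substance is identical to your direct chaining of~\eqref{eq:special-instance}.
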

\begin{proof}
    Let $\nbarik$ be the maximum amount of data that is individually rational for agent $i$ to collect assuming that $k$ is the only distribution that $i$ is trying to estimate. By this definition we have 
    \begin{align*}
        \frac{\sigma^2}{\nbarik+\sum_{j\neq i}\nbargiikk{j}{k}}+\ccik\nbarik = \frac{\sigma^2}{\nnikir}+\ccik\nnikir=2\sigma\sqrt{\ccik}
    \end{align*}
    Therefore $\ccik\nbarik<2\sigma\sqrt{\ccik}\Rightarrow \nbarik<\frac{2\sigma}{\sqrt{\ccik}}$. We can now write
    \begin{align*}
        \frac{\nnikir}{\nbarik}>\frac{\frac{\sigma}{\sqrt{\ccik}}}{\frac{2\sigma}{\sqrt{\ccik}}}=\frac{1}{2}
        \; .
    \end{align*}
    By definition $\nbarik\geq\nbargik$ and so $\frac{\nnikir}{\nbargik}\geq\frac{\nnikir}{\nbarik}>\frac{1}{2}$ which proves the claim.
\end{proof}

\begin{lemma}
    \label{lem:np-vs-nopt}
    Suppose that $\nbarg$ satisfies~\eqref{eq:special-instance}. Let $n'$ be the value of $\bargsoln$ in Algorithm~\ref{alg:compute-ne} after the first while loop but before the second while loop. Then $\frac{\nprimeik}{\nbargik}>\frac{1}{2}$.
\end{lemma}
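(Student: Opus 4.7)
The plan is a simple case analysis on whether $i \in \donatingagentsk$ at the end of the first while loop in Algorithm~\ref{alg:compute-ne}, combined with an appeal to Lemma~\ref{lem:lose-less-than-half}. Recall that Algorithm~\ref{alg:compute-ne} initializes $\nnt \leftarrow \nbarg$ and that the \emph{only} updates to coordinate $(i,k)$ that can occur during the first while loop are setting $\nnikt \leftarrow \nnikir$ at the moment $i$ is added to $\donatingagentsk$. Consequently, after the first while loop terminates, we have the clean dichotomy
\begin{equation*}
\nprimeik \;=\; \begin{cases} \nnikir & \text{if } i \in \donatingagentsk, \\ \nbargik & \text{if } i \notin \donatingagentsk. \end{cases}
\end{equation*}

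First I would handle the easy case: if $i \notin \donatingagentsk$, then $\nprimeik = \nbargik$, so $\nprimeik / \nbargik = 1 > 1/2$ and we are done. Next, for the case $i \in \donatingagentsk$, we have $\nprimeik = \nnikir$, so the claimed bound reduces to $\nnikir / \nbargik > 1/2$, which is exactly the content of Lemma~\ref{lem:lose-less-than-half} (this lemma is available since we are assuming $\nbarg$ satisfies~\eqref{eq:special-instance}).

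I do not expect a real obstacle here; the only subtle point worth double-checking is that the iterative nature of the first while loop cannot cause $\nprimeik$ to take any value other than $\nbargik$ or $\nnikir$. This is immediate from inspection of the loop body, which only assigns $\nnikir$ to $\nnikt$ and never overwrites it subsequently (once $i \in \donatingagentsk$, the membership check removes $(i,k)$ from further consideration in that loop). Thus the case analysis above is exhaustive and the lemma follows.
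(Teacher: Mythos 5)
Your proof is correct and follows essentially the same route as the paper's: a case split on whether $\nnikt$ was modified by the first while loop (equivalently, whether $i\in\donatingagentsk$), with the unchanged case giving ratio $1$ and the changed case reducing to Lemma~\ref{lem:lose-less-than-half}. Your extra remark verifying that the loop can only ever set $\nnikt$ to $\nnikir$ and never overwrites it afterward is a sensible sanity check that the paper leaves implicit.
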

\begin{proof}
    Initially $\nnikt=\nbargik$ but may be modified by the first while loop. If $\nnikt$ is unchanged by the first loop then $\nprimeik=\nbargik$ and so $\frac{\nprimeik}{\nbargik}>\frac{1}{2}$. If $\nnikt$ is changed by the first loop then $\nnikt$ becomes $\nnikir$. But since $\nbarg$ satisfies~\eqref{eq:special-instance}, we know by Lemma~\ref{lem:lose-less-than-half} that $\frac{\nnikir}{\nbargik}>\frac{1}{2}$ and $\frac{\nprimeik}{\nbargik}>\frac{1}{2}$. In both cases the desired inequality holds.
\end{proof}

\begin{lemma}
    \label{lem:nb-vs-nopt}
    Suppose that $\nbarg$ satisfies~\eqref{eq:special-instance}. Let $\nopt$ be the value returned by Algorithm~\ref{alg:compute-ne}. Then $\frac{\noptik}{\nbargik}>\frac{1}{2}$.
\end{lemma}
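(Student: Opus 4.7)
The plan is to reduce this to Lemma~\ref{lem:np-vs-nopt} by observing that the second while loop in Algorithm~\ref{alg:compute-ne} can only increase entries of $\nnt$, never decrease them. Let $\nprime$ denote the value of $\nnt$ after the first while loop but before the second, and recall from Lemma~\ref{lem:np-vs-nopt} that $\nprimeik/\nbargik > 1/2$. It therefore suffices to show that $\noptik \geq \nprimeik$.

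First I would observe that after the first while loop terminates, $\totalk = \sum_j \nniikkt{j}{k}$ is fixed for the remainder of the algorithm (it is computed only once and is not updated inside the second loop). Now consider any $(i,k)$. There are two cases. If the second while loop never updates $\nnikt$, then $\noptik = \nprimeik$ and the claim is immediate from Lemma~\ref{lem:np-vs-nopt}. If instead the second loop does update $\nnikt$ at some iteration, then at that iteration the guard condition $\nnikt > 0$ and $\nnikt\totalk < (\nnikir)^2$ must hold, which is equivalent to $\nnikt < (\nnikir)^2/\totalk$. The body of the loop then sets $\nnikt \leftarrow (\nnikir)^2/\totalk$, which is strictly larger than the previous value. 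After this assignment $\nnikt\totalk = (\nnikir)^2$, so the guard never fires again for $(i,k)$ (since $\totalk$ is unchanged), and thus $\noptik = (\nnikir)^2/\totalk > \nprimeik$.

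Combining the two cases gives $\noptik \geq \nprimeik$, and hence
\[
\frac{\noptik}{\nbargik} \;\geq\; \frac{\nprimeik}{\nbargik} \;>\; \frac{1}{2},
\]
as required. There is no real obstacle here: the lemma is essentially a bookkeeping observation that the second loop is monotonically nondecreasing on each coordinate, combined with the already-established bound of Lemma~\ref{lem:np-vs-nopt}. The only subtlety worth pinning down carefully is that $\totalk$ really is held fixed throughout the second loop, so that the update rule always strictly raises $\nnikt$ whenever it applies.
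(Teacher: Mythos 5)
Your proof is correct and follows essentially the same reasoning as the paper's: the paper also argues by noting the second while loop can only increase an entry and otherwise falls back on the first-loop analysis (via Lemma~\ref{lem:lose-less-than-half}, which is exactly what Lemma~\ref{lem:np-vs-nopt} packages for you). Your factoring through Lemma~\ref{lem:np-vs-nopt} plus the monotonicity of the second loop is a slightly cleaner organization of the same case analysis, and the one subtlety you flag --- that $\totalk$ is fixed throughout the second loop --- is indeed the point worth pinning down.
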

\begin{proof}
    The final value of $\nnt$ will be the value returned for $\nopt$ in Algorithm~\ref{alg:compute-ne}. Initially $\nnikt=\nbargik$ and is modified (if at all) in either the first or the second while loop. The second while loop can only increase $\nnikt$ from $\nbargik$ to $\frac{\rbr{\nnikir}^2}{\totalk}$ so if $\nnikt$ is modified by the second while loop we know that $\frac{\noptik}{\nbargik}>\frac{1}{2}$. If $\nnikt$ is unchanged by either loop then $\nnikt=\nbargik$ and so $\frac{\noptik}{\nbargik}>\frac{1}{2}$. If $\nnikt$ is changed by the first loop then $\nnikt$ becomes $\nnikir$. But since $\nbarg$ satisfies~\eqref{eq:special-instance}, we know by Lemma~\ref{lem:lose-less-than-half} that $\frac{\nnikir}{\nbargik}>\frac{1}{2}$ and $\frac{\noptik}{\nbargik}>\frac{1}{2}$. In every case the desired inequality holds.
\end{proof}

\begingroup
\allowdisplaybreaks
\begin{lemma}
    Let $\Gik\rbr{\alphaik}_{\textup{UB}}$ be as defined in~\eqref{eq:gik-ub}. Then we have that
    \label{gub-erfclb} 
    \begin{equation*}
        \Gik\rbr{\sqrt{\noptik}}_{\textup{UB}}=-\frac{64(\noptik)^{3/2}(\ccik \noptik(\sumdatawhiletik)^3+(2(\noptik)^2-(\sumdatawhiletik)^2)\sigma^2)}{(\sumdatawhiletik)^{5/2}(\sumdatawhiletik-2\noptik)\sigma^2}
        \; .
    \end{equation*} 
\end{lemma}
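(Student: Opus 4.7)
The statement is purely a substitution-and-simplify identity: plug $\alphaik=\sqrt{\noptik}$ into~\eqref{eq:gik-ub} and reduce. Throughout, let $n=\noptik$, $T=\sumdatawhiletik$, $c=\ccik$, and $s=\sigma^2$ for compactness. With $\alphaik^2=n$ and $T/(8\alphaik^2)=T/(8n)$, the plan is to handle the two summands of $\Gik(\alphaik)_{\text{UB}}$ separately, put them over the common denominator $T^{5/2}(T-2n)s$, and verify that the surviving numerator matches the claimed expression.

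First, I will simplify the rational (non-Erfc) part. After substituting $\alphaik=\sqrt n$, it becomes
\[
\frac{4\sqrt n}{\sqrt T}\Bigl(-1+\tfrac{4T}{T-2n}-\tfrac{16cn^2 T}{(T-2n)s}\Bigr)
=\frac{4\sqrt n\,(3Ts+2ns-16cn^2 T)}{\sqrt T\,(T-2n)\,s},
\]
where the parenthesis was collapsed over the denominator $(T-2n)s$ and the numerator simplified via $-(T-2n)s+4Ts=3Ts+2ns$.

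Next, I will simplify the $\mathrm{Erfc}_{\text{LB}}$ term. By the definition of $\mathrm{Erfc}_{\text{LB}}$, with $x=\sqrt{T/(8n)}$,
\[
\mathrm{Erfc}_{\text{LB}}(x)=\frac{e^{-x^2}}{\sqrt\pi}\cdot\frac{2x^2-1}{2x^3}
=\frac{e^{-T/(8n)}}{\sqrt\pi}\cdot\frac{2\sqrt 2\,(T-4n)\sqrt n}{T^{3/2}},
\]
after plugging in $x^2=T/(8n)$ and simplifying $2x^3=T^{3/2}/(8\sqrt 2\,n^{3/2})$. The exponential $e^{T/(8n)}$ outside cancels $e^{-T/(8n)}$, and the $\sqrt{2\pi}/\sqrt\pi=\sqrt 2$ combines with the $2\sqrt 2$ inside to yield a factor $4$. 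Combined with $\bigl(-1+4(1+T/n)n/T\bigr)=(3T+4n)/T$, the second term of $\Gik(\sqrt n)_{\text{UB}}$ becomes
\[
-\frac{4(3T+4n)(T-4n)\sqrt n}{T^{5/2}}.
\]

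Finally, I will combine the two pieces over the common denominator $T^{5/2}(T-2n)s$. The numerator becomes
\[
4\sqrt n\Bigl[T^2(3Ts+2ns-16cn^2T)-(3T+4n)(T-4n)(T-2n)\,s\Bigr].
\]
Expanding $(3T+4n)(T-4n)=3T^2-8Tn-16n^2$ and then multiplying by $(T-2n)$ gives $3T^3-14T^2n+32n^3$; subtracting yields
\[
4\sqrt n\bigl[16nT^2 s-16cn^2 T^3-32n^3 s\bigr]
=-64\,n^{3/2}\bigl[cn T^3+(2n^2-T^2)s\bigr],
\]
which is precisely the claimed numerator after restoring the original notation.

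The only real obstacle is bookkeeping during the algebraic cancellation, especially in the cubic expansion $(3T+4n)(T-4n)(T-2n)$ and the verification that the $3T^3 s$ terms cancel so that the remaining expression factors cleanly through $16n$. Every step is deterministic arithmetic, so no tools beyond the elementary identity for $\mathrm{Erfc}_{\text{LB}}$ and careful expansion are needed.
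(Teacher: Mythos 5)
Your proposal is correct: the substitution $\alphaik=\sqrt{\noptik}$, the cancellation of the exponential against $e^{-x^2}$ in $\mathrm{Erfc}_{\text{LB}}$, the simplification of each summand, and the final combination over the common denominator $(\sumdatawhiletik)^{5/2}(\sumdatawhiletik-2\noptik)\sigma^2$ all check out, and the expansions $(3T+4n)(T-4n)(T-2n)=3T^3-14T^2n+32n^3$ and the resulting numerator $-64n^{3/2}(cnT^3+(2n^2-T^2)\sigma^2)$ are right. This is essentially the same direct algebraic verification as the paper's proof (which likewise cancels the exponentials and sums the two pieces over that common denominator), just organized slightly more compactly.
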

\begin{proof}
    The exponential terms in $\Gik\rbr{\sqrt{\noptik}}_{\textup{UB}}$ cancel giving
    \begin{align*}
        \Gik\rbr{\sqrt{\noptik}}_{\textup{UB}}
        &=\overbrace{\frac{4\sqrt{\noptik}}{\sqrt{\sumdatawhiletik}}\left(-1+\frac{4\sumdatawhiletik}{\sumdatawhiletik-2\noptik}-\frac{16\ccik(\noptik)^2\sumdatawhiletik}{(\sumdatawhiletik-2\noptik)\sigma^2}\right)}^{(a)}\\
        &~~~~~ \underbrace{-\sqrt{2}\left(-1+4\left(1+\frac{\sumdatawhiletik}{\noptik}\right)\frac{\noptik}{\sumdatawhiletik}\right)
        \left(\left(\frac{8\noptik}{\sumdatawhiletik}\right)^{1/2}-\frac{1}{2}\left(\frac{8\noptik}{\sumdatawhiletik}\right)^{3/2}\right)}_{(b)}
        \\
    \end{align*}
    Now notice that
    \begin{align*}
        (a)&=-\frac{4(\noptik)^{1/2}(\sumdatawhiletik)^2\sigma^2}{(\sumdatawhiletik)^{5/2}(\sumdatawhiletik-2\noptik)\sigma^2}+\frac{8(\noptik)^{3/2}(\sumdatawhiletik)^2\sigma^2}{(\sumdatawhiletik)^{5/2}(\sumdatawhiletik-2\noptik)\sigma^2}\\
        &~~~~~+\frac{16(\noptik)^{1/2}(\sumdatawhiletik)^3\sigma^2}{(\sumdatawhiletik)^{5/2}(\sumdatawhiletik-2\noptik)\sigma^2}-\frac{64\ccik(\noptik)^{5/2}(\sumdatawhiletik)^3}{(\sumdatawhiletik)^{5/2}(\sumdatawhiletik-2\noptik)\sigma^2}
    \end{align*}
    \begin{align*}
        (b)&=-\sqrt{2}\left(3+\frac{4\noptik}{\sumdatawhiletik}\right)\left(\left(\frac{8\noptik}{\sumdatawhiletik}\right)^{1/2}-\frac{1}{2}\left(\frac{8\noptik}{\sumdatawhiletik}\right)^{3/2}\right)
        \\
        &=-\sqrt{2}\left(3\left(\frac{8\noptik}{\sumdatawhiletik}\right)^{1/2}-\frac{3}{2}\left(\frac{8\noptik}{\sumdatawhiletik}\right)^{3/2}+\frac{8\sqrt{2}(\noptik)^{3/2}}{(\sumdatawhiletik)^{3/2}}-\frac{32\sqrt{2}(\noptik)^{5/2}}{(\sumdatawhiletik)^{5/2}}\right)
        \\
        &=-\frac{12(\noptik)^{1/2}(\sumdatawhiletik)^2(\sumdatawhiletik-2\noptik)\sigma^2}{(\sumdatawhiletik)^{5/2}(\sumdatawhiletik-2\noptik)\sigma^2}+\frac{48(\noptik)^{3/2}\sumdatawhiletik(\sumdatawhiletik-2\noptik)\sigma^2}{(\sumdatawhiletik)^{5/2}(\sumdatawhiletik-2\noptik)\sigma^2}\\
        &~~~~~-\frac{16(\noptik)^{3/2}\sumdatawhiletik(\sumdatawhiletik-2\noptik)\sigma^2}{(\sumdatawhiletik)^{5/2}(\sumdatawhiletik-2\noptik)\sigma^2}+\frac{64(\noptik)^{5/2}(\sumdatawhiletik-2\noptik)\sigma^2}{(\sumdatawhiletik)^{5/2}(\sumdatawhiletik-2\noptik)\sigma^2}
        \\
        &=-\frac{12(\noptik)^{1/2}(\sumdatawhiletik)^3\sigma^2}{(\sumdatawhiletik)^{5/2}(\sumdatawhiletik-2\noptik)\sigma^2}
        +\frac{24(\noptik)^{3/2}(\sumdatawhiletik)^2\sigma^2}{(\sumdatawhiletik)^{5/2}(\sumdatawhiletik-2\noptik)\sigma^2}\\
        &~~~~~+\frac{48(\noptik)^{3/2}(\sumdatawhiletik)^2\sigma^2}{(\sumdatawhiletik)^{5/2}(\sumdatawhiletik-2\noptik)\sigma^2}
        -\frac{96(\noptik)^{5/2}\sumdatawhiletik\sigma^2}{(\sumdatawhiletik)^{5/2}(\sumdatawhiletik-2\noptik)\sigma^2}\\
        &~~~~~-\frac{16(\noptik)^{3/2}\sumdatawhiletik(\sumdatawhiletik)^2\sigma^2}{(\sumdatawhiletik)^{5/2}(\sumdatawhiletik-2\noptik)\sigma^2}
        +\frac{32(\noptik)^{5/2}\sumdatawhiletik\sigma^2}{(\sumdatawhiletik)^{5/2}(\sumdatawhiletik-2\noptik)\sigma^2}\\
        &~~~~~+\frac{64(\noptik)^{5/2}\sumdatawhiletik\sigma^2}{(\sumdatawhiletik)^{5/2}(\sumdatawhiletik-2\noptik)\sigma^2}
        -\frac{128(\noptik)^{7/2}\sigma^2}{(\sumdatawhiletik)^{5/2}(\sumdatawhiletik-2\noptik)\sigma^2}
    \end{align*}
    The result now follows from
    \begin{align*}
        (a)+(b)=-\frac{64(\noptik)^{3/2}(\ccik \noptik(\sumdatawhiletik)^3+(2(\noptik)^2-(\sumdatawhiletik)^2)\sigma^2)}{(\sumdatawhiletik)^{5/2}(\sumdatawhiletik-2\noptik)\sigma^2}
    \end{align*}
\end{proof}
\endgroup

\begin{lemma}
    \label{lem:tech_penalty_deriv}
    In the context of~\S\ref{subsubsec:proof-of-lem:opt-submis} we have that
    \begin{align*}
        \frac{\partial p_i}{\partial \nnik}(\noptik)
        =
        &=-\frac{\sigma^2\differencettnik}{64\alphaik^3\sqrt{\totalk}\noptik}\Bigg(\frac{4\alphaik}{\sqrt{\totalk}}\left(\frac{4\alphaik^2\totalk}{\differencettnik\noptik}-1-\ccik\frac{16\alphaik^2\totalk\noptik}{\sigma^2\differencettnik}\right)\\
        &\hspace{3.3cm}-\exp\left(\frac{\totalk}{8\alphaik^2}\right)\left(\frac{4\alphaik^2}{\totalk}\left(\frac{\totalk}{\noptik}+1\right)-1\right)\sqrt{2\pi}\text{Erfc}\left(\sqrt{\frac{\totalk}{8\alphaik^2}}\right)\Bigg) 
    \end{align*} 
\end{lemma}
\begin{proof}
    By the dominated convergence theorem we have
    \begingroup
    \allowdisplaybreaks
    \begin{align*}
        \frac{\partial p_i}{\partial \nnik}(\noptik)
        &=
        \EEV{x\sim\mathcal{N}(0,1)}\left[-\sigma^2\frac{1+\frac{\differencettnik}{\left(1+\alphaik^2\left(\frac{1}{\noptik}+\frac{1}{\noptik}\right)x^2\right)^2}\frac{\alphaik^2 x^2}{(\noptik)^2}}{\left(\frac{\differencettnik}{1+\alphaik^2\left(\frac{1}{\noptik}+\frac{1}{\noptik}\right)x^2}+\noptik+\noptik\right)^2} \right]+\ccik
        \\
        &=-\sigma^2\EEV{x\sim\mathcal{N}(0,1)}\left[\frac{1+\frac{\differencettnik}{\left(1+\frac{2\alphaik^2}{\noptik}x^2\right)^2}\frac{\alphaik^2 x^2}{(\noptik)^2}}{\left(\frac{\differencettnik}{1+\frac{2\alphaik^2}{\noptik}x^2}+2\noptik\right)^2} \right]+\ccik
        \\
        &=-\frac{\sigma^2}{(\noptik)^2}\EEV{x\sim\mathcal{N}(0,1)}\left[\frac{1+\frac{\differencettnik\alphaik^2x^2}{\left(\noptik+2\alphaik^2x^2\right)^2}}{\left(\frac{\differencettnik}{\noptik+2\alphaik^2x^2}+2\right)^2}\right]+\ccik
        \\
        &=-\frac{\sigma^2}{4(\noptik)^2}\EEV{x\sim\mathcal{N}(0,1)}\left[\frac{4\left(\noptik+2\alphaik^2x^2\right)^2 + 4\differencettnik\alphaik^2x^2}{\left(\differencettnik+2\left(\noptik+2\alphaik^2x^2\right)\right)^2}\right]+\ccik
        \\
        &=-\frac{\sigma^2}{4(\noptik)^2}\EEV{x\sim\mathcal{N}(0,1)}\left[1+\frac{-(\differencettnik)^2-4\differencettnik(\noptik+2\alphaik^2x^2)+ 4\differencettnik\alphaik^2x^2}{\left(\differencettnik+2\left(\noptik+2\alphaik^2x^2\right)\right)^2}\right]+\ccik
        \\
        &=-\frac{\sigma^2}{4(\noptik)^2}\EEV{x\sim\mathcal{N}(0,1)}\left[1+\differencettnik\frac{-\differencettnik-4(\noptik+2\alphaik^2x^2)+ 4\alphaik^2x^2}{\left(4\alphaik^2x^2+\totalk\right)^2}\right]+\ccik
        \\
        &=-\frac{\sigma^2}{4(\noptik)^2}\EEV{x\sim\mathcal{N}(0,1)}\left[1+\differencettnik\frac{-\totalk-2\noptik- 4\alphaik^2x^2}{\left(4\alphaik^2x^2+\totalk\right)^2}\right]+\ccik
        \\
        &=-\frac{\sigma^2}{4(\noptik)^2}+\frac{\sigma^2}{4(\noptik)^2}\differencettnik\EEV{x\sim\mathcal{N}(0,1)}\left[\frac{1}{\left(4\alphaik^2x^2+\totalk\right)}+\frac{2\noptik}{\left(4\alphaik^2x^2+\totalk\right)^2}\right]+\ccik
        \\
        &=-\frac{\sigma^2}{4(\noptik)^2}+\frac{\sigma^2}{4(\noptik)^2}\differencettnik\EEV{x\sim\mathcal{N}(0,1)}\left[\frac{1}{4\alphaik^2}\frac{1}{\left(x^2+\frac{\totalk}{4\alphaik^2}\right)}+\frac{2\noptik}{16\alphaik^4}\frac{1}{\left(x^2+\frac{\totalk}{4\alphaik^2}\right)^2}\right]+\ccik
        \\
        &=\ccik-\frac{\sigma^2}{4(\noptik)^2}+ \frac{\sigma^2}{4(\noptik)^2}\differencettnik\frac{2\noptik}{16\alphaik^4}\frac{1}{\frac{\totalk}{2\alphaik^2}}\\
        &~~~~~+\frac{\sigma^2}{4(\noptik)^2}\differencettnik\left(\frac{1}{4\alphaik^2}+\frac{2\noptik}{16\alphaik^4}\frac{1-\frac{\totalk}{4\alphaik^2}}{\frac{\totalk}{2\alphaik^2}}\right)\exp\left(\frac{\totalk}{8\alphaik^2}\right)\text{Erfc}\left(\sqrt{\frac{\totalk}{8\alphaik^2}}\right)\sqrt{\frac{\pi}{\frac{\totalk}{2\alphaik^2}}}
        \\
        &=\ccik-\frac{\sigma^2}{4(\noptik)^2}\left(1-\frac{\differencettnik\noptik}{4\alphaik^2 \totalk}\right)\\
        &~~~~~+\frac{\sigma^2}{4(\noptik)^2}\differencettnik\left(\frac{1}{4\alphaik^2}+\frac{\noptik}{4\alphaik^2\totalk}-\frac{\noptik}{16\alphaik^4}\right)\exp\left(\frac{\totalk}{8\alphaik^2}\right)\text{Erfc}\left(\sqrt{\frac{\differencettnik}{8\alphaik^2}}\right)\sqrt{\frac{\pi}{\frac{\totalk}{2\alphaik^2}}}
        \\
        &=\ccik-\frac{\sigma^2}{4(\noptik)^2}\left(1-\frac{\differencettnik\noptik}{4\alphaik^2 \totalk}\right)\\
        &~~~~~+\frac{\sigma^2}{4(\noptik)^2}\differencettnik\frac{\alphaik\sqrt{2\pi}}{\sqrt{\totalk}}\frac{\noptik}{16\alphaik^4}\left(\frac{4\alphaik^2}{\totalk}\left(\frac{\totalk}{\noptik}+1\right)-1\right)\exp\left(\frac{\totalk}{8\alphaik^2}\right)\text{Erfc}\left(\sqrt{\frac{\totalk}{8\alphaik^2}}\right)
        \\
        &=-\frac{\sigma^2\differencettnik}{64\alphaik^3\sqrt{\totalk}\noptik}\Bigg(\frac{4\alphaik}{\sqrt{\totalk}}\left(\frac{4\alphaik^2\totalk}{\differencettnik\noptik}-1-\ccik\frac{16\alphaik^2\totalk\noptik}{\sigma^2\differencettnik}\right)\\
        &\hspace{3.3cm}-\exp\left(\frac{\totalk}{8\alphaik^2}\right)\left(\frac{4\alphaik^2}{\totalk}\left(\frac{\totalk}{\noptik}+1\right)-1\right)\sqrt{2\pi}\text{Erfc}\left(\sqrt{\frac{\totalk}{8\alphaik^2}}\right)\Bigg) 
    \end{align*}
    \endgroup
\end{proof}

\begin{lemma}
    \label{lem:efficiency-penalty2}
    Suppose that~\eqref{eq:special-instance} is satisfied.
    Let $\totalk$ be as defined in Algorithm~\ref{alg:compute-ne}
    and $\differencettnik=\totalk-2\noptik$. If $i\not\in\donatingagentsk$ and $\noptik>0$ then
    \begin{align*}
        \penalik\rbr{\multiarmmech,\sopt}
        =\frac{\sigma^2}{2\noptik}
        -\frac{\sigma^2}{4\alphaik^2}\frac{\left(\frac{\differencettnik}{\noptik}\right)}{2}
        \exp\left(\frac{\totalk}{8\alphaik^2}\right)\text{\rm Erfc}\left(\sqrt{\frac{\totalk}{8\alphaik^2}}\right)\sqrt{\frac{\pi}{\left(\frac{\totalk}{2\alphaik^2}\right)}}+\ccik\noptik
        \; .
    \end{align*}
\end{lemma}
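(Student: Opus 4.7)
The plan is to start from the closed form for $\penalik$ that was already derived in the proof of Lemma~\ref{lem:opt-submis} for the case $k\in\estimtypec$, and then evaluate the remaining Gaussian integral in closed form. Under the recommended strategies $\sopt$ with $i\not\in\donatingagentsk$ and $\noptik>0$, the agent truthfully submits $\noptik$ points and the mechanism returns a weighted average of clean and corrupted data, with $|\initdataik|=|\dataik|=\noptik$ and $|\datacorrik|=\totalk-2\noptik=\differencettnik$. Substituting these cardinalities into the expression for the penalty on distribution $k$ derived in Lemma~\ref{lem:opt-submis} gives
\begin{align*}
\penalik(\multiarmmech,\sopt)
\;=\;
\EEV{e\sim\Ncal(0,1)}\!\left[\frac{1}{\dfrac{\differencettnik}{\sigma^2+\alphaik^2\!\left(\tfrac{\sigma^2}{\noptik}+\tfrac{\sigma^2}{\noptik}\right)\!e^2}+\dfrac{2\noptik}{\sigma^2}}\right]
\;+\;\ccik\noptik.
\end{align*}

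The first step is a purely algebraic rearrangement of the integrand. Multiplying numerator and denominator through by $\sigma^2+2\alphaik^2 e^2/\noptik$ and using the identity $2\noptik+4\alphaik^2 e^2=(\totalk+4\alphaik^2 e^2)-\differencettnik$, the integrand can be rewritten as
\begin{align*}
\frac{\sigma^2(\noptik+2\alphaik^2 e^2)}{\noptik(\totalk+4\alphaik^2 e^2)}
\;=\;
\frac{\sigma^2}{2\noptik}\left(1-\frac{\differencettnik}{\totalk+4\alphaik^2 e^2}\right).
\end{align*}
Taking expectation against the standard Gaussian density splits the penalty into a constant $\sigma^2/(2\noptik)$ minus $(\sigma^2\differencettnik/(2\noptik))\,\EE\!\left[(\totalk+4\alphaik^2 e^2)^{-1}\right]$, plus $\ccik\noptik$. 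The main remaining task is to evaluate this Gaussian integral in closed form.

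For that, I would invoke the standard identity
\begin{align*}
\int_{-\infty}^{\infty}\frac{e^{-x^2/2}}{c^2+x^2}\,dx
\;=\;\frac{\pi}{c}\,e^{c^2/2}\,\mathrm{Erfc}\!\left(\tfrac{c}{\sqrt{2}}\right),\qquad c>0,
\end{align*}
applied with $c^2=\totalk/(4\alphaik^2)$, which gives
\begin{align*}
\EE\!\left[\frac{1}{\totalk+4\alphaik^2 e^2}\right]
\;=\;\frac{1}{4\alphaik^2}\cdot\frac{1}{\sqrt{2\pi}}\cdot\frac{\pi}{\sqrt{\totalk/(4\alphaik^2)}}\,\exp\!\left(\frac{\totalk}{8\alphaik^2}\right)\mathrm{Erfc}\!\left(\sqrt{\frac{\totalk}{8\alphaik^2}}\right).
\end{align*}
Substituting this back and simplifying by pulling the $1/(4\alphaik^2)$ and the factor $\sqrt{\pi/(\totalk/(2\alphaik^2))}$ into a common form then yields exactly
\begin{align*}
\penalik(\multiarmmech,\sopt)
\;=\;
\frac{\sigma^2}{2\noptik}
\;-\;\frac{\sigma^2}{4\alphaik^2}\cdot\frac{(\differencettnik/\noptik)}{2}\,
\exp\!\left(\frac{\totalk}{8\alphaik^2}\right)\mathrm{Erfc}\!\left(\sqrt{\frac{\totalk}{8\alphaik^2}}\right)
\sqrt{\frac{\pi}{\totalk/(2\alphaik^2)}}
\;+\;\ccik\noptik,
\end{align*}
as claimed.

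The calculation is routine once the integrand has been put into the right form. The only place where care is needed is the algebraic rewrite that produces the $1-\differencettnik/(\totalk+4\alphaik^2 e^2)$ decomposition, since this is what turns a rational function with an $e^2$ in the numerator into something whose $e$-expectation reduces to the single standard integral above. The evaluation of that integral and the bookkeeping of the constants $\sqrt{2\pi}$, $4\alphaik^2$, and $\sqrt{\totalk/(4\alphaik^2)}$ is the main place one could make an arithmetic slip, but no new ideas are required beyond what already appeared in the $\partial p_i/\partial\nnik$ computation inside the proof of Lemma~\ref{lem:opt-submis}.
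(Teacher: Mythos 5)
Your proposal is correct and follows essentially the same route as the paper's proof: both start from the risk expression for $k\in\estimtypec$ established in the NIC analysis, substitute $|\initdataik|=|\dataik|=\noptik$ and $|\datacorrik|=\differencettnik$, reduce the integrand to $\tfrac{\sigma^2}{2\noptik}\bigl(1-\tfrac{\differencettnik}{\totalk+4\alphaik^2e^2}\bigr)$, and evaluate the remaining Gaussian expectation with the $\exp\cdot\mathrm{Erfc}$ identity (the paper's Lemma~\ref{lem:expvaluebound} with $t=1/2$, which is exactly the integral formula you quote). The constants all check out, so no gap.
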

\begin{proof}
    By the expression for agent $i$'s penalty for distribution $k$ given in the proof of Lemma \ref{lem:opt-funcs} and the result of Lemma~\ref{lem:expvaluebound}, we have that 
    \begingroup
    \allowdisplaybreaks
    \begin{align*}
        &~~~~~\penalik\rbr{\multiarmmech,\sopt}
        \\
        &=\EEV{x\sim\mathcal{N}(0,1)}\left[\frac{1}{\frac{|\datacorrik|}{\sigma^2+\alphaik^2\left(\frac{\sigma^2}{|\initdataik|}+\frac{\sigma^2}{|\dataik|}\right)x^2}+\frac{|\initdataik|+|\dataik|}{\sigma^2}}\right] + \ccik\noptik
        \\
        &=\EEV{x\sim\mathcal{N}(0,1)}\left[\frac{1}{\frac{\differencettnik}{\sigma^2+\alphaik^2\left(\frac{\sigma^2}{\noptik}+\frac{\sigma^2}{\noptik}\right)x^2}+\frac{\noptik+\noptik}{\sigma^2}}\right]+\ccik\noptik
        \\
        &=\EEV{x\sim\mathcal{N}(0,1)}\left[\frac{1}{\frac{\differencettnik}{\sigma^2+\alphaik^2\left(\frac{2\sigma^2}{\noptik}x^2\right)}+\frac{2\noptik}{\sigma^2}}\right]+\ccik\noptik
        \\
        &=\frac{\sigma^2}{\noptik}\EEV{x\sim\mathcal{N}(0,1)}\left[\frac{1}{\frac{\differencettnik/\noptik}{1+\frac{2\alphaik^2x^2}{\noptik}}+2}\right]+\ccik\noptik
        \\
        &=\frac{\sigma^2}{\noptik}\EEV{x\sim\mathcal{N}(0,1)}\left[\frac{1}{\frac{\differencettnik/\noptik+2+\frac{4\alphaik^2x^2}{\noptik}}{1+\frac{2\alphaik^2x^2}{\noptik}}}\right]+\ccik\noptik
        \\
        &=\frac{\sigma^2}{\noptik}\EEV{x\sim\mathcal{N}(0,1)}\left[\frac{1+\frac{2\alphaik^2x^2}{\noptik}}{\differencettnik/\noptik+2+\frac{4\alphaik^2x^2}{\noptik}}\right]+\ccik\noptik 
        \\
        &=\frac{\sigma^2}{\noptik}\EEV{x\sim\mathcal{N}(0,1)}\left[\frac{1}{2}-\frac{\left(\frac{\differencettnik}{\noptik}\right)}{2}\frac{1}{\differencettnik/\noptik+2+\frac{4\alphaik^2x^2}{\noptik}}\right]+\ccik\noptik
        \\
        &=\frac{\sigma^2}{2\noptik}-\frac{\sigma^2}{\noptik} \frac{\differencettnik/\noptik}{2} \EEV{x\sim\mathcal{N}(0,1)}\left[\frac{1}{\differencettnik/\noptik+2+\frac{4\alphaik^2x^2}{\noptik}}\right]+\ccik\noptik
        \\
        &=\frac{\sigma^2}{2\noptik}-\frac{\sigma^2}{\noptik} \frac{\differencettnik/\noptik}{2}\frac{1}{\left(\frac{4\alphaik^2}{\noptik}\right)} \EEV{x\sim\mathcal{N}(0,1)}\left[\frac{1}{\frac{\differencettnik/\noptik+2}{\left(\frac{4\alphaik^2}{\noptik}\right)}+x^2}\right]+\ccik\noptik
        \\
        &=\frac{\sigma^2}{2\noptik}+\ccik \noptik\\
        &\hspace{0.5cm}-\frac{\sigma^2}{4\alphaik^2} \frac{\differencettnik/\noptik}{2}
        \exp\left(\frac{\differencettnik/\noptik+2}{2\left(\frac{4\alphaik^2}{\noptik}\right)}\right)
        \text{Erfc}\left(\sqrt{\frac{\differencettnik/\noptik+2}{2\left(\frac{4\alphaik^2}{\noptik}\right)}}\right)
        \sqrt{\frac{\pi}{2\left(\frac{\differencettnik/\noptik+2}{\left(\frac{4\alphaik^2}{\noptik}\right)}\right)}}
        \\
        &=\frac{\sigma^2}{2\noptik}-\frac{\sigma^2}{4\alphaik^2} \frac{\differencettnik/\noptik}{2}
        \exp\left(\frac{\totalk}{8\alphaik^2}\right)
        \text{Erfc}\left(\sqrt{\frac{\totalk}{8\alphaik^2}}\right)
        \sqrt{\frac{\pi}{\frac{\totalk}{2\alphaik^2}}} + \ccik\noptik 
        \; .
    \end{align*}
    \endgroup
\end{proof}

\begin{lemma}
    \label{lem:efficiency-penalty1}
    Suppose that~\eqref{eq:special-instance} is satisfied. Let $\totalk$ be as defined in Algorithm~\ref{alg:compute-ne}. If $i\not\in\donatingagentsk$ and $\noptik>0$ then
    \begin{align*}
        \penalik\rbr{\multiarmmech,\sopt}
        =\frac{2\ccik(\noptik)^2+\sigma^2+\frac{\alphaik(16\ccik(\noptik)^2\totalk\alphaik^2+(\noptik(-2\noptik+\totalk)-4\totalk\alphaik^2)\sigma^2)}{{-\noptik\totalk\alphaik+4(\noptik+\totalk)\alphaik^3}}}{2\noptik}
    \end{align*}  
\end{lemma}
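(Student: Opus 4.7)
}
The plan is to derive this expression from Lemma~\ref{lem:efficiency-penalty2} by using the defining identity $\Gik(\alphaik)=0$ to eliminate the $\exp(\cdot)\,\text{Erfc}(\cdot)$ factor. Since the hypotheses ($i\notin\donatingagentsk$ and $\noptik>0$, together with~\eqref{eq:special-instance}) are identical to those of Lemma~\ref{lem:efficiency-penalty2}, we can take as starting point
\[
\penalik\rbr{\multiarmmech,\sopt}
=\frac{\sigma^2}{2\noptik}-\frac{\sigma^2}{4\alphaik^2}\cdot\frac{\differencettnik/\noptik}{2}
\exp\!\rbr{\frac{\totalk}{8\alphaik^2}}\text{Erfc}\!\rbr{\sqrt{\frac{\totalk}{8\alphaik^2}}}\sqrt{\frac{\pi}{\totalk/(2\alphaik^2)}}+\ccik\noptik,
\]
and it suffices to rewrite the Erfc-term in closed algebraic form.

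First, I would use that $\alphaik$ is chosen in Algorithm~\ref{alg:multi_arm_mech} to satisfy $\Gik(\alphaik)=0$, where $\Gik$ is given by~\eqref{eq:gik-def}. Setting the two parts of $\Gik(\alphaik)$ equal gives
\[
\exp\!\rbr{\frac{\totalk}{8\alphaik^2}}\text{Erfc}\!\rbr{\sqrt{\frac{\totalk}{8\alphaik^2}}}
=\frac{\dfrac{4\alphaik}{\sqrt{\totalk}}\Bigl(\dfrac{4\alphaik^{2}\totalk}{\differencettnik\,\noptik}-1-\ccik\dfrac{16\alphaik^{2}\totalk\,\noptik}{\sigma^{2}\differencettnik}\Bigr)}
{\sqrt{2\pi}\,\Bigl(\dfrac{4\alphaik^{2}}{\totalk}\bigl(\tfrac{\totalk}{\noptik}+1\bigr)-1\Bigr)}.
\]
The existence of such an $\alphaik$ with $\alphaik\ge\sqrt{\noptik}$ is guaranteed by Lemma~\ref{lem:mainproof-alpha-exists}, and in particular the denominator $\frac{4\alphaik^{2}}{\totalk}(\tfrac{\totalk}{\noptik}+1)-1$ is strictly positive for such $\alphaik$, so the division is valid.

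Second, I would substitute this expression for $\exp(\cdot)\,\text{Erfc}(\cdot)$ into the Lemma~\ref{lem:efficiency-penalty2} formula. The factor $\sqrt{\pi/(\totalk/(2\alphaik^2))}=\sqrt{2\pi}\,\alphaik/\sqrt{\totalk}$ from Lemma~\ref{lem:efficiency-penalty2} will cancel the $\sqrt{2\pi}$ in the denominator above, leaving a purely rational expression in $\alphaik, \noptik, \totalk, \ccik, \sigma^{2}$. Clearing denominators and using $\differencettnik=\totalk-2\noptik$ (so that $\tfrac{\totalk}{\noptik}+1=\tfrac{\totalk+\noptik}{\noptik}$ and $\tfrac{4\alphaik^2(\noptik+\totalk)-\noptik\totalk}{\noptik\totalk}$ appears in the denominator after clearing), I would reduce the whole expression over the common denominator $2\noptik\cdot\bigl(-\noptik\totalk\alphaik+4(\noptik+\totalk)\alphaik^{3}\bigr)$ and collect the numerator into the form $2\ccik(\noptik)^{2}+\sigma^{2}+\tfrac{\alphaik(\cdots)}{-\noptik\totalk\alphaik+4(\noptik+\totalk)\alphaik^{3}}$. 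This is exactly the claimed formula.

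I expect the main obstacle to be purely the algebraic bookkeeping in the substitution step: combining the three contributions $\tfrac{\sigma^{2}}{2\noptik}$, $\ccik\noptik$, and the rewritten Erfc-term over a single denominator, and then verifying that the resulting numerator simplifies to $16\ccik(\noptik)^{2}\totalk\alphaik^{2}+\bigl(\noptik(\totalk-2\noptik)-4\totalk\alphaik^{2}\bigr)\sigma^{2}$ for the $\alphaik$-dependent piece. There are no new conceptual ingredients beyond the identity $\Gik(\alphaik)=0$; all remaining steps are careful factoring. Once the numerator is matched, the identity claimed in Lemma~\ref{lem:efficiency-penalty1} follows immediately.
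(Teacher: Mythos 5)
Your proposal is correct and follows essentially the same route as the paper's own proof: both solve $\Gik(\alphaik)=0$ for the $\exp(\cdot)\,\mathrm{Erfc}(\cdot)$ factor (your closed form matches the paper's, since $4/\sqrt{2\pi}=2\sqrt{2/\pi}$), substitute it into the expression from Lemma~\ref{lem:efficiency-penalty2}, and reduce the resulting rational function to the claimed form. Your added observation that the denominator $\frac{4\alphaik^2}{\totalk}\bigl(\frac{\totalk}{\noptik}+1\bigr)-1$ is positive for $\alphaik\geq\sqrt{\noptik}$ is a correct (and slightly more explicit) justification of the division step than the paper provides.
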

\begin{proof}
    Let $\differencettnik=\totalk-2\noptik$. We have already shown that $\Gik(\alphaik)=0$ has a solution. Therefore
    \begin{align*}
        \exp\left(\frac{\totalk}{8\alphaik^2}\right)\text{Erfc}\left(\sqrt{\frac{\totalk}{8\alphaik^2}}\right)
        =\frac{2\sqrt{\frac{2}{\pi}}\alphaik\left(-1+\frac{4\totalk\alphaik^2}{\differencettnik \noptik}-\frac{16\ccik \noptik\totalk\alphaik^2}{\differencettnik\sigma^2}\right)}{\sqrt{\totalk}\left(-1+\frac{4\left(1+\frac{\totalk}{\noptik}\right)\alphaik^2}{\totalk}\right)}
    \end{align*}
    Substituting this expression into Lemma \ref{lem:efficiency-penalty2} gives us 
    \begin{align*}
        &~~~~~\penalik\rbr{\multiarmmech,\sopt}
        \\
        &=\ccik \noptik+\frac{\sigma^2}{2\noptik}-\frac{\differencettnik\sqrt{\frac{\alphaik^2}{\totalk}}\left(-1+\frac{4\totalk\alphaik^2}{\differencettnik\noptik}-\frac{16\ccik \noptik \totalk\alphaik^2}{\differencettnik\sigma^2}\right)\sigma^2}{2\noptik\sqrt{\totalk}\alphaik\left(-1+\frac{4\left(1+\frac{\totalk}{\noptik}\right)\alphaik^2}{\totalk}\right)}
        \\
        &=\frac{1}{2\noptik}\left(2\ccik(\noptik)^2+\sigma^2
        \underbrace{-\frac{\differencettnik\sqrt{\frac{\alphaik^2}{\totalk}}\left(-1+\frac{4\totalk\alphaik^2}{\differencettnik\noptik}-\frac{16\ccik \noptik\totalk\alphaik^2}{\differencettnik\sigma^2}\right)\sigma^2}{\sqrt{\totalk}\alphaik\left(-1+\frac{4\left(1+\frac{\totalk}{\noptik}\right)\alphaik^2}{\totalk}\right)}}_{(*)}\right)
    \end{align*}
    where we have the following expression for $(*)$
    \begingroup
    \allowdisplaybreaks
    \begin{align*}
        (*)
        &=\frac{-\sqrt{\frac{\alphaik^2}{\totalk}}\left(-\differencettnik\sigma^2+\frac{4\totalk\alphaik^2\sigma^2}{\noptik}-16\ccik \noptik\totalk\alphaik^2\right)}{-\sqrt{\totalk}\alphaik+\frac{4\left(1+\frac{\totalk}{\noptik}\right)\alphaik^3}{\sqrt{\totalk}}}
        \\
        &=\frac{-\noptik\sqrt{\frac{\alphaik^2}{\totalk}}\left(-\differencettnik\sigma^2+\frac{4\totalk\alphaik^2\sigma^2}{\noptik}-16\ccik \noptik\totalk\alphaik^2\right)}{\frac{1}{\sqrt{\totalk}}\left(-\totalk\alphaik\noptik+4(\noptik+\totalk)\alphaik^3\right)}
        \\
        &=\frac{-\alphaik\left(-\differencettnik\sigma^2\noptik+4\totalk\alphaik^2\sigma^2-16\ccik(\noptik)^2\totalk\alphaik^2\right)}{-\totalk\alphaik\noptik+4(\noptik+\totalk)\alphaik^3}
        \\
        &=\frac{\alphaik\left(16\ccik(\noptik)^2\totalk\alphaik^2+(\differencettnik\noptik-4\totalk\alphaik^2)\sigma^2\right)}{-\noptik\totalk\alphaik+4(\noptik+\totalk)\alphaik^3}
        \\
        &=\frac{\alphaik\left(16\ccik(\noptik)^2\totalk\alphaik^2+(\noptik(-2\noptik+\totalk)-4\totalk\alphaik^2)\sigma^2\right)}{-\noptik\totalk\alphaik+4(\noptik+\totalk)\alphaik^3}
    \end{align*}
    \endgroup
    Substituting in this expression for $(*)$ gives the desired result.
\end{proof}

\begin{lemma}
    \label{lem:tech_penalty-apart1}
    The following two expressions are equal
    \begin{align*}
        &~~~~~
        \frac{\totalk}{\noptik(2\ccik \noptik\totalk+\sigma^2)}
        \left(2\ccik(\noptik)^2+\sigma^2+
        \frac{\alphaik\left(16\ccik(\noptik)^2\totalk\alphaik^2+(\noptik(\totalk-2\noptik)-4\totalk\alphaik^2)\sigma^2\right)}{-\noptik\totalk\alphaik+4(\noptik+\totalk)\alphaik^3}\right)
        \\
        &=\frac{\totalk(2\ccik(\noptik)^2+6\ccik\noptik\totalk+\sigma^2)}{(\noptik+\totalk)(2\ccik \noptik\totalk+\sigma^2)}
        +\frac{4\ccik(\noptik)^2(\totalk)^3-2(\noptik)^2\totalk\sigma^2-\noptik(\totalk)^2\sigma^2}{(\noptik+\totalk)(-\noptik\totalk+4\noptik\alphaik^2+4\totalk\alphaik^2)(2\ccik\noptik\totalk+\sigma^2)}
        \; .
    \end{align*}
\end{lemma}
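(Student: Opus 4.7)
The plan is to verify the identity by direct algebraic reduction to a polynomial equation that is linear in $\alphaik^2$. First I would simplify the LHS: the denominator of the inner fraction factors as $-\noptik\totalk\alphaik + 4(\noptik+\totalk)\alphaik^3 = \alphaik A$ where $A \defeq -\noptik\totalk+4(\noptik+\totalk)\alphaik^2$, so the leading $\alphaik$ in the numerator cancels and the bracketed expression becomes $2\ccik(\noptik)^2+\sigma^2 + N/A$ with $N \defeq 16\ccik(\noptik)^2\totalk\alphaik^2 + \bigl(\noptik(\totalk-2\noptik)-4\totalk\alphaik^2\bigr)\sigma^2$. Writing $B \defeq 2\ccik\noptik\totalk+\sigma^2$ and factoring a $\totalk$ from the numerator of the second term on the RHS, both sides then decompose as sums of fractions over denominators drawn from $\{\noptik B,\; \noptik AB,\; (\noptik+\totalk)B,\; (\noptik+\totalk)AB\}$ with a common factor $\totalk$ in front.

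Next I would clear denominators by multiplying through by $\noptik(\noptik+\totalk)AB/\totalk$. The claim then reduces to the polynomial identity
\begin{equation*}
\bigl[(\noptik+\totalk)(2\ccik(\noptik)^2+\sigma^2) - \noptik(2\ccik(\noptik)^2+6\ccik\noptik\totalk+\sigma^2)\bigr]\, A + (\noptik+\totalk)\, N = \noptik\bigl(4\ccik(\noptik)^2(\totalk)^2 - 2(\noptik)^2\sigma^2 - \noptik\totalk\sigma^2\bigr).
\end{equation*}
A direct expansion of the bracket on the LHS collapses it to $\totalk\bigl(\sigma^2 - 4\ccik(\noptik)^2\bigr)$, so after substituting the definitions of $A$ and $N$ both sides become explicit polynomials in $\alphaik^2$: the LHS is affine in $\alphaik^2$, while the RHS does not involve $\alphaik$ at all.

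Finally I would match coefficients of $\alphaik^2$ and $\alphaik^0$ separately. The coefficient of $\alphaik^2$ on the LHS is
\begin{equation*}
4\totalk(\noptik+\totalk)\bigl(\sigma^2 - 4\ccik(\noptik)^2\bigr) + (\noptik+\totalk)\cdot 4\totalk\bigl(4\ccik(\noptik)^2 - \sigma^2\bigr) = 0,
\end{equation*}
matching the RHS. The $\alphaik$-independent part of the LHS is $-\noptik(\totalk)^2\bigl(\sigma^2 - 4\ccik(\noptik)^2\bigr) + (\noptik+\totalk)\noptik\sigma^2(\totalk-2\noptik)$; expanding $(\noptik+\totalk)(\totalk - 2\noptik) = (\totalk)^2 - \noptik\totalk - 2(\noptik)^2$ collapses this to $4\ccik(\noptik)^3(\totalk)^2 - (\noptik)^2\totalk\sigma^2 - 2(\noptik)^3\sigma^2$, which agrees with the RHS. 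The only real obstacle is bookkeeping across the four parameters $\noptik,\totalk,\ccik,\sigma^2$; once the identity is organized as a linear equation in $\alphaik^2$, each coefficient match is a one-line expansion.
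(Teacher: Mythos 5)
Your proof is correct and is essentially the same direct algebraic verification as the paper's, which likewise rescales both sides to the common denominators and checks that the residual polynomial terms cancel. Your organization---clearing denominators up front, reducing to a single identity that is affine in $\alphaik^2$, and matching the $\alphaik^2$ and $\alphaik^0$ coefficients separately---is a tidier way to do the same bookkeeping; both coefficient checks you report are accurate.
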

\begin{proof}
    Via a sequence of algebraic manipulations, we have that
    \begingroup
    \allowdisplaybreaks 
    \begin{align*}
        &~~~~~\frac{\totalk}{\noptik(2\ccik \noptik\totalk+\sigma^2)}
        \left(2\ccik(\noptik)^2+\sigma^2+
        \frac{\alphaik\left(16\ccik(\noptik)^2\totalk\alphaik^2+(\noptik(\totalk-2\noptik)-4\totalk\alphaik^2)\sigma^2\right)}{-\noptik\totalk\alphaik+4(\noptik+\totalk)\alphaik^3}\right)
        \\
        &=\frac{\totalk\rbr{2\ccik(\noptik)^2+\sigma^2}}{
            \noptik
            \rbr{
                2\ccik\noptik\totalk+\sigma^2
            }
        }
        +\frac{\totalk
            \rbr{
                16\ccik(\noptik)^2\totalk\alphaik^2+\rbr{\noptik\rbr{\totalk-2\noptik}-4\totalk\alphaik^2}\sigma^2
            }
        }{
            \noptik\rbr{
                2\ccik\noptik\totalk+\sigma^2
            }
            \rbr{
                -\noptik\totalk+4\rbr{\noptik+\totalk}\alphaik^2
            }
        }
        \\
        &=\frac{\totalk
                \rbr{
                    \frac{\noptik+\totalk}{\noptik} 
                }
                \rbr{
                    2\ccik(\noptik)^2+\sigma^2
                }
            }{
            \rbr{
                \noptik+\totalk
            } 
            \rbr{
                2\ccik\noptik\totalk+\sigma^2
            }
        }
        \\
        &~~~~~
        +\frac{\totalk
            \rbr{
                \frac{\noptik+\totalk}{\noptik}
            }
            \rbr{
                16\ccik(\noptik)^2\totalk\alphaik^2+\rbr{\noptik\rbr{\totalk-2\noptik}-4\totalk\alphaik^2}\sigma^2
            }
        }{
            \rbr{
                \noptik+\totalk
            }
            \rbr{
                -\noptik\totalk+4\noptik\alphaik^2+4\totalk\alphaik^2
            }
            \rbr{
                2\ccik\noptik\totalk+\sigma^2
            }
        }
        \\
        &=\frac{\totalk
                \rbr{
                    2\ccik(\noptik)^2+\sigma^2
                }
            }{
            \rbr{
                \noptik+\totalk
            } 
            \rbr{
                2\ccik\noptik\totalk+\sigma^2
            }
        }
        +\frac{\frac{\totalk^2}{\noptik}
            \rbr{
                2\ccik(\noptik)^2+\sigma^2
            }
        }{
            \rbr{
                \noptik+\totalk
            } 
            \rbr{
                2\ccik\noptik\totalk+\sigma^2
            }
        }
        \\
        &~~~~~+\frac{\totalk
            \rbr{
                1+\frac{\totalk}{\noptik} 
            }
            \rbr{
                16\ccik(\noptik)^2\totalk\alphaik^2+\rbr{\noptik\rbr{\totalk-2\noptik}-4\totalk\alphaik^2}\sigma^2
            }
        }{
            \rbr{
                \noptik+\totalk
            }
            \rbr{
                -\noptik\totalk+4\noptik\alphaik^2+4\totalk\alphaik^2
            }
            \rbr{
                2\ccik\noptik\totalk+\sigma^2
            }
        }
        \\
        &=\frac{\totalk
                \rbr{
                    2\ccik(\noptik)^2+6\ccik\noptik\totalk+\sigma^2
                }
            }{
            \rbr{
                \noptik+\totalk
            } 
            \rbr{
                2\ccik\noptik\totalk+\sigma^2
            }
        }
        +\frac{
            -6\ccik\noptik\totalk^2
            +\frac{\totalk^2}{\noptik}
            \rbr{
                2\ccik(\noptik)^2+\sigma^2
            }
        }{
            \rbr{
                \noptik+\totalk
            } 
            \rbr{
                2\ccik\noptik\totalk+\sigma^2
            }
        }
        \numberthis \label{eq:technical-apart1-a}
        \\
        &~~~~~+\frac{\totalk
            \rbr{
                1+\frac{\totalk}{\noptik} 
            }
            \rbr{
                16\ccik(\noptik)^2\totalk\alphaik^2+\rbr{\noptik\rbr{\totalk-2\noptik}-4\totalk\alphaik^2}\sigma^2
            }
        }{
            \rbr{
                \noptik+\totalk
            }
            \rbr{
                -\noptik\totalk+4\noptik\alphaik^2+4\totalk\alphaik^2
            }
            \rbr{
                2\ccik\noptik\totalk+\sigma^2
            }
        }
        \numberthis \label{eq:technical-apart1-b}
    \end{align*}
    \endgroup
    Notice that we can rewrite line~\eqref{eq:technical-apart1-b} as
    \begingroup
    \allowdisplaybreaks
    \begin{align*}
        &~~~~~\frac{\totalk
            \rbr{
                1+\frac{\totalk}{\noptik} 
            }
            \rbr{
                16\ccik(\noptik)^2\totalk\alphaik^2+\rbr{\noptik\rbr{\totalk-2\noptik}-4\totalk\alphaik^2}\sigma^2
            }
        }{
            \rbr{
                \noptik+\totalk
            }
            \rbr{
                -\noptik\totalk+4\noptik\alphaik^2+4\totalk\alphaik^2
            }
            \rbr{
                2\ccik\noptik\totalk+\sigma^2
            }
        }
        \\
        &=
        \frac{
            \rbr{
                1+\frac{\totalk}{\noptik} 
            }
            \rbr{
                16\ccik(\noptik)^2\totalk\alphaik^2-4\totalk^2\alphaik^2\sigma^2
            }
        }{
            \rbr{
                \noptik+\totalk
            }
            \rbr{
                -\noptik\totalk+4\noptik\alphaik^2+4\totalk\alphaik^2
            }
            \rbr{
                2\ccik\noptik\totalk+\sigma^2
            }
        }
        \\
        &~~~~~+\frac{
            \rbr{
                1+\frac{\totalk}{\noptik} 
            }
            \rbr{
                \totalk\noptik\rbr{\totalk-2\noptik}\sigma^2
            }
        }{
            \rbr{
                \noptik+\totalk
            }
            \rbr{
                -\noptik\totalk+4\noptik\alphaik^2+4\totalk\alphaik^2
            }
            \rbr{
                2\ccik\noptik\totalk+\sigma^2
            }
        }
        \\
        &=
        \frac{
            \rbr{
                1+\frac{\totalk}{\noptik} 
            }
            \rbr{
                16\ccik(\noptik)^2\totalk\alphaik^2-4\totalk^2\alphaik^2\sigma^2
            }
        }{
            \rbr{
                \noptik+\totalk
            }
            \rbr{
                -\noptik\totalk+4\noptik\alphaik^2+4\totalk\alphaik^2
            }
            \rbr{
                2\ccik\noptik\totalk+\sigma^2
            }
        }
        \\
        &~~~~~+\frac{
            \noptik\totalk^2\sigma^2-2(\noptik)^2\totalk\sigma^2+\totalk^2\rbr{\totalk-2\noptik}\sigma^2
        }{
            \rbr{
                \noptik+\totalk
            }
            \rbr{
                -\noptik\totalk+4\noptik\alphaik^2+4\totalk\alphaik^2
            }
            \rbr{
                2\ccik\noptik\totalk+\sigma^2
            }
        }
        \\
        &=\frac{
            \rbr{
                1+\frac{\totalk}{\noptik} 
            }
            \rbr{
                16\ccik(\noptik)^2\totalk\alphaik^2-4\totalk^2\alphaik^2\sigma^2
            }
        }{
            \rbr{
                \noptik+\totalk
            }
            \rbr{
                -\noptik\totalk+4\noptik\alphaik^2+4\totalk\alphaik^2
            }
            \rbr{
                2\ccik\noptik\totalk+\sigma^2
            }
        }
        \\
        &~~~~~+\frac{
            4\ccik(\noptik)^2\totalk^3-2(\noptik)^2\totalk\sigma^2-\noptik\totalk\sigma^2
        }{
            \rbr{
                \noptik+\totalk
            }
            \rbr{
                -\noptik\totalk+4\noptik\alphaik^2+4\totalk\alphaik^2
            }
            \rbr{
                2\ccik\noptik\totalk+\sigma^2
            }
        }
        \\
        &~~~~~+\frac{
            -4\ccik(\noptik)^2\totalk^3+\totalk^3\sigma^2
        }{
            \rbr{
                \noptik+\totalk
            }
            \rbr{
                -\noptik\totalk+4\noptik\alphaik^2+4\totalk\alphaik^2
            }
            \rbr{
                2\ccik\noptik\totalk+\sigma^2
            }
        }
    \end{align*}
    \endgroup
    Substituting in this expression, we have that lines~\eqref{eq:technical-apart1-a} and~\eqref{eq:technical-apart1-b} become 
    \begingroup
    \allowdisplaybreaks
    \begin{align*}
         &~~~~~\frac{\totalk
                \rbr{
                    2\ccik(\noptik)^2+6\ccik\noptik\totalk+\sigma^2
                }
            }{
            \rbr{
                \noptik+\totalk
            } 
            \rbr{
                2\ccik\noptik\totalk+\sigma^2
            }
        }
        \\
        &~~~~~+\frac{
            4\ccik(\noptik)^2\totalk^3-2(\noptik)^2\totalk\sigma^2-\noptik\totalk\sigma^2
        }{
            \rbr{
                \noptik+\totalk
            }
            \rbr{
                -\noptik\totalk+4\noptik\alphaik^2+4\totalk\alphaik^2
            }
            \rbr{
                2\ccik\noptik\totalk+\sigma^2
            }
        }
        \\
        &~~~~~+\frac{
            -6\ccik\noptik\totalk^2
            +\frac{\totalk^2}{\noptik}
            \rbr{
                2\ccik(\noptik)^2+\sigma^2
            }
        }{
            \rbr{
                \noptik+\totalk
            } 
            \rbr{
                2\ccik\noptik\totalk+\sigma^2
            }
        }
        \numberthis\label{eq:technical-apart1-c}
        \\
        &~~~~~+\frac{
            \rbr{
                1+\frac{\totalk}{\noptik} 
            }
            \rbr{
                16\ccik(\noptik)^2\totalk\alphaik^2-4\totalk^2\alphaik^2\sigma^2
            }
        }{
            \rbr{
                \noptik+\totalk
            }
            \rbr{
                -\noptik\totalk+4\noptik\alphaik^2+4\totalk\alphaik^2
            }
            \rbr{
                2\ccik\noptik\totalk+\sigma^2
            }
        }
        \numberthis\label{eq:technical-apart1-d}
        \\
        &~~~~~+\frac{
            -4\ccik(\noptik)^2\totalk^3+\totalk^3\sigma^2
        }{
            \rbr{
                \noptik+\totalk
            }
            \rbr{
                -\noptik\totalk+4\noptik\alphaik^2+4\totalk\alphaik^2
            }
            \rbr{
                2\ccik\noptik\totalk+\sigma^2
            }
        }
        \numberthis\label{eq:technical-apart1-e}
    \end{align*}
    \endgroup
    Now notice that if we can show lines~\eqref{eq:technical-apart1-c}, \eqref{eq:technical-apart1-d}, and \eqref{eq:technical-apart1-e} sum to 0 then we are done with the proof. We can rewrite the sum of these three lines as
    \begingroup
    \allowdisplaybreaks
    \begin{align*}
        &~~~~~\frac{
            \rbr{
                -4\ccik\noptik\totalk^2
                +
                \frac{\totalk^2\sigma^2}{\noptik}
            }
            \rbr{
                -\noptik\totalk+4\noptik\alphaik^2+4\totalk\alphaik^2
            }
        }{
            \rbr{
                \noptik+\totalk
            }
            \rbr{
                -\noptik\totalk+4\noptik\alphaik^2+4\totalk\alphaik^2
            }
            \rbr{
                2\ccik\noptik\totalk+\sigma^2
            }
        }
        \\
        &~~~~~+\frac{
            \rbr{
                1+\frac{\totalk}{\noptik} 
            }
            \rbr{
                16\ccik(\noptik)^2\totalk\alphaik^2-4\totalk^2\alphaik^2\sigma^2
            }
        }{
            \rbr{
                \noptik+\totalk
            }
            \rbr{
                -\noptik\totalk+4\noptik\alphaik^2+4\totalk\alphaik^2
            }
            \rbr{
                2\ccik\noptik\totalk+\sigma^2
            }
        }
        \\
        &~~~~~+\frac{
            -4\ccik(\noptik)^2\totalk^3+\totalk^3\sigma^2
        }{
            \rbr{
                \noptik+\totalk
            }
            \rbr{
                -\noptik\totalk+4\noptik\alphaik^2+4\totalk\alphaik^2
            }
            \rbr{
                2\ccik\noptik\totalk+\sigma^2
            }
        }
        \\
        &=
        \Bigg(
            \rbr{
                \noptik+\totalk
            }
            \rbr{
                -\noptik\totalk+4\noptik\alphaik^2+4\totalk\alphaik^2
            }
            \rbr{
                2\ccik\noptik\totalk+\sigma^2
            }
        \Bigg)^{-1}
        \\
        &~~~~
        \Bigg(
            \rbr{
                -4\ccik\noptik\totalk^2
                +
                \frac{\totalk^2\sigma^2}{\noptik}
            }
            \rbr{
                -\noptik\totalk+4\noptik\alphaik^2+4\totalk\alphaik^2
            }
        \numberthis \label{eq:technical-apart1-f}
        \\
            &\qquad\quad~~+\rbr{
                1+\frac{\totalk}{\noptik} 
            }
            \rbr{
                16\ccik(\noptik)^2\totalk\alphaik^2-4\totalk^2\alphaik^2\sigma^2
            }
            -4\ccik(\noptik)^2\totalk^3+\totalk^3\sigma^2
        \Bigg)
        \numberthis \label{eq:technical-apart1-g}
    \end{align*}
    \endgroup
    We will conclude by showing that lines~\eqref{eq:technical-apart1-f} and~\eqref{eq:technical-apart1-g} sum to 0 which will complete the proof.
    \begingroup
    \allowdisplaybreaks
    \begin{align*}
        &~~~~~\rbr{
            -4\ccik\noptik\totalk^2
            +
            \frac{\totalk^2\sigma^2}{\noptik}
        }
        \rbr{
            -\noptik\totalk+4\noptik\alphaik^2+4\totalk\alphaik^2
        }
        \\
            &~~~~~+\rbr{
                1+\frac{\totalk}{\noptik} 
            }
            \rbr{
                16\ccik(\noptik)^2\totalk\alphaik^2-4\totalk^2\alphaik^2\sigma^2
            }
            -4\ccik(\noptik)^2\totalk^3+\totalk^3\sigma^2
        \\
        &=-\ccik(\noptik)^2\totalk^3-16\ccik(\noptik)^2\totalk^2\alphaik^2-16\ccik\noptik\totalk^3\alphaik^2
        \\
        &~~~~~ -\totalk^3\sigma^2+4\totalk^2\sigma^2\alphaik^2+\frac{4\totalk^3\sigma^2\alphaik^2}{\noptik}
        \\
        &~~~~~ +16\ccik(\noptik)^2\totalk^2\alphaik^2-4\totalk^2\alphaik^2\sigma^2
        \\
        &~~~~~ +16\ccik\noptik\totalk^3\alphaik^2-\frac{4\totalk^3\alphaik^2\sigma^2}{\noptik}
        \\
        &~~~~~ -4\ccik(\noptik)^2\totalk^3+\totalk^3\sigma^2
        \\
        &=0
    \end{align*}
    \endgroup
\end{proof}

\begin{lemma}
    \label{lem:tech_penalty-apart2}
    The following two expressions are equal
    \begin{align*}
        &~~~~~\frac{\totalk}{\noptik\sigma^2}\rbr{2\ccik(\noptik)^2+\sigma^2+
        \frac{\alphaik\left(16\ccik(\noptik)^2\totalk\alphaik^2+(\noptik(\totalk-2\noptik)-4\totalk\alphaik^2)\sigma^2\right)}{-\noptik\totalk\alphaik+4(\noptik+\totalk)\alphaik^3}}
        \\
        &=\frac{\totalk(2\ccik(\noptik)^2+6\ccik\noptik\totalk+\sigma^2)}{(\noptik+\totalk)\sigma^2}
        +\frac{4\ccik(\noptik)^2(\totalk)^3-2(\noptik)^2\totalk\sigma^2-\noptik(\totalk)^2\sigma^2}{(\noptik+\totalk)(-\noptik\totalk+4\noptik\alphaik^2+4\totalk\alphaik^2)\sigma^2}
        \; .
    \end{align*}
\end{lemma}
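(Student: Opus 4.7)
The plan is to observe that Lemma~\ref{lem:tech_penalty-apart2} is a straightforward rescaling of Lemma~\ref{lem:tech_penalty-apart1}, which has already been proved. Specifically, the bracketed expression $2\ccik(\noptik)^2+\sigma^2+ \frac{\alphaik\left(16\ccik(\noptik)^2\totalk\alphaik^2+(\noptik(\totalk-2\noptik)-4\totalk\alphaik^2)\sigma^2\right)}{-\noptik\totalk\alphaik+4(\noptik+\totalk)\alphaik^3}$ appearing on the LHS of both lemmas is literally identical; the only difference between the two LHS's is that the outer prefactor $\frac{\totalk}{\noptik(2\ccik\noptik\totalk+\sigma^2)}$ in Lemma~\ref{lem:tech_penalty-apart1} is replaced by $\frac{\totalk}{\noptik\sigma^2}$ in Lemma~\ref{lem:tech_penalty-apart2}. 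These prefactors differ by exactly the scalar $\lambda \defeq \frac{2\ccik\noptik\totalk+\sigma^2}{\sigma^2}$, so the Lemma~\ref{lem:tech_penalty-apart2} LHS equals $\lambda$ times the Lemma~\ref{lem:tech_penalty-apart1} LHS.

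The plan is then to show that the RHS of Lemma~\ref{lem:tech_penalty-apart2} is obtained from the RHS of Lemma~\ref{lem:tech_penalty-apart1} by that same scalar $\lambda$. Inspecting the two summands of the Lemma~\ref{lem:tech_penalty-apart1} RHS, each denominator carries a factor of $(2\ccik\noptik\totalk+\sigma^2)$; multiplying by $\lambda$ cancels this factor and introduces a $\sigma^2$, producing exactly the two summands written on the RHS of Lemma~\ref{lem:tech_penalty-apart2}. I would write this verification as two short displayed equations, one per summand, to make the cancellation transparent.

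Combining the two observations, multiplying both sides of the identity guaranteed by Lemma~\ref{lem:tech_penalty-apart1} by $\lambda$ (which is permitted since $\sigma > 0$, so $\lambda$ is well-defined and nonzero) yields precisely the identity claimed in Lemma~\ref{lem:tech_penalty-apart2}. There is no substantive obstacle here: the entire argument is bookkeeping on a single common factor, and no new algebraic manipulation of the $\alphaik$-dependent piece (which was the delicate part of Lemma~\ref{lem:tech_penalty-apart1}) is required. The only point requiring attention is to state clearly that the bracketed expression inside the prefactor is unchanged across the two lemmas, so the rescaling really does carry one identity to the other.
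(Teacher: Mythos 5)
Your proposal is correct, and it is a genuinely different (and shorter) route than the paper's. The paper proves Lemma~\ref{lem:tech_penalty-apart2} from scratch by repeating essentially the entire chain of algebraic manipulations used for Lemma~\ref{lem:tech_penalty-apart1}: splitting off the $\frac{\noptik+\totalk}{\noptik}$ factor, regrouping the $6\ccik\noptik\totalk^2$ term, and verifying that three residual fractions sum to zero. You instead observe that the bracketed $\alphaik$-dependent expression is literally identical in both lemmas, that the two left-hand sides differ only by the scalar $\lambda = \frac{2\ccik\noptik\totalk+\sigma^2}{\sigma^2}$, and that multiplying each summand of the Lemma~\ref{lem:tech_penalty-apart1} right-hand side by $\lambda$ cancels the common $(2\ccik\noptik\totalk+\sigma^2)$ factor in each denominator and replaces it with $\sigma^2$, yielding exactly the Lemma~\ref{lem:tech_penalty-apart2} right-hand side. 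Since $\lambda$ is a finite, nonzero scalar (all quantities involved are positive), multiplying the already-established identity by $\lambda$ is valid. Your approach buys brevity and makes the relationship between the two lemmas transparent; the paper's approach is self-contained but duplicates roughly a page of algebra. The one presentational point worth keeping is the explicit two-line verification that each RHS summand rescales correctly, which you already flag.
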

\begin{proof}
    Via a sequence of algebraic manipulations, we have that
    \begingroup
    \allowdisplaybreaks 
    \begin{align*}
        &~~~~~\frac{\totalk}{\noptik\sigma^2}\left(2\ccik(\noptik)^2+\sigma^2+
        \frac{\alphaik\left(16\ccik(\noptik)^2\totalk\alphaik^2+(\noptik(\totalk-2\noptik)-4\totalk\alphaik^2)\sigma^2\right)}{-\noptik\totalk\alphaik+4(\noptik+\totalk)\alphaik^3}\right)
        \\
        &=\frac{\totalk\rbr{2\ccik(\noptik)^2+\sigma^2}}{
            \noptik\sigma^2
        }
        +\frac{\totalk
            \rbr{
                16\ccik(\noptik)^2\totalk\alphaik^2+\rbr{\noptik\rbr{\totalk-2\noptik}-4\totalk\alphaik^2}\sigma^2
            }
        }{
            \noptik
            \rbr{
                -\noptik\totalk+4\rbr{\noptik+\totalk}\alphaik^2
            }
            \sigma^2
        }
        \\
        &=\frac{\totalk
                \rbr{
                    \frac{\noptik+\totalk}{\noptik} 
                }
                \rbr{
                    2\ccik(\noptik)^2+\sigma^2
                }
            }{
            \rbr{
                \noptik+\totalk
            } 
            \sigma^2
        }
        \\
        &~~~~~
        +\frac{\totalk
            \rbr{
                \frac{\noptik+\totalk}{\noptik}
            }
            \rbr{
                16\ccik(\noptik)^2\totalk\alphaik^2+\rbr{\noptik\rbr{\totalk-2\noptik}-4\totalk\alphaik^2}\sigma^2
            }
        }{
            \rbr{
                \noptik+\totalk
            }
            \rbr{
                -\noptik\totalk+4\noptik\alphaik^2+4\totalk\alphaik^2
            }
            \sigma^2
        }
        \\
        &=\frac{\totalk
                \rbr{
                    2\ccik(\noptik)^2+\sigma^2
                }
            }{
            \rbr{
                \noptik+\totalk
            } 
            \sigma^2
        }
        +\frac{\frac{\totalk^2}{\noptik}
            \rbr{
                2\ccik(\noptik)^2+\sigma^2
            }
        }{
            \rbr{
                \noptik+\totalk
            } 
            \sigma^2
        }
        \\
        &~~~~
        +\frac{\totalk
            \rbr{
                1+\frac{\totalk}{\noptik} 
            }
            \rbr{
                16\ccik(\noptik)^2\totalk\alphaik^2+\rbr{\noptik\rbr{\totalk-2\noptik}-4\totalk\alphaik^2}\sigma^2
            }
        }{
            \rbr{
                \noptik+\totalk
            }
            \rbr{
                -\noptik\totalk+4\noptik\alphaik^2+4\totalk\alphaik^2
            }
            \sigma^2
        }
        \\
        &=\frac{\totalk
                \rbr{
                    2\ccik(\noptik)^2+6\ccik\noptik\totalk+\sigma^2
                }
            }{
            \rbr{
                \noptik+\totalk
            } 
            \sigma^2
        }
        +\frac{
            -6\ccik\noptik\totalk^2
            +\frac{\totalk^2}{\noptik}
            \rbr{
                2\ccik(\noptik)^2+\sigma^2
            }
        }{
            \rbr{
                \noptik+\totalk
            } 
            \sigma^2
        }
        \numberthis \label{eq:technical-apart2-a}
        \\
        &~~~~~
        +\frac{\totalk
            \rbr{
                1+\frac{\totalk}{\noptik} 
            }
            \rbr{
                16\ccik(\noptik)^2\totalk\alphaik^2+\rbr{\noptik\rbr{\totalk-2\noptik}-4\totalk\alphaik^2}\sigma^2
            }
        }{
            \rbr{
                \noptik+\totalk
            }
            \rbr{
                -\noptik\totalk+4\noptik\alphaik^2+4\totalk\alphaik^2
            }
            \sigma^2
        } 
        \numberthis \label{eq:technical-apart2-b}
    \end{align*}
    \endgroup
    Notice that we can rewrite line~\eqref{eq:technical-apart2-b} as
    \begingroup
    \allowdisplaybreaks
    \begin{align*}
        &~~~~~\frac{\totalk
            \rbr{
                1+\frac{\totalk}{\noptik} 
            }
            \rbr{
                16\ccik(\noptik)^2\totalk\alphaik^2+\rbr{\noptik\rbr{\totalk-2\noptik}-4\totalk\alphaik^2}\sigma^2
            }
        }{
            \rbr{
                \noptik+\totalk
            }
            \rbr{
                -\noptik\totalk+4\noptik\alphaik^2+4\totalk\alphaik^2
            }
            \sigma^2
        }
        \\
        &=
        \frac{
            \rbr{
                1+\frac{\totalk}{\noptik} 
            }
            \rbr{
                16\ccik(\noptik)^2\totalk\alphaik^2-4\totalk^2\alphaik^2\sigma^2
            }
        }{
            \rbr{
                \noptik+\totalk
            }
            \rbr{
                -\noptik\totalk+4\noptik\alphaik^2+4\totalk\alphaik^2
            }
            \sigma^2
        }
        +\frac{
            \rbr{
                1+\frac{\totalk}{\noptik} 
            }
            \rbr{
                \totalk\noptik\rbr{\totalk-2\noptik}\sigma^2
            }
        }{
            \rbr{
                \noptik+\totalk
            }
            \rbr{
                -\noptik\totalk+4\noptik\alphaik^2+4\totalk\alphaik^2
            }
            \sigma^2
        }
        \\
        &=
        \frac{
            \rbr{
                1+\frac{\totalk}{\noptik} 
            }
            \rbr{
                16\ccik(\noptik)^2\totalk\alphaik^2-4\totalk^2\alphaik^2\sigma^2
            }
        }{
            \rbr{
                \noptik+\totalk
            }
            \rbr{
                -\noptik\totalk+4\noptik\alphaik^2+4\totalk\alphaik^2
            }
            \sigma^2
        }
        +\frac{
            \noptik\totalk^2\sigma^2-2(\noptik)^2\totalk\sigma^2+\totalk^2\rbr{\totalk-2\noptik}\sigma^2
        }{
            \rbr{
                \noptik+\totalk
            }
            \rbr{
                -\noptik\totalk+4\noptik\alphaik^2+4\totalk\alphaik^2
            }
            \sigma^2
        }
        \\
        &=\frac{
            \rbr{
                1+\frac{\totalk}{\noptik} 
            }
            \rbr{
                16\ccik(\noptik)^2\totalk\alphaik^2-4\totalk^2\alphaik^2\sigma^2
            }
        }{
            \rbr{
                \noptik+\totalk
            }
            \rbr{
                -\noptik\totalk+4\noptik\alphaik^2+4\totalk\alphaik^2
            }
            \sigma^2
        }
        +\frac{
            4\ccik(\noptik)^2\totalk^3-2(\noptik)^2\totalk\sigma^2-\noptik\totalk\sigma^2
        }{
            \rbr{
                \noptik+\totalk
            }
            \rbr{
                -\noptik\totalk+4\noptik\alphaik^2+4\totalk\alphaik^2
            }
            \sigma^2
        }
        \\
        &~~~~+\frac{
            -4\ccik(\noptik)^2\totalk^3+\totalk^3\sigma^2
        }{
            \rbr{
                \noptik+\totalk
            }
            \rbr{
                -\noptik\totalk+4\noptik\alphaik^2+4\totalk\alphaik^2
            }
            \sigma^2
        }
    \end{align*}
    \endgroup
    Substituting in this expression, we have that lines~\eqref{eq:technical-apart2-a} and~\eqref{eq:technical-apart2-b} become 
    \begingroup
    \allowdisplaybreaks
    \begin{align*}
         &~~~~~\frac{\totalk
                \rbr{
                    2\ccik(\noptik)^2+6\ccik\noptik\totalk+\sigma^2
                }
            }{
            \rbr{
                \noptik+\totalk
            } 
            \sigma^2
        }
        +\frac{
            4\ccik(\noptik)^2\totalk^3-2(\noptik)^2\totalk\sigma^2-\noptik\totalk\sigma^2
        }{
            \rbr{
                \noptik+\totalk
            }
            \rbr{
                -\noptik\totalk+4\noptik\alphaik^2+4\totalk\alphaik^2
            }
            \sigma^2
        }
        \\
        &~~~~~+\frac{
            -6\ccik\noptik\totalk^2
            +\frac{\totalk^2}{\noptik}
            \rbr{
                2\ccik(\noptik)^2+\sigma^2
            }
        }{
            \rbr{
                \noptik+\totalk
            } 
            \sigma^2
        }
        \numberthis\label{eq:technical-apart2-c}
        \\
        &~~~~~+\frac{
            \rbr{
                1+\frac{\totalk}{\noptik} 
            }
            \rbr{
                16\ccik(\noptik)^2\totalk\alphaik^2-4\totalk^2\alphaik^2\sigma^2
            }
        }{
            \rbr{
                \noptik+\totalk
            }
            \rbr{
                -\noptik\totalk+4\noptik\alphaik^2+4\totalk\alphaik^2
            }
            \sigma^2
        }
        \numberthis\label{eq:technical-apart2-d}
        \\
        &~~~~~+\frac{
            -4\ccik(\noptik)^2\totalk^3+\totalk^3\sigma^2
        }{
            \rbr{
                \noptik+\totalk
            }
            \rbr{
                -\noptik\totalk+4\noptik\alphaik^2+4\totalk\alphaik^2
            }
            \sigma^2
        }
        \numberthis\label{eq:technical-apart2-e}
    \end{align*}
    \endgroup
    Now notice that if we can show lines~\eqref{eq:technical-apart2-c}, \eqref{eq:technical-apart2-d}, and \eqref{eq:technical-apart2-e} sum to 0 then we are done with the proof. We can rewrite the sum of these three lines as
    \begingroup
    \allowdisplaybreaks
    \begin{align*}
        &~~~~~\frac{
            \rbr{
                -4\ccik\noptik\totalk^2
                +
                \frac{\totalk^2\sigma^2}{\noptik}
            }
            \rbr{
                -\noptik\totalk+4\noptik\alphaik^2+4\totalk\alphaik^2
            }
        }{
            \rbr{
                \noptik+\totalk
            }
            \rbr{
                -\noptik\totalk+4\noptik\alphaik^2+4\totalk\alphaik^2
            }
            \sigma^2
        }
        \\
        &~~~~~
        +\frac{
            \rbr{
                1+\frac{\totalk}{\noptik} 
            }
            \rbr{
                16\ccik(\noptik)^2\totalk\alphaik^2-4\totalk^2\alphaik^2\sigma^2
            }
        }{
            \rbr{
                \noptik+\totalk
            }
            \rbr{
                -\noptik\totalk+4\noptik\alphaik^2+4\totalk\alphaik^2
            }
            \sigma^2
        }
        \\
        &~~~~~+\frac{
            -4\ccik(\noptik)^2\totalk^3+\totalk^3\sigma^2
        }{
            \rbr{
                \noptik+\totalk
            }
            \rbr{
                -\noptik\totalk+4\noptik\alphaik^2+4\totalk\alphaik^2
            }
            \sigma^2
        }
        \\
        &=
        \Bigg(
            \rbr{
                \noptik+\totalk
            }
            \rbr{
                -\noptik\totalk+4\noptik\alphaik^2+4\totalk\alphaik^2
            }
            \sigma^2
        \Bigg)^{-1}
        \\
        &~~~~~\times
        \Bigg(
            \rbr{
                -4\ccik\noptik\totalk^2
                +
                \frac{\totalk^2\sigma^2}{\noptik}
            }
            \rbr{
                -\noptik\totalk+4\noptik\alphaik^2+4\totalk\alphaik^2
            }
        \numberthis \label{eq:technical-apart2-f}
        \\
            &\qquad\quad~~+\rbr{
                1+\frac{\totalk}{\noptik} 
            }
            \rbr{
                16\ccik(\noptik)^2\totalk\alphaik^2-4\totalk^2\alphaik^2\sigma^2
            }
            -4\ccik(\noptik)^2\totalk^3+\totalk^3\sigma^2
        \Bigg)
        \numberthis \label{eq:technical-apart2-g}
    \end{align*}
    \endgroup
    We will conclude by showing that lines~\eqref{eq:technical-apart2-f} and~\eqref{eq:technical-apart2-g} sum to 0 which will complete the proof.
    \begingroup
    \allowdisplaybreaks
    \begin{align*}
        &~~~~~\rbr{
            -4\ccik\noptik\totalk^2
            +
            \frac{\totalk^2\sigma^2}{\noptik}
        }
        \rbr{
            -\noptik\totalk+4\noptik\alphaik^2+4\totalk\alphaik^2
        }
        \\
            &~~~~~+\rbr{
                1+\frac{\totalk}{\noptik} 
            }
            \rbr{
                16\ccik(\noptik)^2\totalk\alphaik^2-4\totalk^2\alphaik^2\sigma^2
            }
            -4\ccik(\noptik)^2\totalk^3+\totalk^3\sigma^2
        \\
        &=-\ccik(\noptik)^2\totalk^3-16\ccik(\noptik)^2\totalk^2\alphaik^2-16\ccik\noptik\totalk^3\alphaik^2
        \\
        &~~~~~ -\totalk^3\sigma^2+4\totalk^2\sigma^2\alphaik^2+\frac{4\totalk^3\sigma^2\alphaik^2}{\noptik}
        \\
        &~~~~~ +16\ccik(\noptik)^2\totalk^2\alphaik^2-4\totalk^2\alphaik^2\sigma^2
        \\
        &~~~~~ +16\ccik\noptik\totalk^3\alphaik^2-\frac{4\totalk^3\alphaik^2\sigma^2}{\noptik}
        \\
        &~~~~~ -4\ccik(\noptik)^2\totalk^3+\totalk^3\sigma^2
        \\
        &=0
    \end{align*}
    \endgroup
\end{proof}

\begin{lemma}
    \label{lem:tech_penalty-subalpha1}
    Let $\alphaik=\sqrt{\noptik}$. Then
    \begin{align*}
        &~~~~~\frac{\totalk(2\ccik(\noptik)^2+6\ccik\noptik\totalk+\sigma^2)}{(\noptik+\totalk)(2\ccik \noptik\totalk+\sigma^2)}
        +\frac{4\ccik(\noptik)^2(\totalk)^3-2(\noptik)^2\totalk\sigma^2-\noptik(\totalk)^2\sigma^2}{(\noptik+\totalk)(-\noptik\totalk+4\noptik\alphaik^2+4\totalk\alphaik^2)(2\ccik\noptik\totalk+\sigma^2)}
        \\
        &=\frac{
            2\totalk\rbr{\ccik\noptik\rbr{4\noptik+11\totalk}+\sigma^2} 
        }{
            \rbr{4\noptik+3\totalk} 
            \rbr{2\ccik\noptik\totalk+\sigma^2}
        }
        \; .
    \end{align*}
\end{lemma}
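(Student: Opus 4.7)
The plan is to verify this as a purely algebraic identity by substituting $\alphaik^2 = \noptik$ on the left-hand side and massaging both fractions onto a common denominator so that the answer matches the right-hand side. There is no conceptual content here---the lemma is a bookkeeping step---so the only real task is to avoid an arithmetic slip while simplifying.

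First I would substitute $\alphaik^2 = \noptik$ into the awkward middle factor of the second denominator on the LHS:
\[
-\noptik\totalk + 4\noptik\alphaik^2 + 4\totalk\alphaik^2 = -\noptik\totalk + 4(\noptik)^2 + 4\noptik\totalk = \noptik(4\noptik+3\totalk).
\]
Next I would factor $\noptik\totalk$ out of the numerator of the second LHS fraction:
\[
4\ccik(\noptik)^2(\totalk)^3 - 2(\noptik)^2\totalk\sigma^2 - \noptik(\totalk)^2\sigma^2 = \noptik\totalk\bigl[4\ccik\noptik(\totalk)^2 - 2\noptik\sigma^2 - \totalk\sigma^2\bigr].
\]
The $\noptik$ factor in the numerator cancels the $\noptik$ factor in the denominator, so the second fraction simplifies to $\totalk[\,4\ccik\noptik(\totalk)^2 - 2\noptik\sigma^2 - \totalk\sigma^2\,]$ divided by $(\noptik+\totalk)(4\noptik+3\totalk)(2\ccik\noptik\totalk+\sigma^2)$.

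Then I would put both fractions on the common denominator $(\noptik+\totalk)(4\noptik+3\totalk)(2\ccik\noptik\totalk+\sigma^2)$ by multiplying the first fraction's numerator by $(4\noptik+3\totalk)$. After pulling out a common $\totalk$, the combined numerator is
\[
\totalk\Bigl[(2\ccik(\noptik)^2+6\ccik\noptik\totalk+\sigma^2)(4\noptik+3\totalk) + 4\ccik\noptik(\totalk)^2 - 2\noptik\sigma^2 - \totalk\sigma^2\Bigr].
\]
Expanding the product and collecting terms in $\ccik$ and $\sigma^2$ yields $2\totalk\bigl[4\ccik(\noptik)^3 + 15\ccik(\noptik)^2\totalk + 11\ccik\noptik(\totalk)^2 + \sigma^2\noptik + \sigma^2\totalk\bigr]$. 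Recognizing the cubic factorization $4(\noptik)^2 + 15\noptik\totalk + 11(\totalk)^2 = (4\noptik+11\totalk)(\noptik+\totalk)$, the bracketed expression becomes $(\noptik+\totalk)\bigl[\ccik\noptik(4\noptik+11\totalk) + \sigma^2\bigr]$, and the $(\noptik+\totalk)$ factor cancels against the one in the common denominator to give exactly the RHS.

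The only step with any potential for error is the cubic factorization in the last paragraph; I would verify it by expanding $(4\noptik+11\totalk)(\noptik+\totalk) = 4(\noptik)^2 + 4\noptik\totalk + 11\noptik\totalk + 11(\totalk)^2 = 4(\noptik)^2 + 15\noptik\totalk + 11(\totalk)^2$ as a sanity check. Everything else is routine, so I expect the proof to read as a sequence of equalities with brief justifications.
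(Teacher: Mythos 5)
Your proposal is correct and follows essentially the same route as the paper: substitute $\alphaik^2=\noptik$ so the second denominator factor becomes $\noptik(4\noptik+3\totalk)$, combine over the common denominator, expand, and factor the resulting polynomial via $(4\noptik+11\totalk)(\noptik+\totalk)$. All of the intermediate expressions you give (including the combined numerator $2\totalk[4\ccik(\noptik)^3+15\ccik(\noptik)^2\totalk+11\ccik\noptik(\totalk)^2+\sigma^2\noptik+\sigma^2\totalk]$) check out against the paper's computation.
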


\begin{proof}
    Via a sequence of algebraic manipulations, we have
    \begingroup
    \allowdisplaybreaks
    \begin{align*}
        &~~~~~\frac{\totalk(2\ccik(\noptik)^2+6\ccik\noptik\totalk+\sigma^2)}{(\noptik+\totalk)\rbr{2\ccik\noptik\totalk+\sigma^2}}
        +\frac{4\ccik(\noptik)^2(\totalk)^3-2(\noptik)^2\totalk\sigma^2-\noptik(\totalk)^2\sigma^2}{(\noptik+\totalk)(-\noptik\totalk+4\noptik\alphaik^2+4\totalk\alphaik^2)\rbr{2\ccik\noptik\totalk+\sigma^2}}
        \\
        &=\frac{
            \frac{\rbr{4\noptik+3\totalk}}{\rbr{\noptik+\totalk}} 
            \totalk(2\ccik(\noptik)^2+6\ccik\noptik\totalk+\sigma^2)
        }{
            (4\noptik+3\totalk)
            \rbr{2\ccik\noptik\totalk+\sigma^2}
        }
        +\frac{
            4\ccik(\noptik)^2(\totalk)^3-2(\noptik)^2\totalk\sigma^2-\noptik(\totalk)^2\sigma^2
        }{
            (\noptik+\totalk)\noptik(4\noptik+3\totalk)
            \rbr{2\ccik\noptik\totalk+\sigma^2}
        }
        \\
        &=\frac{
            \frac{\rbr{4\noptik+3\totalk}}{\rbr{\noptik+\totalk}} 
            \totalk(2\ccik(\noptik)^2+6\ccik\noptik\totalk+\sigma^2)
        }{
            (4\noptik+3\totalk)
            \rbr{2\ccik\noptik\totalk+\sigma^2}
        }
        +\frac{
            \frac{1}{\noptik+\totalk}
            \rbr{
                4\ccik(\noptik)^2(\totalk)^3-2(\noptik)^2\totalk\sigma^2-\noptik(\totalk)^2\sigma^2
            }
        }{
            (4\noptik+3\totalk)
            \rbr{2\ccik\noptik\totalk+\sigma^2}
        }
        \\
        &=\frac{\frac{1}{\rbr{\noptik+\totalk}}}{(4\noptik+3\totalk)
        \rbr{2\ccik\noptik\totalk+\sigma^2}}
        \Big(
            \rbr{4\noptik+3\totalk} T 
            \rbr{
                2\ccik(\noptik)^2+6\ccik\noptik\totalk+\sigma^2
            }
            \\
            &\hspace{5.7cm}+4\ccik\nopti\totalk^3-2\noptik\totalk\sigma^2-\totalk^2\sigma^2
        \Big)
        \\
        &=\frac{\frac{1}{\rbr{\noptik+\totalk}}}{(4\noptik+3\totalk)\rbr{2\ccik\noptik\totalk+\sigma^2}}
        \Big(
            8\ccik(\noptik)^3\totalk+30\ccik(\noptik)^2\totalk^2
            +2\noptik\totalk\sigma^2+22\ccik\noptik\totalk^3+2\totalk^2\sigma^2 
        \Big)
        \\
        &=\frac{\frac{1}{\rbr{\noptik+\totalk}}}{(4\noptik+3\totalk)\rbr{2\ccik\noptik\totalk+\sigma^2}}
        \rbr{
            2\totalk\rbr{\noptik+\totalk}\rbr{\ccik\noptik\rbr{4\noptik+11\totalk}+\sigma^2}
        }
        \\
        &=\frac{
            2\totalk
            \rbr{\ccik\noptik
                \rbr{4\noptik+11\totalk}+\sigma^2}
        }{
            (4\noptik+3\totalk)
            \rbr{2\ccik\noptik\totalk+\sigma^2}
        }
    \end{align*}
    \endgroup
\end{proof}

\begin{lemma}
    \label{lem:tech_penalty-subalpha2} 
    Let $\alphaik=\sqrt{\noptik}$. Then
    \begin{align*}
        &~~~~~\frac{\totalk(2\ccik(\noptik)^2+6\ccik\noptik\totalk+\sigma^2)}{(\noptik+\totalk)\sigma^2}
        +\frac{4\ccik(\noptik)^2(\totalk)^3-2(\noptik)^2\totalk\sigma^2-\noptik(\totalk)^2\sigma^2}{(\noptik+\totalk)(-\noptik\totalk+4\noptik\alphaik^2+4\totalk\alphaik^2)\sigma^2}
        \\
        &=\frac{2\totalk\rbr{\ccik\noptik\rbr{4\noptik+11\totalk}+\sigma^2}}
        {\rbr{4\noptik+3\totalk}\sigma^2}
        \; .
    \end{align*}
\end{lemma}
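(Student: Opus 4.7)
The statement is a purely algebraic identity in the variables $\noptik$, $\totalk$, $\ccik$, $\sigma$, with $\alphaik^2$ replaced by $\noptik$, so the plan is to verify it by direct computation, in close parallel to the proof of Lemma~\ref{lem:tech_penalty-subalpha1}.

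First I would substitute $\alphaik^2=\noptik$ into the bracketed factor in the denominator of the second summand: $-\noptik\totalk+4\noptik\alphaik^2+4\totalk\alphaik^2 = -\noptik\totalk + 4(\noptik)^2+4\noptik\totalk = \noptik\bigl(4\noptik+3\totalk\bigr)$. This collapses the second denominator from $(\noptik+\totalk)\noptik(4\noptik+3\totalk)\sigma^2$ and, more importantly, introduces a factor of $4\noptik+3\totalk$ that matches the target denominator on the right-hand side. It also cancels a factor of $\noptik+\totalk$ against the numerator of the second summand once I factor it out.

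Next I would bring the two summands over the common denominator $(\noptik+\totalk)(4\noptik+3\totalk)\sigma^2$. For the first summand I multiply numerator and denominator by $4\noptik+3\totalk$, giving numerator $\totalk(4\noptik+3\totalk)(2\ccik(\noptik)^2+6\ccik\noptik\totalk+\sigma^2)$. For the second summand I divide by $\noptik$, giving numerator $\totalk^2(4\ccik(\noptik)^2\totalk - 2\noptik\sigma^2 - \totalk\sigma^2)$. Adding these and expanding is a routine, if tedious, polynomial manipulation; I expect all cubic $\sigma^2$ terms in $\totalk$ to cancel and the result to factor as $2\totalk(\noptik+\totalk)\bigl(\ccik\noptik(4\noptik+11\totalk)+\sigma^2\bigr)$, so that the $(\noptik+\totalk)$ factor cancels with the common denominator and the identity falls out. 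This factorization is exactly the one that appeared at the corresponding step of Lemma~\ref{lem:tech_penalty-subalpha1}, just stripped of the $(2\ccik\noptik\totalk+\sigma^2)$ factor that came from the $+\ccik\nbargik$ term present there and absent here.

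The only obstacle is bookkeeping: tracking the $\noptik$-powers and $\totalk$-powers through the expansion without dropping or double-counting a term. Because the structure mirrors Lemma~\ref{lem:tech_penalty-subalpha1}, I would actually derive this identity as a corollary: that lemma showed a similar sum equals $\frac{2\totalk(\ccik\noptik(4\noptik+11\totalk)+\sigma^2)}{(4\noptik+3\totalk)(2\ccik\noptik\totalk+\sigma^2)}$ with denominator $(2\ccik\noptik\totalk+\sigma^2)$ in place of $\sigma^2$ in the present claim. Since in both cases the only change is replacing the factor $2\ccik\noptik\totalk+\sigma^2$ in the denominators of both summands on the left by $\sigma^2$, the same common-denominator reduction produces the same factorization; multiplying through by $(2\ccik\noptik\totalk+\sigma^2)/\sigma^2$ converts one identity into the other. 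Spelling out this reduction is the cleanest way to present the proof and avoids re-doing the expansion line by line.
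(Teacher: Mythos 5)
Your proposal is correct and its main line of attack—substituting $\alphaik^2=\noptik$ so that $-\noptik\totalk+4\noptik\alphaik^2+4\totalk\alphaik^2=\noptik(4\noptik+3\totalk)$, placing both summands over the common denominator $(\noptik+\totalk)(4\noptik+3\totalk)\sigma^2$, and factoring the resulting numerator as $2\totalk(\noptik+\totalk)\rbr{\ccik\noptik\rbr{4\noptik+11\totalk}+\sigma^2}$—is exactly what the paper's proof does. Your closing observation that the identity also follows immediately from Lemma~\ref{lem:tech_penalty-subalpha1} by multiplying both sides through by $(2\ccik\noptik\totalk+\sigma^2)/\sigma^2$ is valid and slightly cleaner than the paper's re-derivation, since every denominator and the right-hand side differ from those of Lemma~\ref{lem:tech_penalty-subalpha1} only by that single factor.
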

\begin{proof}
    Via a sequence of algebraic manipulations, we have
    \begingroup
    \allowdisplaybreaks
    \begin{align*}
        &~~~~~\frac{\totalk(2\ccik(\noptik)^2+6\ccik\noptik\totalk+\sigma^2)}{(\noptik+\totalk)\sigma^2}
        +\frac{4\ccik(\noptik)^2(\totalk)^3-2(\noptik)^2\totalk\sigma^2-\noptik(\totalk)^2\sigma^2}{(\noptik+\totalk)(-\noptik\totalk+4\noptik\alphaik^2+4\totalk\alphaik^2)\sigma^2}
        \\
        &=\frac{
            \frac{\rbr{4\noptik+3\totalk}}{\rbr{\noptik+\totalk}} 
            \totalk(2\ccik(\noptik)^2+6\ccik\noptik\totalk+\sigma^2)
        }{
            (4\noptik+3\totalk)\sigma^2
        }
        +\frac{
            4\ccik(\noptik)^2(\totalk)^3-2(\noptik)^2\totalk\sigma^2-\noptik(\totalk)^2\sigma^2
        }{
            (\noptik+\totalk)\noptik(4\noptik+3\totalk)\sigma^2
        }
        \\
        &=\frac{
            \frac{\rbr{4\noptik+3\totalk}}{\rbr{\noptik+\totalk}} 
            \totalk(2\ccik(\noptik)^2+6\ccik\noptik\totalk+\sigma^2)
        }{
            (4\noptik+3\totalk)\sigma^2
        }
        +\frac{
            \frac{1}{\noptik+\totalk}
            \rbr{
                4\ccik(\noptik)^2(\totalk)^3-2(\noptik)^2\totalk\sigma^2-\noptik(\totalk)^2\sigma^2
            }
        }{
            (4\noptik+3\totalk)\sigma^2
        }
        \\
        &=\frac{\frac{1}{\rbr{\noptik+\totalk}}}{(4\noptik+3\totalk)\sigma^2}
        \Big(
            \rbr{4\noptik+3\totalk} T 
            \rbr{
                2\ccik(\noptik)^2+6\ccik\noptik\totalk+\sigma^2
            }
            +4\ccik\nopti\totalk^3-2\noptik\totalk\sigma^2-\totalk^2\sigma^2
        \Big)
        \\
        &=\frac{\frac{1}{\rbr{\noptik+\totalk}}}{(4\noptik+3\totalk)\sigma^2}
        \rbr{
            8\ccik(\noptik)^3\totalk+30\ccik(\noptik)^2\totalk^2+2\noptik\totalk\sigma^2+22\ccik\noptik\totalk^3+2\totalk^2\sigma^2 
        }
        \\
        &=\frac{\frac{1}{\rbr{\noptik+\totalk}}}{(4\noptik+3\totalk)\sigma^2}
        \rbr{
            2\totalk\rbr{\noptik+\totalk}\rbr{\ccik\noptik\rbr{4\noptik+11\totalk}+\sigma^2}
        }
        \\
        &=\frac{
            2\totalk
            \rbr{\ccik\noptik
                \rbr{4\noptik+11\totalk}+\sigma^2}
        }{
            (4\noptik+3\totalk)\sigma^2
        }
        \; .
    \end{align*}
    \endgroup
\end{proof}

\section{Results based on other work}
\label{app:adaptedresults}

This section presents the proof of lemmas related to incentive-compatibility.
As we use a similar corruption strategy to~\citet{chen2023mechanism}, the high-level proof ideas are similar to theirs.
However, adapting specific techniques to unequal costs requires nontrivial algebraic adaptations.

\subsection{Proof of Lemma \ref{lem:opt-funcs}}
    \label{subsubsec:proof-of-lem:opt-funcs}
    In Algorithm \ref{alg:multi_arm_mech} we have specified the information set space for $\multiarmmech$ to be $\RR^\datadim$, which corresponds to $\datadim$ estimates of the $\datadim$ distributions. We could have defined an equivalent mechanism where the information set space is $\allocspace=\dataspace\times\dataspace\times\RR^\datadim$ and $\alloci=\rbr{\allociikk{i}{1},\ldots,\allociikk{i}{\datadim}}$ where $\allocik=\big(Z,Z',\eta^2)$. In this setup $Z$ is clean data, $Z'$ is corrupted data, and $\eta^2$ is the amount of noise added to obtain $Z'$. For the sake of convenience, we define the following sets to characterize the types of information sets $i$ receives for each distribution. We also specify the information sets provided under this alternative information set space.
    \begin{alignat*}{3}
        \estimtypea &\defeq\cbr{k:i\in\donatingagentsk}
        &&\qquad\text{if $k\in\estimtypea$ then } \allocik=\rbr{\varnothing,\varnothing,0}
        \\
        \estimtypeb &\defeq\cbr{k:\noptik=0\text{ and } i\not\in\donatingagentsk}
        &&\qquad\text{if $k\in\estimtypeb$ then } \allocik=\rbr{\subdatamik,\varnothing,0}
        \\
        \estimtypec &\defeq\cbr{k:\noptik>0\text{ and } i\not\in\donatingagentsk}
        &&\qquad\text{if $k\in\estimtypec$ then } \allocik=\rbr{\dataik,\datacorrik,\etaiksq}
    \end{alignat*}
    Under this change, we would now recommend agent $i$ to use the estimator 
    \begin{equation*}
        \estimik\rbr{\initdatai,\subdatai,\alloci}=\frac{\frac{1}{\sigma^2 |\subdataik\cup\dataik|}\sum_{y\in\subdataik\cup\dataik}y~+~\frac{1}{(\sigma^2+\etaiksq)|\datacorrik|}\sum_{z\in\datacorrik}z}{\frac{1}{\sigma^2}\abr{\subdataik\cup\dataik}+\frac{1}{\sigma^2+\etaiksq}\big|\datacorrik\big|}
    \end{equation*}
    instead of directly accepting estimates directly from the mechanism. In fact, \citet{chen2023mechanism} use the single variable version of this exact information set space. Notice that in this setup agents receive more information since, in each case, the estimate recommended by $\multiarmmech$ is function of $\big(Z,Z',\eta^2\big)$. In the proof of this lemma, we will show that if agent $i$ has access to $\allocik=\big(Z,Z',\eta^2\big)$ then it is optimal to submit data truthfully and compute the same estimates that are returned by $\multiarmmech$. Therefore, when $\allocspace=\RR^\datadim$, and the agent only receives recommended estimates in place of $\big(Z,Z',\eta^2\big)$, the agent will achieve the same penalty by submitting truthfully and accepting the estimates of the mechanism, even when they now receive less information. This will imply that submitting truthfully and accepting estimates from the mechanism is the optimal strategy in the information set space $\allocspace=\RR^\datadim$. We now present an argument of this claim, adapting the proof of Lemma 5 from \citet{chen2023mechanism}.
    \begin{enumerate}
        \item[-] \textbf{Step 1. } 
        First, we construct a sequence of multivariable prior distributions $\cbr{\Lambda_\ell}_{\ell\geq 1}$ for $\mu\in\RR^d$ and calculate the sequence of Bayesian risks under the prior distributions
        \begin{align*}
            R_{\ell}\defeq\inf_{\spi\in\Delta(\actionspace)'}\EEV{\mu\sim\Lambda_{\ell}}\sbr{\EEV{\rbr{\spi,\soptmi}} \sbr{\EEV{\mu,\multiarmmech}\sbr{\lVert \estimi\rbr{\initdatai,\subdatai,\alloci}-\mu\rVert_2^2}}},
            \quad \ell\geq 1
        \end{align*}
        Here $\Delta(\actionspace)'$ is the set of strategies where $i$ collects the same amount of data as in $\strategyi$, i.e. $\Delta(\actionspace)':=\cbr{\rbr{\niprime, \subfunciprime,\estimiprime}=\spi\in\Delta(\actionspace):\niprime ==\nni~ \text{ a.s.}}$.
        \vspace{0.2cm}
        \item[-] \textbf{Step 2. } 
        Next, we define $\stildei$ to be the randomized strategy given by $(\nni,\subfuncopti,\estimopti)$. We then show that 
        \begin{align*}
            \lim_{\ell\rightarrow\infty}R_\ell =\sup_{\mu}\sbr{\EEV{\rbr{\stildei,\soptmi}}\sbr{\EEV{\mu,\multiarmmech}\sbr{\lVert\estimi\rbr{\initdatai,\subdatai,\alloci}-\mu\rVert_2^2}}}
        \end{align*}
        
        \vspace{0.1cm}
        \item[-] \textbf{Step 3. } 
        Finally, using that the Bayesian risk is a lower bound on the maximum risk, we conclude that amongst the strategies in $\Delta(\actionspace)'$, $\stildei$ is minimax optimal. Thus, an agent will suffer no worse of an error when using $\subfunci=\subfuncopti$ and $\estimi=\estimopti$, regardless of the distribution over $\nni$.
    \end{enumerate}

    \paragraph{Step 1. (Bounding the Bayes’ risk under the sequence of priors)}
    We will use a sequence of multivariable normal priors $\Lambda_\ell:=\Ncal\rbr{0,\ell^2 I_\datadim}$ for $\ell\geq 1$. Notice that for an agent $i$, the mechanism returns an uncorrupted estimate for distribution $k$ if $i\in\donatingagentsk$ or $\noptik=0$ and a corrupted estimate otherwise. More specifically, if $i\in\donatingagentsk$ an uncorrupted estimate using only $\subdataik$ is returned whereas if $\noptik=0$ an uncorrupted estimate using $\subdatamik$ is returned.
    
    Fix some $\subfunci\in\subfuncspace$. We now analyze the optimal choice for $\estimik$ depending on whether $k$ is in $\estimtypea$, $\estimtypeb$, or $\estimtypec$. 
    
    Suppose that $k\in\estimtypea$. Recall that $x~|~\mu_k\sim\Ncal(\mu_k,\sigma^2),~\forall x\in\initdataik$. Since $k\in\estimtypea$, $\allocik$ is just a function of $\initdataik$ as no other data from the other agents is given. Therefore, the posterior distribution for $\mu_k$ conditioned on $\initdataik,\subdataik,\allocik$ can be written as $\mu_k~|~\initdataik,\subdataik,\allocik=\mu_k~|~\initdataik$.  Furthermore, it is well known (See Example 1.14 in Chapter 5 of \citep{lehmann2006theory}) that $\mu_k~|~\initdataik\sim\Ncal\rbr{\mualk,\sigsqalk}$ where 
    \begin{align*}
        \mualk \defeq
        \frac{\frac{|\initdataik|}{\sigma^2}}{\frac{|\initdataik|}{\sigma^2}+\frac{1}{\ell^2}}\mu_k+\frac{\frac{1}{\ell^2}}{\frac{|\initdataik|}{\sigma^2}+\frac{1}{\ell^2}}\estimsm\rbr{\initdataik}
        \qquad 
        \sigsqalk\defeq\frac{1}{\frac{|\initdataik|}{\sigma^2}+\frac{1}{\ell^2}}
    \end{align*}
    Therefore, $\mu_k~|~\initdataik,\subdataik,\allocik\sim\Ncal\rbr{\mualk,~\sigsqalk}$. 
    
    If $k\in\estimtypeb$ then agent $i$ receives $\subdatamik$ regardless of what they submit. Again, we know from \citep{lehmann2006theory} that $\mu_k~|~\initdataik,\subdataik,\allocik\sim\Ncal\rbr{\mublk,~\sigsqblk}$ where
    \begin{align*}
        \mublk \defeq
        \frac{\frac{|\initdataik\cup\subdatamik|}{\sigma^2}}{\frac{|\initdataik\cup\subdatamik|}{\sigma^2}+\frac{1}{\ell^2}}\mu_k+\frac{\frac{1}{\ell^2}}{\frac{|\initdataik\cup\subdatamik|}{\sigma^2}+\frac{1}{\ell^2}}\estimsm\rbr{\initdataik\cup\subdatamik}
        \qquad 
        \sigsqblk\defeq\frac{1}{\frac{|\initdataik|}{\sigma^2}+\frac{1}{\ell^2}}
    \end{align*}
    
    If $k\in\estimtypec$ we have that 
    \begin{align*}
        x~|~\mu_k\sim\Ncal\rbr{\mu_k,\sigma^2}~~ &\forall x\in\initdataik\cup\dataik
        \\
        x~|~\mu_k, \etaiksq \sim\Ncal\rbr{\mu_k,\sigma^2+\etaiksq}~~ &\forall x\in\datacorrik
    \end{align*}
    Recall that $\etaiksq$ is a function of $\subdataik$ and $\dataik$. Assume for now that $\initdataiikk{i}{\text{--}k}$ is fixed. Under this assumption, and having fixed $\subfunci$, both $\subdataik$ and $\etaiksq$ are deterministic functions of $\initdataik$ and $\dataik$. Therefore, the posterior distribution for $\mu_k$ conditioned on $\rbr{\initdataik,\subdataik,\allocik}$ can be calculated as follows:
    \begingroup
    \allowdisplaybreaks
    \begin{align*}
	&~~~~~p\rbr{
		\mu_k|
		\initdataik,\subdataik,\allocik
	} = p\rbr{
		\mu_k|\initdataik,\subdataik, \dataik,\dataik',\etaiksq
	} = p\rbr{
		\mu_k|\initdataik,\dataik,\dataik'
	}
        \\    
	&\propto  p\rbr{
		\mu_k,
		\initdataik,\dataik,\dataik'
	} 
        = p\rbr{
		\dataik'|
		\initdataik,\dataik,\mu_k
	} 
	p\rbr{
		\initdataik,\dataik | \mu_k
	} 
	p(\mu_k)
        \\
	&
        = p\rbr{
		\dataik'|
		\initdataik,\dataik,\mu_k
	} 
	p\rbr{
		\initdataik| \mu_k
	}
	p\rbr{
		\dataik | \mu_k
	}
	p(\mu_k)\\
	&\propto
	\exp\rbr{
		-\frac{1}{2(\sigma^2+\etaiksq)}\sum_{x\in \dataik'}(x-\mu_k)^2
	}
	\exp\rbr{
		-\frac{1}{2\sigma^2}\sum_{x\in\initdataik\cup \dataik}(x-\mu_k)^2
	}
	\exp\rbr{
		-\frac{\mu_k^2}{2\ell^2}
	}\\
	&\propto 
	\exp\rbr{
		-\frac{1}{2}\rbr{
			\frac{\abr{\dataik'}}{\sigma^2+\etaiksq}
			+
			\frac{\abr{\initdataik}+\abr{\dataik}}{\sigma^2}
			+
			\frac{1}{\ell^2}
		}\mu_k^2
	}
        \exp\rbr{
		\frac{1}{2}2\rbr{
			\frac{\sum_{x\in \dataik'}x}{\sigma^2+\etaiksq}
			+
			\frac{\sum_{x\in\initdataik\cup \dataik}x}{\sigma^2}
		}\mu_k
	}\\
	&= \exp\rbr{-\frac{1}{2}\rbr{
			\frac{1}{\sigsqclk}\mu_k^2
			-2\frac{\muclk}{\sigsqclk}\mu_k
		}
	}
	\propto\exp\rbr{-\frac{1}{2\sigsqclk}(\mu_k-\muclk)^2}
    \end{align*}
    \endgroup
    where
    \begin{equation*}
        \muclk \defeq\frac{\frac{\sum_{x\in \dataik'}x}{\sigma^2+\etaiksq}
            +\frac{\sum_{x\in\initdataik\cup \dataik}x}{\sigma^2}}{\frac{\abr{\dataik'}}{\sigma^2+\etaiksq}
		+\frac{\abr{\initdataik}+\abr{\dataik}}{\sigma^2}
		+\frac{1}{\ell^2}}
        \hspace{0.1in}\text{and}\hspace{0.1in}
	\sigsqclk \defeq \frac{1}{\frac{\abr{\dataik'}}{\sigma^2+\etaiksq}
		+\frac{\abr{\initdataik}+\abr{\dataik}}{\sigma^2}
		+\frac{1}{\ell^2}}
    \end{equation*}
    We can therefore conclude that (despite the non i.i.d nature of the data), the posterior distribution of $\mu_k$ conditioned on $\initdataik,\subdataik,\allocik$, assuming $\initdataiikk{i}{\text{--}k}$ is fixed, is Gaussian with mean and variance as given above, i.e.
    \begin{equation*}
        \mu_k~|~\initdataik,\subdataik,\allocik\sim\Ncal\rbr{\muclk,\sigsqclk} 
    \end{equation*}
    Next, following standard steps (See Corollary 1.2 in Chapter 4 of \citep{lehmann2006theory}), we know that \newline $\EEV{\mu_k}\sbr{\rbr{\estimik\rbr{\initdatai,\subdatai,\alloci}-\mu_k}^2~|~\initdataik,\subdataik,\allocik}$ is minimized when 
    \begin{equation*}
        \estimik\rbr{\initdatai,\subdatai,\alloci}=\EEV{\mu_k}\sbr{\mu_k~|~\initdataik,\subdataik,\allocik} 
    \end{equation*}
    
    When $\initdataiikk{i}{\text{--}k}$ is assumed to be fixed, $\EEV{\mu_k}\sbr{\mu_k~|~\initdataik,\subdataik,\allocik}=\muclk$. This shows that for any $\subfunci\in\subfuncspace$ and fixed $\initdataiikk{i}{\text{--}k}$, the optimal $\estimik$ is simply the posterior mean of $\mu_k$ under the prior $\Lambda_\ell$ conditioned on $\rbr{\initdataik,\subdataik,\allocik}$. 
    
    We have now derived our optimal choices for $\estimik$ (either $\mualk,\mublk,$ or $\muclk$) depending on whether $k\in A$, $k\in B$, or $k\in C$. Going forward, we will also use $\subfunci$ and $\estimi$ to denote probability distributions over $\subfuncspace$ and $\estimspace$ respectively. For some fixed $\subfunci\in\Delta(\subfuncspace)$, we can rewrite the minimum average risk over $\estimi\in\Delta(\estimspace)$ by switching the order of expectation:
    \begingroup
    \allowdisplaybreaks
    \begin{align*}
        &~~~~~\inf_{\estimi\in\Delta(\estimspace)}
        \EEV{\mu\sim\Lambda_\ell}
        \sbr{
            \EEV{\rbr{(\nni,\subfunci,\estimi),\soptmi}}
            \sbr{
                \EEV{\mu,\multiarmmech}
                \sbr{
                    \lVert
                        \estimi\rbr{\initdatai,\subdatai,\alloci}-\mu
                    \rVert}^2_2}}
        \\
        &=\inf_{\estimi\in\Delta(\estimspace)}
        \EEV{\mu\sim\Lambda_\ell}
        \sbr{
            \EEV{(\nni,\subfunci,\estimi)}
            \sbr{
                \EEV{\mu,\multiarmmech}
                \sbr{
                    \sum_{k=1}^\datadim
                    \rbr{
                        \estimik\rbr{\initdatai,\subdatai,\alloci}-\mu_k
                    }^2}}}
        \\
        &=\inf_{\estimi\in\Delta(\estimspace)}
        \sum_{k=1}^\datadim
        \EEV{(\nni,\subfunci,\estimi)}
        \sbr{
            \EEV{\substack{\initdatai,\datai,\datacorri}}
            \sbr{
                \EEV{\mu}
                \sbr{
                    \rbr{
                        \estimik\rbr{\initdatai,\subdatai,\alloci}-\mu_k
                    }^2~|~\initdatai,\datai,\datacorri}}}
        \\
        &=\inf_{\estimi\in\Delta(\estimspace)}
        \sum_{k=1}^\datadim
        \EEV{(\nni,\subfunci,\estimi)}
        \Bigg[
            \EEV{\substack{\initdataimk,\dataimk,\\ \datacorrimk, \mu_{\text{--}k}}}
            \Bigg[
                \EEV{\initdataik,\dataik,\datacorrik}
                \Bigg[
                \label{eq:IC_part1-rearrange-1a}
                \numberthis
                \\
                &\hspace{4.9cm}
                \EEV{\mu_k}\big[
                    \rbr{
                        \estimik\rbr{\initdatai,\subdatai,\alloci}-\mu_k
                    }^2~|~\initdataik,\dataik,\datacorrik\big]
                    ~\bigg|~\initdataimk,\dataimk,\datacorrimk,\mu_{\text{--}k}\Bigg]\Bigg]\Bigg]
            \label{eq:IC_part1-rearrange-1b}
            \numberthis
    \end{align*}
    \endgroup
    Substituting in the appropriate posterior depending on whether $k$ is in $A$, $B$, or $C$ tell us that the equation in lines~\eqref{eq:IC_part1-rearrange-1a} and~\eqref{eq:IC_part1-rearrange-1b} is bounded below by
    \begingroup
    \allowdisplaybreaks
    \begin{align*}
        &~~~~\sum_{k\in\estimtypea}
        \EEV{(\nni,\subfunci)}
        \sbr{
            \EEV{\substack{\initdataimk,\dataimk,\\ \datacorrimk, \mu_{\text{--}k}}}
            \sbr{
                \EEV{\initdatai,\datai,\datacorri} 
                \sbr{
                    \EEV{\mu_k}
                    \sbr{
                        \rbr{\mualk-\mu_k}^2
                        ~|~\initdataik,\dataik,\datacorrik\big]
                        ~\Big|~\initdataimk,\dataimk,\datacorrimk,\mu_{\text{--}k}}}}}
        \\
        &\quad+\sum_{k\in\estimtypeb}
        \EEV{(\nni,\subfunci)}
        \sbr{
            \EEV{\substack{\initdataimk,\dataimk,\\ \datacorrimk, \mu_{\text{--}k}}}
            \sbr{
                \EEV{\initdataik,\dataik,\datacorrik} 
                \sbr{
                    \EEV{\mu_k}
                    \sbr{
                        \rbr{\mublk-\mu_k}^2
                        ~|~\initdataik,\dataik,\datacorrik\big]
                        ~\Big|~\initdataimk,\dataimk,\datacorrimk,\mu_{\text{--}k}}}}}
        \\
        &\quad+\sum_{k\in\estimtypec}
        \EEV{(\nni,\subfunci)}
        \sbr{
            \EEV{\substack{\initdataimk,\dataimk,\\ \datacorrimk, \mu_{\text{--}k}}}
            \sbr{
                \EEV{\initdataik,\dataik,\datacorrik} 
                \sbr{
                    \EEV{\mu_k}
                    \sbr{
                        \rbr{\muclk-\mu_k}^2
                        ~|~\initdataik,\dataik,\datacorrik\big]
                        ~\Big|~\initdataimk,\dataimk,\datacorrimk,\mu_{\text{--}k}}}}}
        \\
        &=\sum_{k\in\estimtypea}
        \EEV{(\nni,\subfunci)}
        \sbr{
            \EEV{\substack{\initdataimk,\dataimk,\\ \datacorrimk, \mu_{\text{--}k}}}
            \sbr{
                \EEV{\initdataik,\dataik,\datacorrik} 
                \sbr{
                    \frac{1}{\frac{\abr{\initdataik}}{\sigma^2}+\frac{1}{\ell^2}}
                    }}}
        +\sum_{k\in\estimtypeb}
        \EEV{(\nni,\subfunci)}
        \sbr{
            \EEV{\substack{\initdataimk,\dataimk,\\ \datacorrimk, \mu_{\text{--}k}}}
            \sbr{
                \EEV{\initdataik,\dataik,\datacorrik} 
                \sbr{
                    \frac{1}{\frac{\abr{\initdataik\cup\subdatamik}}{\sigma^2}+\frac{1}{\ell^2}}
                    }}}
            \label{eq:IC_part1-rearrange-2a}
            \numberthis
        \\
        &\quad+\sum_{k\in\estimtypec}
        \EEV{(\nni,\subfunci)}
        \sbr{
            \EEV{\substack{\initdataimk,\dataimk,\\ \datacorrimk, \mu_{\text{--}k}}}
            \sbr{
                \EEV{\initdataik,\dataik,\datacorrik} 
                \sbr{
                    \frac{1}{\frac{\abr{\datacorrik}}{\sigma^2+\etaiksq}+\frac{\abr{\initdataik}+\abr{\dataik}}{\sigma^2}+\frac{1}{\ell^2}}}}}
            \numberthis
            \label{eq:IC_part1-rearrange-2b}
    \end{align*}
    \endgroup
    Now notice that if $k\in\estimtypea$ or $k\in\estimtypeb$ then $\allocik$ does not depend on $\subdataik$ so lines~\eqref{eq:IC_part1-rearrange-2a} and~\eqref{eq:IC_part1-rearrange-2b} become
    \begingroup
    \begin{align*}
        &\hspace{1.6cm}
        \sum_{k\in\estimtypea}
        \EEV{\nnik}
        \sbr{
            \frac{1}{\frac{\abr{\initdataik}}{\sigma^2}+\frac{1}{\ell^2}}}
        +\sum_{k\in\estimtypeb}
        \EEV{\nnik}
        \sbr{
            \frac{1}{\frac{\abr{\initdataik\cup\subdatamik}}{\sigma^2}+\frac{1}{\ell^2}}}
        \numberthis
        \label{eq:IC_part1-rearrange-3a}
        \\
        &\quad
        +\sum_{k\in\estimtypec}
        \EEV{(\nni,\subfunci)}
        \sbr{
            \EEV{\substack{\initdataimk,\dataimk,\\ \datacorrimk, \mu_{\text{--}k}}}
            \sbr{
                \EEV{\initdataik,\dataik} 
                \sbr{
                    \frac{1}{\frac{\abr{\datacorrik}}{\sigma^2+\etaiksq}+\frac{\abr{\initdataik}+\abr{\dataik}}{\sigma^2}+\frac{1}{\ell^2}}}}}
            \numberthis
            \label{eq:IC_part1-rearrange-3b}
    \end{align*}
    \endgroup
    Since $\etaiksq$ depends on 
    $\mu_{\text{--}k}, \initdataimk, \initdataik, \dataik, \abr{\initdatai}, \abr{\dataik}$, and $\abr{\datacorrik}$ but not $\dataimk$ nor $\datacorrimk$, we can write \eqref{eq:IC_part1-rearrange-3a} and~\eqref{eq:IC_part1-rearrange-3b} as 
    \begingroup
    \begin{align*}
        &\hspace{1.15cm}
        \sum_{k\in\estimtypea}
        \EEV{\nnik}
        \sbr{
            \frac{1}{\frac{\abr{\initdataik}}{\sigma^2}+\frac{1}{\ell^2}}}
        +\sum_{k\in\estimtypeb}
        \EEV{\nnik}
        \sbr{
            \frac{1}{\frac{\abr{\initdataik\cup\subdatamik}}{\sigma^2}+\frac{1}{\ell^2}}}
        \numberthis\label{eq:IC_part1-AB}
        \\
        &+
        \sum_{k\in\estimtypec}
        \EEV{(\nni,\subfunci)}
        \sbr{
            \EEV{\substack{\mu_{\text{--}k}, \initdataimk}}
            \sbr{
                \EEV{\initdataik,\dataik} 
                \sbr{
                    \frac{1}{\frac{\abr{\datacorrik}}{\sigma^2+\etaiksq}+\frac{\abr{\initdataik}+\abr{\dataik}}{\sigma^2}+\frac{1}{\ell^2}}}}}
        \numberthis\label{eq:IC_part1-C}
    \end{align*}
    \endgroup
    The second to last step follows from the fact that if $k\in\estimtypea$ or $k\in\estimtypeb$ then $\allocik$ does not depend on $\subdataik$.
    Notice that the terms in \eqref{eq:IC_part1-AB} do not depend on $\subfunci$ whereas the terms in \eqref{eq:IC_part1-C} do. We will now show that \eqref{eq:IC_part1-C} (and thus the entire equation) is minimized for the following choice of $\subfunci$ which shrinks each point in $\initdataik$ by an amount that depends on the prior $\Lambda_\ell$:
    \begin{align*}
        \subfuncik(\initdatai)=
        \cbr{\frac{\frac{\abr{\initdataik}}{\sigma^2}}{\frac{\abr{\initdataik}}{\sigma^2}+\frac{1}{\ell^2}}x,
        \quad \forall x\in\initdataik}
        \numberthis \label{eq:IC_part1-fik-choice}
    \end{align*}
    To prove this, we first define the following quantities:
    \begin{align*}
        \muhat(\initdataik):=\frac{1}{|\initdataik|}\sum_{x\in\initdataik}x
        \qquad
        \muhat(\subdataik):=\frac{1}{|\subdataik|}\sum_{y\in\subdataik}y
        \qquad
        \muhat(\dataik):=\frac{1}{|\dataik|}\sum_{z\in\dataik}z
    \end{align*}
    We will also find it useful to express $\etaiksq$ as follows (here $\alphaik$ is as defined in \eqref{eq:gik-def}):
    \begin{align*}
        \etaiksq=\alphaik^2\rbr{\muhat(\subdataik)-\muhat(\dataik)}^2
    \end{align*}
    The following calculations show that, conditioned on $\initdataik$, $\muhat(\dataik)-\mu_k$ and $\mu_k-\frac{|\initdataik|/\sigma^2}{|\initdataik|/\sigma^2+1/\ell^2}\muhat(\initdataik)$ are independent Gaussian random variables:
    \begingroup
    \allowdisplaybreaks
    \begin{align*}
	& p\rbr{\muhat(\dataik)-\mu_k,\mu_k|\initdataik}
	\propto 
	p\rbr{\muhat(\dataik)-\mu_k,\mu_k,\initdataik} \\
	= & 
	p\rbr{\muhat(\dataik)-\mu_k,\initdataik|\mu_k}p(\mu_k)
	=
	p\rbr{\muhat(\dataik)-\mu_k|\mu_k}
	p\rbr{\initdataik|\mu_k}p(\mu_k)
	\\
	\propto&
	\exp\rbr{
		-\frac{1}{2}\frac{\abr{\dataik}}{\sigma^2}\rbr{\muhat(\dataik)-\mu_k}^2
	}
	\exp\rbr{
		-\frac{1}{2\sigma^2}\sum_{x\in\initdataik}(x-\mu_k)^2
	}
	\exp\rbr{
		-\frac{1}{2\ell^2}\mu_k^2
	}\\
	\propto&
	\underbrace{\exp\rbr{
		-\frac{1}{2}\frac{\abr{\dataik}}{\sigma^2}\rbr{\muhat(\dataik)-\mu_k}^2
	}}_{\propto p(\muhat(\dataik)-\mu_k|\initdataik)}
	\underbrace{\exp\rbr{
		-\frac{1}{2}\rbr{
			\frac{\abr{\initdataik}}{\sigma^2}
			+
			\frac{1}{\ell^2}
		}
		\rbr{
			\mu_k-\frac{\abr{\initdataik}/\sigma^2}{\abr{\initdataik}/\sigma^2+1/\ell^2}\muhat(\initdataik)
		}^2
    	}}_{\propto p\rbr{\mu_k-\frac{\abr{\initdataik}/\sigma^2}{\abr{\initdataik}/\sigma^2+1/\ell^2}\muhat(\initdataik)|\initdataik}}
    \end{align*}
    \endgroup
    Thus, conditioning on $\initdataik$, we can write
    \begin{equation*}
	\begin{pmatrix}
		\muhat(\dataik)-\mu_k\\
		\mu_k-\frac{\abr{\initdataik}/\sigma^2}{\abr{\initdataik}/\sigma^2+1/\ell^2}\muhat(\initdataik)
	\end{pmatrix}
	\sim
	\Ncal\rbr{
		\begin{pmatrix}
			0\\0    
		\end{pmatrix},
		\begin{pmatrix}
			\frac{\sigma^2}{\abr{\dataik}} & 0\\
			0 & \frac{1}{\abr{\initdataik}/\sigma^2+1/\ell^2}
		\end{pmatrix}
	}
    \end{equation*}
    which leads us to
    \begin{equation*}
        \muhat(\dataik)-\frac{\abr{\initdataik}/\sigma^2}{\abr{\initdataik}/\sigma^2+1/\ell^2}\muhat(\initdataik)
    	\Bigg|\initdataik
    	\sim\Ncal\rbr{
    		\, 0, \;
    		\underbrace{
    			\frac{\sigma^2}{\abr{\dataik}}+\frac{1}{\abr{\initdataik}/\sigma^2+1/\ell^2}
    		}_{=:\sigsqtlk}
    	}
    \end{equation*}
    Next, we rewrite the squared difference in $\etaiksq$ as follows:
    \begin{align*}
	\frac{\etaiksq}{\alphaik^2}
	= &
	\rbr{
		\muhat(\subdataik)-\muhat(\dataik)
	}^2  
        \\
	= &
	\rbr{
		\underbrace{
			\muhat(\dataik)-    
			\frac{\abr{\initdataik}/\sigma^2}{\abr{\initdataik}/\sigma^2+1/\ell^2}\muhat(\initdataik)}
		_{=\widetilde{\sigma}_{\ell,k} e}
		+\rbr{
			\underbrace{
				\frac{\abr{\initdataik}/\sigma^2}{\abr{\initdataik}/\sigma^2+1/\ell^2}\muhat(\initdataik)
				-
				\muhat(\subdataik)}_{=:\phi(\initdataik,\subfunci)}
		}
	}^2
        \numberthis \label{eq:IC_part1-sigtlk}
    \end{align*}
    Here we observe that the first part of the RHS above is equal to $\widetilde{\sigma}_{\ell,k}$ where $e$ is a noise $e|\initdataik\sim\Ncal(0,1)$ where $\widetilde{\sigma}_{\ell,k}$ is defined in \eqref{eq:IC_part1-sigtlk}. For brevity, we denote the second part of the RHS as $\phi(\initdataik,\estimik)$ which intuitively characterizes the difference between $\initdataik$ and $\subdataik$. Importantly, $\phi(\initdataik,\estimik)=0$ when $\subfunci$ is chosen to be \eqref{eq:IC_part1-fik-choice}. Using $e$ and $\phi$, we can rewrite each term in \eqref{eq:IC_part1-C} using conditional expectation.
    \begingroup
    \allowdisplaybreaks
    \begin{align*}
        &\hspace{0.4cm} \EEV{(\nni,\subfunci)}
        \sbr{
            \EEV{\substack{\mu_{\text{--}k}, \initdataimk}}
            \sbr{
                \EEV{\initdatai,\datai} 
                \sbr{
                    \frac{1}{\frac{\abr{\datacorrik}}{\sigma^2+\etaiksq}+\frac{\abr{\initdataik}+\abr{\dataik}}{\sigma^2}+\frac{1}{\ell^2}}}}}
        \\
        &=\EEV{(\nni,\subfunci)}
        \sbr{
            \EEV{\substack{\mu_{\text{--}k}, \initdataimk}}
            \sbr{
                \EEV{\initdatai} 
                \sbr{
                    \EEV{\dataik|\initdataik}
                    \sbr{
                        \frac{1}{\frac{\abr{\datacorrik}}{\sigma^2+\etaiksq}+\frac{\abr{\initdataik}+\abr{\dataik}}{\sigma^2}+\frac{1}{\ell^2}}}}}}
        \\
        &=\EEV{(\nni,\subfunci)}
        \sbr{
            \EEV{\substack{\mu_{\text{--}k}, \initdataimk}}
            \sbr{
                \EEV{\initdatai} 
                \sbr{
                    \EEV{e|\initdataik}
                    \sbr{
                        \frac{1}{\frac{\abr{\datacorrik}}{\sigma^2+\alphaik^2\rbr{\widetilde{\sigma}_{\ell,k}e+\phi(\initdataik,\subfuncik)}^2}
                        +\frac{\abr{\initdataik}
                        +\abr{\dataik}}{\sigma^2}
                        +\frac{1}{\ell^2}}}}}}
        \\
        &=\EEV{(\nni,\subfunci)}
        \sbr{
            \EEV{\substack{\mu_{\text{--}k}, \initdataimk}}
            \sbr{
                \EEV{\initdatai} 
                \sbr{
                    \int_{-\infty}^\infty 
                    \underbrace{
                        \frac{1}{\frac{\abr{\datacorrik}}{\sigma^2+\alphaik^2\widetilde{\sigma}_{\ell,k}^2
                        \rbr{e+\frac{\phi(\initdataik,\subfuncik)}{\widetilde{\sigma}_{\ell,k}}}^2}
                        +\frac{\abr{\initdataik}
                        +\abr{\dataik}}{\sigma^2}
                        +\frac{1}{\ell^2}}
                    }_{=:F_{1,k}(e+\phi(\initdataik,\subfuncik)/\widetilde{\sigma}_{\ell,k})}
                    \underbrace{
                        \frac{\exp\rbr{\frac{-e^2}{2}}de}{\sqrt{2\pi}}
                    }_{=:F_2(e)}
                    }}}
        \numberthis\label{eq:IC_part1-rewrite-with-f1f2}
    \end{align*}
    \endgroup
    where we use the fact that $e|\initdataik\sim\Ncal(0,1)$ in the last step. To proceed, we will consider the inner expectation in the RHS above. For any fixed $\initdataik$, $F_{1,k}(\cdot)$ (as marked on the RHS) is an even function that monotonically increases on $[0,\infty)$ bounded by $\frac{\sigma}{\abr{\initdataik}+\abr{\dataik}}$ and $F_2(\cdot)$ (as marked on the RHS) is an even function that monotonically decreases on $[0,\infty)$. That means, for any $a\in\RR$,
    \begin{equation*}
        \int_{-\infty}^{\infty}
        F_{1,k}(e-a)F_2(e)de
        \le
        \int_{-\infty}^{\infty}
        \frac{\sigma}{\abr{\initdataik}+\abr{\dataik}}F_2(e)de
        =\frac{\sigma}{\abr{\initdataik}+\abr{\dataik}}
        <\infty
    \end{equation*}
    By Lemma~\ref{lem:re hardy-L ineq}, we have
    \begin{equation}\label{eqn:use hardy thm1}
        \int_{-\infty}^{\infty}
        F_{1,k}(e+\phi(\initdataik,\subfuncik)/\widetilde{\sigma}_{\ell,k})F_2(e)de
        \ge
        \int_{-\infty}^{\infty}
        F_{1,k}(e)F_2(e)de,
    \end{equation}
    the equality is achieved when $\phi(\initdataik,\subfuncik)/\widetilde{\sigma}_{\ell_k}=0$. In particular, the equality holds when $\subfuncik$ is chosen as specified in~\eqref{eq:IC_part1-fik-choice}.

    Now, to complete Step 1, we combine \eqref{eq:IC_part1-AB}/\eqref{eq:IC_part1-C},~\eqref{eq:IC_part1-rewrite-with-f1f2}, and~\eqref{eqn:use hardy thm1} to obtain
    \begin{align*}
        &~~~~~\inf_{\estimi\in\Delta(\estimspace)}
        \EEV{\mu\sim\Lambda_\ell}
        \sbr{
            \EEV{\rbr{(\nni,\subfunci,\estimi),\soptmi}}
            \sbr{
                \EEV{\mu,\multiarmmech}
                \sbr{
                    \lVert
                        \estimi\rbr{\initdatai,\subdatai,\alloci}-\mu
                    \rVert}^2_2}}
        \\
        &=\sum_{k\in\estimtypea}
        \EEV{\nnik}
        \sbr{
            \frac{1}{\frac{\abr{\initdataik}}{\sigma^2}+\frac{1}{\ell^2}}}
        +\sum_{k\in\estimtypeb}
        \EEV{\nnik}
        \sbr{
            \frac{1}{\frac{\abr{\initdataik\cup\subdatamik}}{\sigma^2}+\frac{1}{\ell^2}}}
        \\
        &\quad+\sum_{k\in\estimtypec}
        \EEV{(\nni,\subfunci)}
        \sbr{
            \EEV{\substack{\mu_{\text{--}k}, \initdataimk}}
            \sbr{
                \EEV{\initdataik} 
                \sbr{
                    \int_{-\infty}^\infty F_{1,k}\rbr{e+\frac{\phi(\initdataik,\subfuncik)}{\widetilde{\sigma}_{\ell,k}}}
                    F_2(e)de
                    }}}
        \\
        &\geq\sum_{k\in\estimtypea}
        \EEV{\nnik}
        \sbr{
            \frac{1}{\frac{\abr{\initdataik}}{\sigma^2}+\frac{1}{\ell^2}}}
        +\sum_{k\in\estimtypeb}
        \EEV{\nnik}
        \sbr{
            \frac{1}{\frac{\abr{\initdataik\cup\subdatamik}}{\sigma^2}+\frac{1}{\ell^2}}}
        +\sum_{k\in\estimtypec}
        \EEV{\nnik}
        \sbr{
            \int_{-\infty}^\infty F_{1,k}\rbr{e}
            F_2(e)de
            }
        \numberthis \label{eq:IC_part1-opt-fh-under-prior}
    \end{align*}
    Here the inner most expectations in the last summation are dropped as $F_{1,k}$ only depends on $|\initdataik|,|\dataik|,$ and $|\datacorrik|$ but not their instantiations. Using~\eqref{eq:IC_part1-opt-fh-under-prior}, we can rewrite the Bayes risk under any prior $\Lambda_\ell$ as:
    \begin{align*}
        R_\ell
        &:=
        \inf_{\spi}\inf_{\spi\in\Delta(\actionspace)'}\EEV{\mu\sim\Lambda_{\ell}}
        \sbr{
            \EEV{\rbr{\spi,\soptmi}} 
            \sbr{
                \EEV{\mu,\multiarmmech}
                \sbr{
                    \lVert \estimi\rbr{\initdatai,\subdatai,\alloci}-\mu\rVert_2^2}}}
        \\
        &=\sum_{k\in\estimtypea}
        \EEV{\nnik}
        \sbr{
            \frac{1}{\frac{\abr{\initdataik}}{\sigma^2}+\frac{1}{\ell^2}}}
        +\sum_{k\in\estimtypeb}
        \EEV{\nnik}
        \sbr{
            \frac{1}{\frac{\abr{\initdataik\cup\subdatamik}}{\sigma^2}+\frac{1}{\ell^2}}}
        +\sum_{k\in\estimtypec}
        \EEV{\nnik}
        \sbr{
            \int_{-\infty}^\infty F_{1,k}\rbr{e}
            F_2(e)de}
        \\
        &=\sum_{k\in\estimtypea}
        \EEV{\nnik}
        \sbr{
            \frac{1}{\frac{\abr{\initdataik}}{\sigma^2}+\frac{1}{\ell^2}}}
        +\sum_{k\in\estimtypeb}
        \EEV{\nnik}
        \sbr{
            \frac{1}{\frac{\abr{\initdataik\cup\subdatamik}}{\sigma^2}+\frac{1}{\ell^2}}}
        \\
        &\quad
        +\sum_{k\in\estimtypec}
        \EEV{\nnik}
        \sbr{
            \EEV{e\sim\Ncal(0,1)}
            \sbr{
                \frac{1}{\frac{|\datacorrik|}{\sigma^2+\alphaik^2\widetilde{\sigma}^2e^2}
                +\frac{|\initdataik|+|\dataik|}{\sigma^2}
                +\frac{1}{\ell^2}}
            }}
    \end{align*}
    Because all the terms inside the expectations are bounded and $\lim_{\ell\rightarrow\infty}\widetilde{\sigma}^2_{\ell,k}=\frac{\sigma^2}{|\initdataik|}+\frac{\sigma^2}{|\dataik|}$, we can use dominated convergence to show that
    \begin{align*}
        R_\infty:=\lim_{\ell\rightarrow\infty}R_\ell 
        &=\sum_{k\in\estimtypea}
        \EEV{\nnik}
        \sbr{
            \frac{\sigma^2}{|\initdataik|}}
        +\sum_{k\in\estimtypeb}
        \EEV{\nnik}
        \sbr{
            \frac{\sigma^2}{|\initdataik\cup\subdatamik|}}
        \\
        &\quad+
        \sum_{k\in\estimtypec}
        \EEV{\nnik}
        \sbr{
            \EEV{e\sim\Ncal(0,1)}
            \sbr{
                \frac{1}{\frac{\abr{\datacorrik}}{\sigma^2+\alphaik^2
                \rbr{
                    \frac{\sigma^2}{\abr{\initdataik}}+\frac{\sigma^2}{\abr{\dataik}}}e^2}
                +\frac{\abr{\initdataik}+\abr{\dataik}}{\sigma^2}
                }
            }}
        \numberthis \label{eq:IC_part1-Rinf}
    \end{align*}
    \paragraph{Step 2. (Maximum risk of $\stildei$)} Define $\stildei$ to be the randomized strategy given by $\rbr{\nni,\subfuncopti,\estimopti}$. Recall that $\nni$ is the randomized distribution over how much data is collected. Thus, $\stildei$ is the strategy always submits data truthfully and uses the suggested estimator while following the same data collection strategy as $s_i$. 

    We will now compute the maximum risk of $\stildei$ and show that it is equal to the RHS of~\eqref{eq:IC_part1-Rinf}.
    First note that we can write,
    \begin{equation*}
    	\begin{pmatrix}
    		\muhat(\initdataik)-\mu_k\\
    		\muhat(\dataik)-\mu_k
    	\end{pmatrix} 
    	\sim
    	\Ncal\rbr{
    		\begin{pmatrix}
    			0\\0    
    		\end{pmatrix},
    		\begin{pmatrix}
    			\frac{\sigma^2}{\abr{\initdataik}} & 0\\
    			0 & \frac{\sigma^2}{\abr{\dataik}}
    		\end{pmatrix}
    	}
    \end{equation*}
    By a linear transformation of this Gaussian vector, we obtain
    \begin{align*}
    	&\begin{pmatrix}
    		\frac{\abr{\initdataik}}{\sigma^2}\rbr{\muhat(\initdataik)-\mu_k}+
    		\frac{\abr{\dataik}}{\sigma^2}\rbr{
    			\muhat(\dataik)-\mu_k
    		}\\
    		\muhat(\initdataik)-\muhat(\dataik)
    	\end{pmatrix}
    	=
    	\begin{pmatrix}
    		\frac{\abr{\initdataik}}{\sigma^2} & \frac{\abr{\dataik}}{\sigma^2}\\
    		1 & -1
    	\end{pmatrix}
    	\begin{pmatrix}
    		\muhat(\initdataik)-\mu_k\\
    		\muhat(\dataik)-\mu_k
    	\end{pmatrix} \\
    	\sim & 
    	\Ncal\rbr{
    		\begin{pmatrix}
    			0\\0    
    		\end{pmatrix},
    		\begin{pmatrix}
    			\frac{\abr{\initdataik}+\abr{\dataik}}{\sigma^2} & 0\\
                    0 & \frac{\sigma^2}{\abr{\initdataik}} + \frac{\sigma^2}{\abr{\dataik}}
    		\end{pmatrix}
    	}
    \end{align*}
    which means 
    $\frac{\abr{\initdataik}}{\sigma^2}\rbr{\muhat(\initdataik)-\mu_k}+
    \frac{\abr{\dataik}}{\sigma^2}\rbr{\muhat(\dataik)-\mu_k}$ 
    and
    $\frac{\eta_{i,k}}{\alphaik}=\muhat(\initdataik)-\muhat(\dataik)$
    are independent Gaussian random variables. Therefore, the maximum risk of $\stildei$ is:
    \begingroup
    \allowdisplaybreaks
    \begin{align*}
        &~~~~~\sup_{\mu\in\RR^d}\EEV{\rbr{\stildei,\soptmi}}
        \sbr{
            \EEV{\mu,\multiarmmech}
            \sbr{
                \lVert \estimi\rbr{\initdatai,\subdatai,\alloci}-\mu \rVert^2_2
            }
        }
        \\
        &=\sup_{\mu\in\RR^d}\EEV{\nni}
        \Bigg[
            \EEV{\mu,\multiarmmech}
            \Bigg[
                \sum_{k\in\estimtypea}(\estimik\rbr{\initdatai,\subdatai,\alloci}-\mu_k)^2
                +\sum_{k\in\estimtypeb}(\estimik\rbr{\initdatai,\subdatai,\alloci}-\mu_k)^2
        +\sum_{k\in\estimtypec}(\estimik\rbr{\initdatai,\subdatai,\alloci}-\mu_k)^2
                \Bigg]
           \Bigg] 
        \\
        &=\sup_{\mu\in\RR^d}\EEV{\nni}
        \Bigg[
            \EEV{\mu,\multiarmmech}
            \Bigg[
                \sum_{k\in\estimtypea}\rbr{\muhat(\initdataik)-\mu_k}^2
                +\sum_{k\in\estimtypeb}\rbr{\muhat(\initdataik\cup\subdatamik)-\mu_k}^2
        +\sum_{k\in\estimtypec}(\estimik\rbr{\initdatai,\subdatai,\alloci}-\mu_k)^2
                \Bigg]
           \Bigg] 
        \\
        &=\sup_{\mu\in\RR^d}\EEV{\nni}
        \Bigg[
            \EEV{\mu,\multiarmmech}
            \Bigg[
                \sum_{k\in\estimtypea}\frac{\sigma^2}{\abr{\initdataik}}
                +\sum_{k\in\estimtypeb}\frac{\sigma^2}{\abr{\initdataik\cup\subdatamik}}
                +\sum_{k\in\estimtypec}(\estimik\rbr{\initdatai,\subdatai,\alloci}-\mu_k)^2
                \Bigg]
           \Bigg] 
        \\
        &=\sum_{k\in\estimtypea} \EEV{\nnik}
        \sbr{\frac{\sigma^2}{\abr{\initdataik}}}
        +\sum_{k\in\estimtypeb} \EEV{\nnik}
        \sbr{\frac{\sigma^2}{\abr{\initdataik\cup\subdatamik}}}
        +\sup_{\mu\in\RR^d}\EEV{\nni}
            \EEV{\mu,\multiarmmech}
            \sbr{
                \sum_{k\in\estimtypec}(\estimik\rbr{\initdatai,\subdatai,\alloci}-\mu_k)^2 }  
        \numberthis \label{eq:IC_part2-last-sum-term}
    \end{align*}
    Now notice that we can rewrite the last summation, given in~\eqref{eq:IC_part2-last-sum-term}, as:
    \begin{align*}
        &~~~~~\sup_{\mu\in\RR^d}\EEV{\nni}
        \EEV{\mu,\multiarmmech}
        \sbr{
            \sum_{k\in\estimtypec}(\estimik\rbr{\initdatai,\subdatai,\alloci}-\mu_k)^2 }
        \\
        &=\sup_{\mu\in\RR^d}
        \sum_{k\in\estimtypec}
        \EEV{\nni}
        \sbr{
            \EEV{\mu,\multiarmmech}
            \sbr{
                (\estimik\rbr{\initdatai,\subdatai,\alloci}-\mu_k)^2 } }
        \\
        &=\sup_{\mu\in\RR^d}
        \sum_{k\in\estimtypec}
        \EEV{\nni}
        \sbr{
            \EEV{\eta_{i,k}}
            \sbr{
                \EE
                \sbr{
                    \rbr{
                        \frac{
                            \frac{\muhat(\datacorrik)\abr{\datacorrik}}{\sigma^2+\etaiksq}
                            +\frac{\muhat(\dataik)\abr{\dataik}}{\sigma^2}
                            +\frac{\muhat(\initdataik)\abr{\initdataik}}{\sigma^2}
                        }{
                            \frac{\abr{\datacorrik}}{\sigma^2+\etaiksq}
                            +\frac{\abr{\dataik\cup\initdataik}}{\sigma^2}
                        }
                    -\mu_k}^2
                    \Bigg |~\eta_{i,k}
                }
            }
        }
        \\
        &=\sup_{\mu\in\RR^d}
        \sum_{k\in\estimtypec}
        \EEV{\nnik}
        \sbr{
            \EEV{\eta_{i,k}}
            \sbr{
                \EE
                \sbr{
                    \rbr{
                        \frac{
                            \frac{\rbr{\muhat(\datacorrik)-\mu_k}\abr{\datacorrik}}{\sigma^2+\etaiksq}
                            +\frac{\rbr{\muhat(\dataik)-\mu_k}\abr{\dataik}}{\sigma^2}
                            +\frac{\rbr{\muhat(\initdataik)-\mu_k}\abr{\initdataik}}{\sigma^2}
                        }{
                            \frac{\abr{\datacorrik}}{\sigma^2+\etaiksq}
                            +\frac{\abr{\dataik\cup\initdataik}}{\sigma^2}
                        }
                    }^2
                    \Bigg |~\eta_{i,k}
                }
            }
        }
        \\
        &=\sup_{\mu\in\RR^d}
        \sum_{k\in\estimtypec}
        \EEV{\nnik}
        \sbr{
            \EEV{\eta_{i,k}}
            \sbr{
                \frac{
                    \EE
                    \sbr{
                        \rbr{
                            \frac{\rbr{\muhat(\datacorrik)-\mu_k}\abr{\datacorrik}}{\sigma^2+\etaiksq}
                            +\frac{\rbr{\muhat(\dataik)-\mu_k}\abr{\dataik}}{\sigma^2}
                            +\frac{\rbr{\muhat(\initdataik)-\mu_k}\abr{\initdataik}}{\sigma^2}
                        }^2
                        \Bigg |~\eta_{i,k}
                    }
                }{
                    \rbr{
                        \frac{\abr{\datacorrik}}{\sigma^2+\etaiksq}
                        +\frac{\abr{\dataik\cup\initdataik}}{\sigma^2}
                    }^2
                }
            }
        }
        \\
        &=\sup_{\mu\in\RR^d}
        \sum_{k\in\estimtypec}
        \EEV{\nnik}
        \sbr{
            \EEV{\eta_{i,k}}
            \sbr{
                \frac{1}{
                    \rbr{
                        \frac{\abr{\datacorrik}}{\sigma^2+\etaiksq}
                        +\frac{\abr{\dataik\cup\initdataik}}{\sigma^2}
                    }^2
                }
                \rbr{
                    \frac{\abr{\datacorrik}\rbr{\sigma^2+\etaiksq}}{\rbr{\sigma^2+\etaiksq}^2}
                    +\frac{\abr{\initdataik}+\abr{\dataik}}{\sigma^2}
                }
            }
        }
        \\
        &=
        \sum_{k\in\estimtypec}
        \EEV{\nnik}
        \sbr{
            \EEV{\eta_{i,k}}
            \sbr{
                \frac{1}{
                    \frac{\abr{\datacorrik}}{\sigma^2+\etaiksq}
                    +\frac{\abr{\dataik\cup\initdataik}}{\sigma^2}
                }
            }
        }
        \\
        &=
        \sum_{k\in\estimtypec}
        \EEV{\nnik}
        \sbr{
            \EE
            \sbr{
                \frac{1}{
                    \frac{\abr{\datacorrik}}{\sigma^2+\alphaik^2\rbr{\muhat(\initdataik)-\muhat(\dataik)}^2}
                    +\frac{\abr{\dataik\cup\initdataik}}{\sigma^2}
                }
            }
        }
        \\
        &=\sum_{k\in\estimtypec}
        \EEV{\nnik}
        \sbr{
            \EEV{e\sim\Ncal(0,1)}
            \sbr{
                \frac{1}{
                    \frac{\abr{\datacorrik}}{\sigma^2+\alphaik^2
                    \rbr{
                        \frac{\sigma^2}{\abr{\dataik}}+\frac{\sigma^2}{\abr{\initdataik}}
                    }}
                    +\frac{\abr{\dataik\cup\initdataik}}{\sigma^2}
                }
            }
        }
    \end{align*} 
    \endgroup
    \begingroup
    \allowdisplaybreaks
    Here the last line follows from the fact that $\muhat(\initdataik)-\muhat(\dataik)\sim\Ncal\rbr{0,\frac{\sigma^2}{\abr{\initdataik}}+\frac{\sigma^2}{\abr{\dataik}}}$.
    \endgroup
    Substituting this expression into line~\eqref{eq:IC_part2-last-sum-term} we get
    \begin{align*}
        \sup_{\mu\in\RR^d}\EEV{\rbr{\stildei,\soptmi}}
        \sbr{
            \EEV{\mu,\multiarmmech}
            \sbr{
                \lVert \estimi\rbr{\initdatai,\subdatai,\alloci}-\mu \rVert^2_2
            }
        }
        &=\sum_{k\in\estimtypea} \EEV{\nnik}
        \sbr{\frac{\sigma^2}{\abr{\initdataik}}}
        +\sum_{k\in\estimtypeb} \EEV{\nnik}
        \sbr{\frac{\sigma^2}{\abr{\initdataik\cup\subdatamik}}}
        \numberthis \label{eq:IC_part2-Rinf-AB}
        \\
        &\quad
        +\sum_{k\in\estimtypec}
        \EEV{\nnik}
        \sbr{
            \EEV{e\sim\Ncal(0,1)}
            \sbr{
                \frac{1}{
                    \frac{\abr{\datacorrik}}{\sigma^2+\alphaik^2
                    \rbr{
                        \frac{\sigma^2}{\abr{\dataik}}+\frac{\sigma^2}{\abr{\initdataik}}
                    }}
                    +\frac{\abr{\dataik\cup\initdataik}}{\sigma^2}
                }
            }
        }
        \numberthis \label{eq:IC_part2-Rinf-C}
        \\
        &=R_\infty
    \end{align*}
    Here, we have observed that the final expression in the above equation is exactly the same as the Bayes risk in the limit in~\eqref{eq:IC_part1-Rinf} from Step 1.

    \paragraph{Step 3. (Minimax optimality of $\stildei$)} As the maximum is larger than the average, we can write, for any prior $\Lambda_\ell$ and any $\si'\in\Delta(\actionspace)$
    \begin{equation*}
        \sup_{\mu\in\RR^d}\EEV{\rbr{\si',\soptmi}}
        \sbr{
            \EEV{\mu,\multiarmmech}
            \sbr{
                \lVert \estimi\rbr{\initdatai,\subdatai,\alloci}-\mu \rVert^2_2
            }
        }
        \geq 
        \EEV{\mu\sim\Lambda_\ell}
        \sbr{
            \EEV{\rbr{\si',\soptmi}}
            \sbr{
                \EEV{\mu,\multiarmmech}
                \sbr{
                    \lVert \estimi\rbr{\initdatai,\subdatai,\alloci}-\mu \rVert^2_2
                }
            }
        }
        \geq R_\ell 
    \end{equation*}
    As this is true for all $\ell$, taking the limit, we have that $\forall \si'\in\Delta(\actionspace)'$,
    \begin{equation*}
        \sup_{\mu\in\RR^d}\EEV{\rbr{\si',\soptmi}}
        \sbr{
            \EEV{\mu,\multiarmmech}
            \sbr{
                \lVert \estimi\rbr{\initdatai,\subdatai,\alloci}-\mu \rVert^2_2
            }
        }
        \geq R_\infty
        =\sup_{\mu\in\RR^d}\EEV{\rbr{\stildei,\soptmi}}
        \sbr{
            \EEV{\mu,\multiarmmech}
            \sbr{
                \lVert \estimi\rbr{\initdatai,\subdatai,\alloci}-\mu \rVert^2_2
            }
        }
    \end{equation*}
    that is, $\stildei=\rbr{\nni,\subfuncopti,\estimopti}$ has a smaller maximum risk than any other $\si'\in\Delta(\actionspace)$.

\subsection{Proof of Lemma \ref{lem:opt-submis}}
\label{subsubsec:proof-of-lem:opt-submis}
    Since we are assuming that $\subfunci=\subfuncopti,\estimi=\estimopti$, we have that 
    \begingroup
    \allowdisplaybreaks
    \begin{align*}
        \penali&:=
        \penali\rbr{\multiarmmech,\rbr{\rbr{\nni,\subfuncopti,\estimopti},\soptmi}}
        \\
        &=\sup_{\mu\in\RR^d}\EEV{\rbr{\si,\soptmi}}
        \sbr{
            \EEV{\mu,\multiarmmech}
            \sbr{
                \lVert \estimi\rbr{\initdatai,\subdatai,\alloci}-\mu \rVert^2_2
            }
            +\sum_{k=1}^d \costik \nik
        }
        \\
        &=\EEV{\rbr{\si,\soptmi}}
        \sbr{
            \sum_{k=1}^d \costik \nik
        }
        +\sup_{\mu\in\RR^d}\EEV{\rbr{\si,\soptmi}}
        \sbr{
            \EEV{\mu,\multiarmmech}
            \sbr{
                \lVert \estimi\rbr{\initdatai,\subdatai,\alloci}-\mu \rVert^2_2
            }
        }
    \end{align*}
    \endgroup
    Since $\subfunci=\subfuncopti,\estimi=\estimopti$ we can use the expression we calculated for the maximum risk in lines~\eqref{eq:IC_part2-Rinf-AB} and~\eqref{eq:IC_part2-Rinf-C} to rewrite the penalty as
    \begingroup
    \allowdisplaybreaks
    \begin{align*}
        p_i&=
        \EEV{\nnik}
        \sbr{
            \sum_{k=1}^d \costik \nik
        }
        +
        \sum_{k\in\estimtypea} \EEV{\nnik}
        \sbr{\frac{\sigma^2}{\abr{\initdataik}}}
        +
        \sum_{k\in\estimtypeb} \EEV{\nnik}
        \sbr{\frac{\sigma^2}{\abr{\initdataik\cup\subdatamik}}}
        \\
        &\qquad+
        \sum_{k\in\estimtypec}
        \EEV{\nnik}
        \sbr{
            \EEV{e\sim\Ncal(0,1)}
            \sbr{
                \frac{1}{
                    \frac{\abr{\datacorrik}}{\sigma^2+\alphaik^2
                    \rbr{
                        \frac{\sigma^2}{\abr{\dataik}}+\frac{\sigma^2}{\abr{\initdataik}}
                    }}
                    +\frac{\abr{\dataik\cup\initdataik}}{\sigma^2}
                }
            }
        }
        \\
        &=
        \sum_{k\in\estimtypea} \EEV{\nnik}
        \sbr{
            \frac{\sigma^2}{\abr{\initdataik}}
            +\costik \nik
        }
        +
        \sum_{k\in\estimtypeb} \EEV{\nnik}
        \sbr{
            \frac{\sigma^2}{\abr{\initdataik\cup\subdatamik}}
            +\costik \nik
        }
        \\
        &\qquad+
        \sum_{k\in\estimtypec}
        \EEV{\nnik}
        \sbr{
            \EEV{e\sim\Ncal(0,1)}
            \sbr{
                \frac{1}{
                    \frac{\abr{\datacorrik}}{\sigma^2+\alphaik^2
                    \rbr{
                        \frac{\sigma^2}{\abr{\dataik}}+\frac{\sigma^2}{\abr{\initdataik}}
                    }}
                    +\frac{\abr{\dataik\cup\initdataik}}{\sigma^2}
                }
            }
            +\costik \nik
        }
        \\
        &\geq 
        \sum_{k\in\estimtypea} \min_{\nnik}
        \rbr{
            \frac{\sigma^2}{\abr{\initdataik}}
            +\costik \nik
        }
        +
        \sum_{k\in\estimtypeb} \min_{\nnik}
        \rbr{
            \frac{\sigma^2}{\abr{\initdataik\cup\subdatamik}}
            +\costik \nik
        }
        \\
        &\qquad+
        \sum_{k\in\estimtypec}
        \min_{\nnik}
        \rbr{
            \EEV{e\sim\Ncal(0,1)}
            \sbr{
                \frac{1}{
                    \frac{\abr{\datacorrik}}{\sigma^2+\alphaik^2
                    \rbr{
                        \frac{\sigma^2}{\abr{\dataik}}+\frac{\sigma^2}{\abr{\initdataik}}
                    }}
                    +\frac{\abr{\dataik\cup\initdataik}}{\sigma^2}
                }
            }
            +\costik \nik
        }
    \end{align*}
    \endgroup
    To minimize this entire quantity, we minimize each of the terms, making use of the fact that these terms are all separable and each only depend on one of $\nniikk{i}{1},\ldots,\nniikk{i}{\datadim}$. We do this by cases, depending on whether $k$ is in $\estimtypea$, $\estimtypeb$, or $\estimtypec$.

    If $k\in \estimtypea$, then we know that by definition
    \begin{equation*}
        \argmin_{\nnik}\frac{\sigma^2}{\abr{\initdataik}}+\ccik\nnik
        =\argmin_{\nnik}=\frac{\sigma^2}{\nnik}+\ccik\nnik
        =\nnikir
    \end{equation*}
    Furthermore, since $k\in \estimtypea$, $\noptik=\nnikir$ by the definition given in~\eqref{eq:sopt}.
    
    If $k\in \estimtypeb$, then $\noptik=0$. We know that $\subdatamik\geq\totalk$ by the definition of Algorithm~\ref{alg:compute-ne}. Since $i\not\in\donatingagentsk$, Lemma \ref{lem:tech_tprimek-gt-2nir} tells us that $\totalk>2\nnikir>\nnikir$ and so $\subdatamik>\nnikir$. Therefore, the cost for agent $i$ to collect more data is not offset by the marginal decrease in estimation error due to this extra data. This means that
    \begin{align*}
        \argmin_{\nnik}\frac{\sigma^2}{\abr{\initdataik\cup\subdatamik}}+\ccik\nnik
        =\argmin_{\nnik}=\frac{\sigma^2}{\nnik+\abr{\subdatamik}}+\ccik\nnik
        =0
    \end{align*}
    Furthermore, since $k\in \estimtypeb$, $\noptik=0$ by the definition given in~\eqref{eq:sopt}.

    If $k\in \estimtypec$, then more work is required to show that 
    \begin{equation*}
        \argmin_{\nnik}
        \rbr{
            \EEV{e\sim\Ncal(0,1)}
            \sbr{
                \frac{1}{
                    \frac{\abr{\datacorrik}}{\sigma^2+\alphaik^2
                    \rbr{
                        \frac{\sigma^2}{\abr{\dataik}}+\frac{\sigma^2}{\abr{\initdataik}}
                    }}
                    +\frac{\abr{\dataik\cup\initdataik}}{\sigma^2}
                }
            }
            +\costik \nik
        }
        =\noptik
    \end{equation*}
    
    Let $\differencettnik=\totalk-2\noptik$. Now notice that
    \begingroup
    \allowdisplaybreaks
    \begin{align*}
        &~~~~~\min_{\nnik}\rbr{\EEV{x\sim\mathcal{N}(0,1)}\left[\frac{1}{\frac{|\datacorrik|}{\sigma^2+\alphaik^2\left(\frac{\sigma^2}{|\initdataik|}+\frac{\sigma^2}{|\dataik|}\right)x^2}+\frac{|\initdataik|+|\dataik|}{\sigma^2}}\right] + \ccik \nnik}
        \\
        &=\min_{\nnik}\rbr{\EEV{x\sim\mathcal{N}(0,1)}\left[\underbrace{\frac{1}{\frac{\differencettnik}{\sigma^2+\alphaik^2\left(\frac{\sigma^2}{\nnik}+\frac{\sigma^2}{\noptik}\right)x^2}+\frac{\nnik+\noptik}{\sigma^2}}}_{=: l(\nnik,x;\alphaik)}\right] + \ccik \nnik}
        \numberthis\label{eq:min-decomposed-penalty}
    \end{align*}
    \endgroup
    We will now show that $l$ is convex in $\nnik$. 
    \begingroup
    \allowdisplaybreaks
    \begin{align*}
        \frac{\partial}{\partial\nnik}l(\nnik,x;\alphaik)
        -\sigma^2\frac{1+\frac{\differencettnik}{\left(1+\alphaik^2\left(\frac{1}{\nnik}+\frac{1}{\noptik}\right)x^2\right)^2}\frac{\alphaik^2 x^2}{\nnik^2}}{\left(\frac{\differencettnik}{1+\alphaik^2\left(\frac{1}{\nnik}+\frac{1}{\noptik}\right)x^2}+\nnik+\noptik\right)^2} 
        -\sigma^2\frac{1+\frac{\differencettnik\alphaik^2 x^2}{\left(\nnik+\alphaik^2\left(1+\frac{\nnik}{\noptik}\right)x^2\right)^2}}{\left(\frac{\differencettnik}{1+\alphaik^2\left(\frac{1}{\nnik}+\frac{1}{\noptik}\right)x^2}+\nnik+\noptik\right)^2} 
    \end{align*}
    We have that $\frac{\partial}{\partial \nnik}l\rbr{\nnik,x;\alphaik}$ is an increasing function and thus $l\rbr{\nnik,x;\alphaik}$ is convex. Because taking expectation preserves convexity (see Lemma \ref{lem:convex-exp}), the penalty term as a whole is convex. By Lemma~\ref{lem:tech_penalty_deriv} we have that
    \begin{align*}
        \frac{\partial p_i}{\partial \nnik}(\noptik)
        &=-\frac{\sigma^2\differencettnik}{64\alphaik^3\sqrt{\totalk}\noptik}\Bigg(\frac{4\alphaik}{\sqrt{\totalk}}\left(\frac{4\alphaik^2\totalk}{\differencettnik\noptik}-1-\ccik\frac{16\alphaik^2\totalk\noptik}{\sigma^2\differencettnik}\right)
        \\
        &\hspace{3.3cm}-\exp\left(\frac{\totalk}{8\alphaik^2}\right)\left(\frac{4\alphaik^2}{\totalk}\left(\frac{\totalk}{\noptik}+1\right)-1\right)\sqrt{2\pi}\text{Erfc}\left(\sqrt{\frac{\totalk}{8\alphaik^2}}\right)\Bigg) .
    \end{align*}
    \endgroup
    Now notice that the right hand term has a factor of $\Gik(\alphaik)=0$. This means that \eqref{eq:min-decomposed-penalty} is minimized when $\nnik=\noptik$. Thus, by combining our analyses for when $k\in\estimtypea, \estimtypeb$ or $\estimtypec$, we conclude that $p_i$ is minimized when $\nnik=\noptik$.
\section{Known results}
\label{app:otherresults}

In this section, we will state known results that are either used directly or with minimal modifications. 

\begin{lemma}
\label{lem:convex-exp}
    Let $Y$	be a random variable and $f:\RR^2\rightarrow\RR$ be a function for which $f(x,Y)$ is convex in $x$ and $\forall x\in\RR$, $\mathbb{E}\sbr{\abr{f(x,Y)}} < \infty$.
    Then $\EE\sbr{f(x,Y)}$ is also convex in $x$.
\end{lemma}	
\begin{proof}
    For all $x_1,x_2\in\RR$ and $t\in[0,1]$ we have
    \begin{align*}
        f\rbr{
            x_1t+x_2(1-t), Y
        }
        &\leq f(x_1, Y)t+f(x_2, Y)(1-t)
        \\
        \Rightarrow
        \EE\sbr{
            f\rbr{
                x_1t+x_2(1-t), Y
            }
        }
        &\leq 
        \EE\sbr{f(x_1, Y)}t+\EE\sbr{f(x_2, Y)}(1-t).
    \end{align*}
\end{proof}

\begin{lemma}[\citet{chen2023mechanism} (A corollary of Hardy-Littlewood)]
    \label{lem:re hardy-L ineq}	
    Let $f$, $g$ be nonnegative even functions such that,
    \begin{itemize}
    \item $f$ is monotonically increasing on $[0,\infty)$.
    \item $g$ is monotonically decreasing on $[0,\infty)$,
    and has a finite integral $\int_\RR g(x) dx < \infty$.
    \item $\forall a$, $\int_\RR f(x-a)g(x) dx < \infty$.
    \end{itemize}
    Then for all $a$,
    \begin{equation*}
    \int_\RR f(x)g(x) dx
    \le 
    \int_\RR f(x-a)g(x) dx
    \end{equation*}
\end{lemma}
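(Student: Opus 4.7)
The plan is to prove this by a symmetrization argument centered at the midpoint $a/2$ between the two ``centers'' $0$ and $a$. First, reduce to $a \geq 0$: by evenness of $f$ and $g$, the substitution $x \mapsto -x$ gives $\int_\RR f(x-a)g(x)dx = \int_\RR f(x+a)g(x)dx$, so we may replace $a$ by $|a|$ without loss of generality. The goal then becomes showing $\int_\RR h(x) g(x) dx \geq 0$, where $h(x) := f(x-a) - f(x)$; subtracting the two integrals is legitimate since both are finite by hypothesis.

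The key structural observation is that $h$ is antisymmetric about $a/2$ and pointwise controlled there. Using evenness of $f$, a direct computation gives $h(a/2 + y) = f(a/2 - y) - f(a/2 + y) = -h(a/2 - y)$. For $y \geq 0$, the elementary inequality $|a/2 - y| \leq a/2 + y$ combined with monotonicity of $f$ on $[0,\infty)$ (and evenness of $f$) yields $h(a/2 + y) \leq 0$. The same absolute-value inequality combined with monotonicity and evenness of $g$ yields $g(a/2 + y) \leq g(a/2 - y)$ for $y \geq 0$.

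The proof then folds the integral about $a/2$: substitute $y = x - a/2$, split at $y=0$, and flip the negative half via $y \mapsto -y$. The antisymmetry of $h$ collapses the two pieces into
\[
\int_\RR h(x) g(x) dx \;=\; \int_0^\infty h(a/2 + y) \, \bigl[\, g(a/2 + y) - g(a/2 - y) \,\bigr]\, dy .
\]
Both factors in the integrand are nonpositive by the two observations above, so the integrand is nonnegative and the integral is at least $0$, proving the lemma.

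The hard part is not a conceptual obstacle but rather the bookkeeping needed to justify the splitting and reversal of the integral, which requires each of the four pieces (over $y \geq 0$ and $y \leq 0$, for each of the two terms defining $h$) to be absolutely integrable. This is handled by applying the hypothesis $\int_\RR f(x-a')g(x)dx < \infty$ with $a' \in \{0, a\}$, together with $f, g \geq 0$ to dominate each piece by one of these finite integrals; after that, Fubini/linearity legitimizes the algebraic manipulation.
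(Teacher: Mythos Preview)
Your proof is correct. The paper does not actually prove this lemma; it simply cites it from \citet{chen2023mechanism} and labels it ``a corollary of Hardy--Littlewood,'' so there is no detailed argument in the paper to compare against line by line.

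That said, there is a meaningful methodological difference. The paper's framing suggests deducing the statement from the Hardy--Littlewood rearrangement inequality: since $g$ is nonnegative, even, and decreasing on $[0,\infty)$, it equals its own symmetric decreasing rearrangement, and the translate $f(\cdot - a)$ has the same rearrangement as $f$; the inequality then follows (after truncating $f$ if unbounded and passing to the limit) from the standard fact that $\int u(x-a)v(x)\,dx \le \int u(x)v(x)\,dx$ for symmetric decreasing $u,v$. Your argument instead gives a fully self-contained elementary proof via folding about the midpoint $a/2$, exploiting the antisymmetry of $h(x)=f(x-a)-f(x)$ there together with the sign of $g(a/2+y)-g(a/2-y)$. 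This buys you a proof that requires no outside machinery and no truncation/limit step for unbounded $f$, at the cost of the integrability bookkeeping you flagged (which you handled correctly: each of the four half-line pieces is the restriction of a nonnegative integrand with finite total integral over $\RR$, hence finite). Either route is fine; yours is arguably cleaner for this specific statement.
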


\begin{lemma}[\citet{chen2023mechanism}]
\label{lem:expvaluebound}
For all $t\ge0$ and some $L>0$, define
\begin{align*}
    I(t) :=& \int_{-\infty}^\infty \frac{1}{L+x^2} \frac{1}{\sqrt{2\pi}}\exp\rbr{-t x^2}dx 
    \\
    J(t) :=& \int_{-\infty}^\infty \frac{1}{(L+x^2)^2} \frac{1}{\sqrt{2\pi}}\exp\rbr{-t x^2}dx .
\end{align*}
Then we have
\begin{align*}
    I(t) &= \exp(Lt)\textup{Erfc}(\sqrt{Lt})\sqrt{\frac{\pi}{2L}}\\
    J(t) &= 
    \sqrt{\frac{\pi}{2L}}
    \rbr{\frac{1}{2L}- t}\exp(Lt)
    \textup{Erfc}(\sqrt{Lt})
    +
    \frac{\sqrt{t}}{\sqrt{2}L} .
\end{align*}    
\end{lemma}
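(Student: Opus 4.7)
The plan is to compute $I(t)$ first from scratch, and then obtain $J(t)$ by differentiating $I(t)$ with respect to the parameter $L$. Throughout, convergence of all integrals for $t \ge 0$ and $L > 0$ is obvious from the $e^{-tx^2}/(L+x^2)$ and $e^{-tx^2}/(L+x^2)^2$ integrand decay, so Fubini and differentiation under the integral are routinely justified.

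For $I(t)$, the key step is to linearize the $1/(L+x^2)$ factor using the elementary Laplace identity $\frac{1}{L+x^2} = \int_0^\infty e^{-(L+x^2)s}\,ds$. Substituting and swapping the order of integration (Fubini, positive integrand), I would get
\begin{equation*}
I(t) \;=\; \int_0^\infty e^{-Ls} \int_{-\infty}^\infty \frac{1}{\sqrt{2\pi}} e^{-(t+s)x^2}\,dx\,ds \;=\; \int_0^\infty \frac{e^{-Ls}}{\sqrt{2(t+s)}}\,ds.
\end{equation*}
Then I would substitute $u = t+s$ to pull out a factor of $e^{Lt}$, and then $v = \sqrt{Lu}$ to convert the remaining integral into $\int_{\sqrt{Lt}}^\infty e^{-v^2}\,dv$. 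Recognizing this as $\tfrac{\sqrt{\pi}}{2}\mathrm{Erfc}(\sqrt{Lt})$ and collecting constants yields the claimed closed form.

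For $J(t)$, the observation is that $\frac{\partial}{\partial L}\frac{1}{L+x^2} = -\frac{1}{(L+x^2)^2}$, so after justifying the interchange of differentiation and integration (dominated convergence with the bound $\tfrac{1}{(L-\varepsilon+x^2)^2}\tfrac{1}{\sqrt{2\pi}}e^{-tx^2}$ on a neighborhood of $L$), one has $J(t) = -\partial I(t)/\partial L$. I would then differentiate the $I(t)$ formula directly, applying the product rule to the three factors $e^{Lt}$, $\mathrm{Erfc}(\sqrt{Lt})$, and $\sqrt{\pi/(2L)}$, using $\frac{d}{dL}\mathrm{Erfc}(\sqrt{Lt}) = -\frac{1}{\sqrt{\pi}}e^{-Lt}\sqrt{t/L}$. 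The cross term from the Erfc derivative cancels the $e^{Lt}$ to produce the free $\sqrt{t}/(\sqrt{2}L)$ term, while the other two derivatives combine into $\sqrt{\pi/(2L)}(t - 1/(2L))e^{Lt}\mathrm{Erfc}(\sqrt{Lt})$. Flipping the sign gives exactly the stated expression.

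The only non-routine step is the initial rewriting of $1/(L+x^2)$ as a Laplace transform and the change-of-variable chain that converts the resulting one-dimensional integral into the complementary error function — everything else is mechanical. I do not expect any real obstacle; the identities are classical, and since the paper attributes the lemma to \citet{chen2023mechanism}, it is likely they invoke it as a known Gaussian integral and skip most of these steps.
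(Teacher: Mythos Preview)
Your proposal is correct and complete; the Laplace-transform trick for $I(t)$ and the differentiation under the integral for $J(t)$ both go through exactly as you describe, and your constant-tracking checks out against the stated formulas. The paper itself does not prove this lemma at all---it is listed in the appendix of results quoted from prior work and simply cited to \citet{chen2023mechanism}---so your anticipation that ``they invoke it as a known Gaussian integral and skip most of these steps'' is exactly right.
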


\end{document}